\theoremstyle{plain}
\newtheorem{theorem}{Theorem}
\newtheorem{lemma}[theorem]{Lemma}
\theoremstyle{definition}
\theoremstyle{remark}
\newif\iffoot\footfalse
\let\originalfootnote\footnote
\renewcommand{\footnote}[1]{\foottrue\originalfootnote{#1}\footfalse}
\def\<#1>{\iffoot\mbox{\lstinline[basicstyle=\footnotesize\ttfamily]{#1}}\else\mbox{\lstinline!#1!}\fi}
\def\@#1@{\text{\lstinline|#1|}}
\newcommand{\limpl}{\Longrightarrow}
\newcommand{\liff}{\Longleftrightarrow}
\definecolor{left-color}{RGB}{241,163,64}
\definecolor{right-color}{RGB}{153,142,195}
\tikzstyle{left-item}=
\tikzstyle{right-item}=
\tikzstyle{generic-item}=
\tikzstyle{fake-item}=
\tikzstyle{array-slice}=
\newcommand{\sq}{\ensuremath{\mathsf{S}}}
\newcommand{\tq}{\ensuremath{\mathsf{T}}}
\newcommand{\xq}{\ensuremath{\mathsf{X}}}
\newcommand{\yq}{\ensuremath{\mathsf{Y}}}
\newcommand{\zq}{\ensuremath{\mathsf{Z}}}
\newcommand{\cat}[2]{{#1}\mathop{\circ}{#2}}
\newcommand{\rot}[2]{\rho^{#2}{#1}}
\newcommand{\rev}[1]{{#1}^{-1}}
\DeclareRobustCommand{\mpos}[4]{\pi^{#2}_{#1}{\ifthenelse{\not \equal{#4}{}}  {\left({#3, #4}\right)} {}}}
\DeclareRobustCommand{\ic}[2]{\tau\ifthenelse{\not \equal{#2}{}}  {\left({#1, #2}\right)} {}}
\DeclareMathOperator{\bwr}{wrap}
\DeclareRobustCommand{\wrap}[2]{\ifthenelse{\equal{#1}{} \and \equal{#2}{}}{\bwr}{{#1}\bwr{#2}}}
\DeclareRobustCommand{\len}[1]{\left|{#1}\right|}
\newcommand{\setn}[1]{\langle #1 \rangle}
\newcommand{\escj}{{ESC/Ja\-va2}\xspace}
\begin{document}

\title{Rotation of Sequences: Algorithms and Proofs}

\date{20 June 2014\\Revised 27 November 2014, 5 February 2015}

\author{Carlo A. Furia \\
ETH Zurich, Switzerland \\
\url{http://bugcounting.net}
}

\maketitle

\begin{abstract}
Sequence rotation consists of a circular shift of the sequence's elements by a given number of positions.
We present the four classic algorithms to rotate a sequence; the loop invariants underlying their correctness; detailed correctness proofs; and fully annotated versions for the verifiers Boogie, Dafny, and \escj.
The presentation illustrates in detail both how the algorithms work and what it takes to carry out mechanized proofs of their correctness.
\end{abstract}

\newpage
\tableofcontents
\newpage

\section{Introduction}

Rotating a sequence, typically represented by an array, is the problem of shifting all its elements in one direction by a fixed number of positions while cyclically wrapping over the sequence's bounds.
Rotations have various practical applications in programming; a classic example is in editors, to arrange the lines of text stored as sequences of characters.

There is an interesting choice of algorithms to rotate a sequence, exploring different trade offs between performance, code complexity, and resource usage.
The algorithms themselves have been part of computer science folklore knowledge for a long time~\cite[Sec.~2.3]{pearls-book}; they are an excellent subject to discuss algorithm design and analysis~\cite{Bornat-slides}.

Proving correctness of these algorithms is a challenging task too, even more so if we aim for (automated) mechanized proofs.
The only document formally discussing how to establish the correctness of the rotation algorithms is a technical report by Gries and Mills~\cite{swapping-TR}.
They sketch the loop invariants underlying the correctness arguments; the level of detail of their presentation is, however, clearly insufficient to perform complete mechanized proofs, which require to exhaustively deal with issues such as framing and intermediate properties and assertions.

This paper presents each rotation algorithm for rotation first informally, then with detailed loop invariants and lemmas, and finally with all the gory low-level details necessary to carry out automated proofs using the Boogie prover~\cite{BoogieManual}, the Dafny verifier~\cite{Dafny}, and the \escj verifier~\cite{Burdy:2005:OJT:1070908.1070911} for Java code annotated with JML.
This presentation can serve as a useful guide to carry out similar correctness proofs of the same algorithms using other automated tools, as well as a tutorial introduction to some idioms (lemma procedures, framing, ghost code, and so on) frequently used in automated verification of full functional correctness.

The code presented in the paper is available online (under directory \<rotation>):
\begin{center}
\url{https://bitbucket.org/caf/verified/}
\end{center}

\section{Modular arithmetic and sequences}
\label{sec:modular-arithmetic}

A precise definition of the notion of rotation relies on some mathematical concepts that we introduce in this section.

\paragraph{Modular arithmetic.}
Modular arithmetic makes extensive usage of the `$\bmod$' binary operation, which is normally defined~\cite[Sec.~3.4]{GKP94} in terms of integer division and floor as
\begin{equation}
x \bmod y \quad=\quad x - y \lfloor x / y \rfloor\,, \qquad \text{for } y \neq 0.
\label{eq:mod-def}
\end{equation}
This definition is not, however, always the best choice to specify and reason about programs.
First, programming languages may implement definitions of `$\bmod$' that differ in the sign of the result; for example, \lstinline[language=Java]|x % y| 
has the same sign as \<x> in Java but as \<y> in Python.
Second, \eqref{eq:mod-def} relies on two other operations, whereas it would be more convenient to have a direct definition only in terms of basic operators that are universally available.

These considerations justify the introduction of the binary operation `$\wrap{}{}$' with recursive definition
\begin{equation}
\wrap{x}{y} \quad=\quad
\begin{cases}
x  
&  0 \leq x < y\,,  \\
\wrap{(x - y)}{y}
&  0 < y \leq x\,.
\end{cases}
\label{eq:wrap-def}
\end{equation}
We only need to define `$\wrap{}{}$' for nonnegative arguments, although we could easily generalize \eqref{eq:wrap-def} to handle negative arguments too.

When it is defined, it is easy to see that $\wrap{x}{y}$ is the same as $x \bmod y$.
Specifically, we will make use of the property that
\begin{equation}
0 \leq (\wrap{x}{y}) < y\,,   \qquad \text{for } x \geq 0 \text{ and } y > 0\,,
\label{eq:wrap_bounds}
\end{equation}
which can be proved by induction on $x$ (base case: $x < y$).

\paragraph{Sequences.}
Sequences are finite ordered collections of elements, which we denote by sans-serif letters such as $\sq$.
The \emph{length} of a sequence $\sq$ is denoted by $\len{\sq}$.
An \emph{element} of $\sq$ is denoted by $\sq_k$ (also: $\sq[k]$), with $0 \leq k < \len{\sq}$ denoting the position (or index) of the element in the sequence starting from $0$:
\[
\sq \quad=\quad \sq_0\, \sq_1\, \cdots\, \sq_{\len{\sq} - 1}\,.
\]
For sequences $\sq$ and $\tq$ of equal length, $\sq = \tq$ denotes that the two sequences consist of the same elements in the same order.
The empty sequence $\epsilon$ is such that $\len{\epsilon} = 0$.

The \emph{concatenation} $\cat{\sq}{\tq}$ of sequences $\sq$ and $\tq$ is the sequence of length $\len{\sq} + \len{\tq}$ obtained by juxtaposing $\sq$ and $\tq$.
We define $\cat{\sq}{\tq}$ element-wisely as
\begin{equation}
(\cat{\sq}{\tq})_k \quad=\;
\begin{cases}
\sq_k  &  0 \leq k < \len{\sq}\,,  \\
\tq_{k - \len{\sq}}  &  \len{\sq} \leq k < \len{\sq} + \len{\tq}\,,
\end{cases}
\label{eq:cat-def}
\end{equation}
for  $0 \leq k < \len{\sq} + \len{\tq}$.
For example, the concatenation of $\text{A}\,\text{B}$ and $\text{C}\,\text{D}\,\text{E}\,\text{F}$ is $\text{A}\,\text{B}\,\text{C}\,\text{D}\,\text{E}\,\text{F}$.

The \emph{reverse} $\rev{\sq}$ of a sequence $\sq$ is the sequence defined by $\len{\rev{\sq}} = \len{\sq} = N$ and
\begin{equation}
\rev{\sq}_k \ =\ \sq_{N - 1 - k}\,,  \qquad \text{for } 0 \leq k < N.
\label{eq:rev-def}
\end{equation}
For example, the reverse of sequence $\text{A}\,\text{B}\,\text{C}\,\text{D}\,\text{E}\,\text{F}$ is sequence $\text{F}\,\text{E}\,\text{D}\,\text{C}\,\text{B}\,\text{A}$.

\paragraph{Permutations and cycles.}
Rotations---introduced in the next section---are a special kind of \emph{permutations}, that is bijections of a set onto itself; henceforth, $\{0, 1, \ldots, N - 1\}$ is the set in question, which we denote $\setn{N}$.

A \emph{cycle} $(a_0\,a_1\,\cdots\, a_m)$, for $a_0, a_1, \ldots, a_m$ distinct values in $\setn{N}$, is the permutation $\lambda: \setn{N} \to \setn{N}$ such that:
\[
\lambda(a_k) = 
\begin{cases}
a_{k + 1}   & 0 \leq k < m\,, \\
a_0        & k = m\,, \\
a_k        & \text{otherwise.}
\end{cases}
\]
In other words, a cycle $(a_0\,a_1\,\cdots\, a_m)$ is a permutation that sends $a_0$ into $a_1$, $a_1$ into $a_2$, and so on until $a_m$, which it sends back to $a_0$, while leaving all other elements in $\setn{N} \setminus \{a_0, \ldots, a_m\}$ unchanged.
Two cycles $(a_0\,\cdots\, a_m)$ and $(b_0\,\cdots\, b_n)$ are disjoint if the intersection of the sets $\{a_0, \ldots, a_m\}$ and $\{b_0, \ldots, b_n\}$ is empty.
A fundamental result of the theory of permutations~\cite[Th.~1.6]{AMA} is that every permutation is uniquely expressible as the composition of disjoint cycles (the order of composition does not matter).

\section{Rotation: the problem}  \label{sec:problem-def}

The rotation $\rot{\sq}{r}$ of sequence $\sq$ by $r$ (also called ``$r$-rotation'', ``$r$-circular shift'', or ``$r$-cyclic shift'') is the sequence obtained by shifting all elements in $\sq$ by $r$ positions while wrapping over the sequence's bounds.
We assume that positive values of $r$ denote shifts to the left; hence the definition 
\begin{equation}
\rot{\sq}{r}  \quad=\;  
\begin{cases}
\underbrace{\sq_{r}\, \sq_{r + 1}\, \cdots\, \sq_{\len{\sq} - 1}}_{\len{\sq} - r\,\text{ elements}}\,
\underbrace{\sq_0\, \sq_1\, \cdots\, \sq_{r - 1}}_{r \text{ elements}}
    &  0 \leq r < \len{\sq}\,, \\
\underbrace{\sq_{\len{\sq} + r}\, \sq_{\len{\sq} + r + 1}\, \cdots\, \sq_{\len{\sq} - 1}}_{-r \text{ elements}}\,
\underbrace{\sq_0\, \sq_1\, \cdots\, \sq_{\len{\sq} + r - 1}}_{\len{\sq} - (-r) \,\text{ elements}}
    &  -\len{\sq} < r \leq 0\,.
\end{cases}
\label{eq:rot-def}
\end{equation}
For simplicity, we ignore the case of rotations by more than $\len{\sq}$ in absolute value (although it is clear they correspond to applications of definition \eqref{eq:rot-def}).
Notice that $\rot{}{0}$ is the identify mapping.
Figure~\ref{fig:rotation-example} demonstrates the definition on the running example: rotating the $6$-element sequence $\text{A}\,\text{B}\,\text{C}\,\text{D}\,\text{E}\,\text{F}$ by $2$ yields sequence $\text{C}\,\text{D}\,\text{E}\,\text{F}\,\text{A}\,\text{B}$.

\begin{figure}[!ht]
\centering
\begin{tikzpicture}
[node distance=1pt]
\lstset{basicstyle=\footnotesize}

  \node (A) [left-item] {A};
  \node (B) [left-item,right=of A] {B};
  \node (C) [right-item,right=of B] {C};
  \node (D) [right-item,right=of C] {D};
  \node (E) [right-item,right=of D] {E};
  \node (F) [right-item,right=of E] {F};

  \node (C2) [right-item,right=15mm of F] {C};
  \node (D2) [right-item,right=of C2] {D};
  \node (E2) [right-item,right=of D2] {E};
  \node (F2) [right-item,right=of E2] {F};
  \node (A2) [left-item,right=of F2] {A};
  \node (B2) [left-item,right=of A2] {B};

  \begin{scope}[-latex,thick]
  \path (F) edge[bend left] node[above] {$\rot{}{2} = \rot{}{-4}$} (C2);
  \path (C2) edge[bend left] node[below] {$\rot{}{-2} = \rot{}{4}$} (F);
  \end{scope}
\end{tikzpicture}
\caption{The rotation of sequence $\text{A}\,\text{B}\,\text{C}\,\text{D}\,\text{E}\,\text{F}$ of length $6$ by $2$ (or, equivalently, by $-(6 - 2) = -4$) yields sequence $\text{C}\,\text{D}\,\text{E}\,\text{F}\,\text{A}\,\text{B}\,$. Different colors highlight elements of the two subsequences that are swapped.}
\label{fig:rotation-example}
\end{figure}

We can derive from \eqref{eq:rot-def} an equivalent element-wise representation of $\rot{\sq}{r}$ as the sequence such that $\len{\rot{\sq}{r}} = \len{\sq} = N$ and, for $0 \leq k < N$, 
\begin{equation}
(\rot{\sq}{r})_k
\quad=\;
\begin{cases}
\sq_{\wrap{(k + r)}{N}}   &   0 \leq r < N\,, \\
\sq_{\wrap{(k + N + r)}{N}}   &   - N < r \leq 0\,.
\end{cases}
\label{eq:lemma-rot}
\end{equation}
% When convenient, we remove the parentheses in the left-hand side of \eqref{eq:lemma-rot} without ambiguity, so that $\rot{\sq}{r}_k$ also denotes a generic element in the $r$-rotation of $\sq$.

The duality between left rotation and $r$ on one side, and right rotation and $N + r$ on the other side suggests the inverse mapping of \eqref{eq:lemma-rot}
\begin{equation}
\sq_k
\quad=\;
\begin{cases}
(\rot{\sq}{r})_{\wrap{(k + N - r)}{N}}   &   0 \leq r < N\,, \\
(\rot{\sq}{r})_{\wrap{(k - r)}{N}}   &   - N < r \leq 0\,.
\end{cases}
\label{eq:lemma-rot-2}
\end{equation}

\section{Rotation: the algorithms}  \label{sec:algos}

\lstset{language={PseudoEiffel}}

There are four main algorithms that compute the rotation of a sequence.
We present them in increasing level of complexity, where ``complexity'' simultaneously refers to complexity of implementation and to complexity of understanding, reasoning about, and proving the correctness of the algorithms---but not \emph{computational} complexity.
% In contrast, space complexity roughly \emph{decreases} as the other complexity increase (although a detailed computational complexity analysis is out of the scope of this paper).

We present the algorithms in pseudo-code.  Sequences are
represented by arrays of a generic type \<G>, indexed from 0.  
That is, we identify an array \<a> with the sequence $\@a[0]@\,\@a[1]@\,\ldots\,\@a[a.count - 1]@$ of its elements, where \<a.count> denotes the length of \<a>.
An array \emph{slice} \<a[low..high)> denotes the sequence $\@a[low]@\,\ldots\,\@a[high - 1]@$ if \<0 <= low <= high <= a.count>, and the empty sequence in all other cases.
Correspondingly, equality between slices of equal length corresponds to element-wise equality of their sequences of elements; this notational convention makes it possible to elide explicit quantification without ambiguity:
\[
\@x@[a..b) = \@y@[a..b)
\quad\liff\quad
\forall i: a \leq i < b \limpl \@x@[i] = \@y@[i]\,.
\]

Each algorithm
operates on an array \<a> of length \<N> (an alias of \<a.count>), and on an integer \<r>; it modifies \<a> in place so that it
is rotated by \<r> when the algorithm terminates.
For simplicity, we assume $\@0@ < \@r@ < \@N@$ as precondition.
This is without loss of generality as the \emph{right} rotation of a sequence $\sq$ by some $r$ such that $\len{\sq} > r > 0$ coincides with its left rotation by $\len{\sq} - r$, and rotation by $0$ is the identity.
We also assume that \<a>'s size does not change, so that \<N> denotes \<a.count> at any point during the computation.

Using this notation, Figure~\ref{fig:rotate-specification} shows the input/output \textbf{specification} of the rotation algorithms, where \<old a> denotes the content of \<a> upon calling \<rotate>.

\begin{figure}[!htb]
\begin{lstlisting}
rotate (a: ARRAY[G]; r: INTEGER)
   require 0 < r < N
   ensure a = $\rot{\!}{\text{r}}$(old a)
\end{lstlisting}
  \caption{Rotate array \<a> by \<r> to the left: specification.}
  \label{fig:rotate-specification}
\end{figure}

\subsection{Rotation by copy} \label{sec:algos:copy}

A straightforward application of the definition of rotation, the first algorithm (shown in Figure~\ref{fig:rotation-by-copy-algo}) uses a second array \<b> as scratch space.
With two pointers \<s> (source) and \<d> (destination), it copies each element \<a[s]> from \<a> into \<b> at its position in $\rot{\@a@}{\@r@}$.
Matching the mapping in~\eqref{eq:lemma-rot-2}, initially \<s> is \<0> and \<d> is \<N - r>; each loop iteration increments both, and resets \<d> to \<0> when it reaches \<N> (the first non-valid position in \<a>).\footnote{Alternatively, it could initialize \<s> to \<r> and \<d> to \<0> and decrement indexes following~\eqref{eq:lemma-rot}.} 
When the loop terminates, \<b> contains $\rot{\@a@}{\@r@}$, and its content is copied back into \<a>.

\begin{figure}[!htb]
\begin{lstlisting}
rotate_copy (a: ARRAY[G]; r: INTEGER)
   require 0 < r < N
   ensure a = $\rot{\!}{\text{r}}$(old a)
   local b: ARRAY[G]; s, d: INTEGER
   do
      b := [N]  // initialize to size (*N*)
      s := 0; d := N - r
      while s < N do
      invariant 0 <= s <= N
                   $\,$d = (s + N - r) $\!\wrap{}{}\!$ N
                   $\,$forall $i$: $0$ <= $i$ < s limplies a[$i$] = b[($i$ + N - r) $\!\wrap{}{}\!$ N]
         b[d] := a[s]
         s, d := s + 1, d + 1
         // wrap over (*a*)'s bounds
         if d = N then d := 0 end
      end
      // copy (*b*)'s content back into (*a*)
      a.copy (b)
   end
\end{lstlisting}
  \caption{Rotate array \<a> by \<r> to the left through copy, using \<b> as scratch space.}
  \label{fig:rotation-by-copy-algo}
\end{figure}

\begin{figure}[!ht]
\centering
\begin{tikzpicture}
[
  node distance=1pt,
  pin distance=2.5mm,
  every pin edge/.style={<-, shorten <=1pt, thick,black}
]

  \node (A) [left-item,pin=above:{\<s>},label=left:\<a:>] {A};
  \node (B) [left-item,right=of A] {B};
  \node (C) [right-item,right=of B] {C};
  \node (D) [right-item,right=of C] {D};
  \node (E) [right-item,right=of D] {E};
  \node (F) [right-item,right=of E] {F};

  \node (bA) [generic-item,below=8mm of A,label=left:\<b:>] {};
  \node (bB) [generic-item,right=of bA] {};
  \node (bC) [generic-item,right=of bB] {};
  \node (bD) [generic-item,right=of bC] {};
  \node (bE) [generic-item,right=of bD,pin=above:{\<d>}] {};
  \node (bF) [generic-item,right=of bE] {};

  \node (A2) [left-item,right=25mm of F,label=left:\<a:>] {A};
  \node (B2) [left-item,right=of A2,pin=above:{\<s>}] {B};
  \node (C2) [right-item,right=of B2] {C};
  \node (D2) [right-item,right=of C2] {D};
  \node (E2) [right-item,right=of D2] {E};
  \node (F2) [right-item,right=of E2] {F};

  \node (bA2) [generic-item,below=8mm of A2,label=left:\<b:>] {};
  \node (bB2) [generic-item,right=of bA2] {};
  \node (bC2) [generic-item,right=of bB2] {};
  \node (bD2) [generic-item,right=of bC2] {};
  \node (bE2) [left-item,right=of bD2] {A};
  \node (bF2) [generic-item,right=of bE2,pin=above:{\<d>}] {};

  \node (A3) [left-item,below=15mm of bA2,label=left:\<a:>] {A};
  \node (B3) [left-item,right=of A3] {B};
  \node (C3) [right-item,right=of B3,pin=above:{\<s>}] {C};
  \node (D3) [right-item,right=of C3] {D};
  \node (E3) [right-item,right=of D3] {E};
  \node (F3) [right-item,right=of E3] {F};

  \node (bA3) [generic-item,below=8mm of A3,pin=above:{\<d>},label=left:\<b:>] {};
  \node (bB3) [generic-item,right=of bA3] {};
  \node (bC3) [generic-item,right=of bB3] {};
  \node (bD3) [generic-item,right=of bC3] {};
  \node (bE3) [left-item,right=of bD3] {A};
  \node (bF3) [left-item,right=of bE3] {B};

  \node (A4) [left-item,below=15mm of bA,label=left:\<a:>] {A};
  \node (B4) [left-item,right=of A4] {B};
  \node (C4) [right-item,right=of B4] {C};
  \node (D4) [right-item,right=of C4] {D};
  \node (E4) [right-item,right=of D4] {E};
  \node (F4) [right-item,right=of E4] {F};
  \node (FF4) [fake-item,right=of F4,pin=above:{\<s>}] {};

  \node (bA4) [right-item,below=8mm of A4,label=left:\<b:>] {C};
  \node (bB4) [right-item,right=of bA4] {D};
  \node (bC4) [right-item,right=of bB4] {E};
  \node (bD4) [right-item,right=of bC4] {F};
  \node (bE4) [left-item,right=of bD4,pin=above:{\<d>}] {A};
  \node (bF4) [left-item,right=of bE4] {B};

  \begin{scope}[-latex,thick]
  \path ($(F.east)+(0,-7mm)$) edge ($(A2.west)+(0,-7mm)$);
  \path ($(F2.east)+(0,-7mm)$) edge[bend left] ($(F3.east)+(0,-7mm)$);
  \path ($(A3.west)+(0,-7mm)$) edge[dashed] ($(F4.east)+(0,-7mm)$) ;
  \path (A) edge[bend right=45,dotted] node[left] {$\rot{}{2}$} (bA4);
  \end{scope}
\end{tikzpicture}
\caption{Rotating sequence $\text{A}\,\text{B}\,\text{C}\,\text{D}\,\text{E}\,\text{F}$ by $2$ through copy.}
\label{fig:rotation-by-copy-example}
\end{figure}

Figure~\ref{fig:rotation-by-copy-example} demonstrates some steps of the algorithm \<rotate_copy> on an example.
The top left figure represents the state upon first entering the loop, followed by the state after one (top right) and two (bottom right) iterations; the bottom right figure is the state upon exiting the loop.

\begin{table}[!t]
\begin{tabular}{l ll |rrrr}
\multicolumn{1}{c}{\textsc{algorithm}} &  \multicolumn{1}{c}{\textsc{time}}  
& \multicolumn{1}{c}{\textsc{space}} & % \multicolumn{1}{c}{\textsc{swaps}} & 
 \multicolumn{4}{c}{\textsc{Java implementation on:}} \\
% & 
& & & {\small 3$\,${\small k}} 
      & {\small 10$\,${\small k}} 
      & {\small 100$\,${\small k}} 
      & {\small 1000$\,${\small k}} \\
\hline
\<copy>  &  $\Theta(N)$  &  $\Theta(N)$ &  % $2N\,${\small array assignments}  & 
         78$\,${\small ms}  &  540$\,${\small ms}  &  47$\,${\small s}  &  4324$\,${\small s} \\
\<copy> (native)  &  $\Theta(N + d)$  &  $\Theta(d)$ & % $2(N + d)${\small array assignments} & 
                43$\,${\small ms}  &  188$\,${\small ms}  &  12$\,${\small s}  &  1159$\,${\small s} \\
\<reverse>  &  $\Theta(N)$  &  $\Theta(1)$ &  %  $N$ & 
            44$\,${\small ms}  &  323$\,${\small ms}  &  17$\,${\small s}  &  2354$\,${\small s} \\
\<swap> (iterative)  &  $\Theta(N)$  &  $\Theta(1)$ &  % $N - \gcd(r, N - r)$ &
            42$\,${\small ms}  &  221$\,${\small ms}  &  12$\,${\small s}  &  1138$\,${\small s} \\
\<modulo>  &  $\Theta(N)$  &  $\Theta(1)$ & % $N$ swaps or $N + \gcd(N, N - r)$ array assignments & 
            48$\,${\small ms}  &  333$\,${\small ms}  &  37$\,${\small s}  &  5307$\,${\small s} \\
\end{tabular}
\caption{Time and space complexity of the various algorithms for rotating an array of size $N$ by $r$ (with $d = \min(r, N - r) \leq N/2$). The righ-hand columns show the times spent by Java 1.7 implementations of the algorithms over arrays of sizes from 3 thousand to 1000 thousand elements (for each $N$, an algorithm runs once for every $0 \leq r < N$). The experiments ran on an Intel Xeon 2.13~GHz server with 10~GB of physical RAM.}
\label{tab:complexity-overall}
\end{table}

\subsubsection{Rotation by copy: computational complexity} \label{sec:copy:complexity}

Algorithm \<rotate_copy> takes $\Theta(N)$ time and $\Theta(N)$ space.\footnote{In the complexity analyses, $N$ and $r$ denote generic values of input length \<N> and rotation \<r>.}
For large values of $N$, the space complexity may be prohibitive.
We can save some space by noticing that we only need scratch space for $d = \min(r, N - r)$ elements, while we can swap the other $N - d$ elements in place.
For example, assuming $\@r@ \leq \@N@ - \@r@$ as in the running example, rotation of \<a> by \<r> reduces to:

\begin{center}
\begin{lstlisting}[numbers=none]
b[0..r) := a[0..r)     $\qquad\ $// copy (*a[0..r)*) into (*b*)
a[0..N - r) := a[r, N) // copy (*a[r..N)*) to the left by (*r*) in place
a[N - r..N) := b[0..r) // copy (*b*) back into (*a[N - r..N)*)
\end{lstlisting}
\end{center}

This takes time $\Theta(N + d)$ and space $\Theta(d)$, with $d \leq N/2$.
If we use native memory copy methods (such as Java's \<System.arraycopy>) this is quite fast in practice but only if enough memory is available.
This is the case of the benchmarks reported in Table~\ref{tab:complexity-overall}, which ran on a server with a lot of physical RAM: the implementation of \<reverse_copy> using native memory copy methods is consistently the fastest (or very close to the fastest).

The algorithms presented in the following sections improve over the space requirements of the ``rotation by copy'' algorithm by trading time for space.

\subsubsection{Rotation by copy: correctness} \label{sec:copy:proof}

A correctness proof for \<rotate_copy> relies on a suitable ``essential'' loop invariant~\cite{FMV-CSUR14} that characterizes the state of \<b> as reflecting definition \eqref{eq:lemma-rot-2}.
For the essential invariant to be well-defined, we first need a bounding invariant that constrains the variability of index \<s> to be within \<a>'s bounds:
\begin{equation}
0 \leq \@s@ \leq \@N@ \,.
\label{eq:copy:bounding-1}
\end{equation}
Since \<d - s = N - r> initially, and both \<s> and \<d> are incremented in every iteration (while wrapping over \<N>), a corresponding bounding loop invariant about \<d> is 
\begin{equation}
\@d@ = \wrap{(\@s@ + \@N@ - \@r@)}{\@N@} \,,
\label{eq:copy:bounding-2}
\end{equation}
whose inductiveness directly follows from the definition \eqref{eq:wrap-def} of `$\wrap{}{}$' by case discussion.

The relation between \<d> and \<s> also suggests the essential invariant that relates the content of \<b> to that of \<a>:
\begin{equation}
\forall i: 0 \leq i < \@s@  \limpl \@a@[i] = \@b@[\wrap{(i + \@N@ - r)}{\@N@}]\,.
\label{eq:copy:essential}
\end{equation}
Its inductiveness is a consequence of the other invariant \eqref{eq:copy:bounding-1} and of the assignment \<b[d] := a[s]> performed in the loop.

Upon exiting the loop, \<s> equals \<N>; hence, \eqref{eq:copy:essential} asserts that
\[
\forall i: 0 \leq i < \@N@  \limpl \@a@[i] = \@b@[\wrap{(i + \@N@ - r)}{\@N@}]\,;
\]
that is, \<b> is $\rot{\@a@}{\@r@}$ according to \eqref{eq:lemma-rot-2}, which establishes the postcondition after copying \<b>'s content into \<a>.

\subsection{Rotation by reversal}  \label{sec:algos:reversal}

The rotation by reversal algorithm conjugates simplicity and efficiency in a way that makes it a very effective solution in practice.
In fact, it has been used in numerous text editors to reshuffle lines of text; Bentley reports usage as early as 1971---according to Ken Thompson, it was folklore even then~\cite[Sec.~2.3]{pearls-book}.

To rotate \<a[0..N)> to the left by \<r>, the algorithm performs three in-place reversals.
The first two reversals are partial, in that they reverse the slices \<a[0..r)> and \<a[r..N)>.
The last reversal targets the whole \<a[0..N)>.
Figure~\ref{fig:rotation-reverse-algo} shows the resulting straightforward implementation, which calls a routine \<reverse> to reverse \<a> in place.

The algorithm works thanks to a fundamental property of reversal with respect to concatenation (which we prove in Section~\ref{sec:reversal:proof} below): the reversal $\rev{(\cat{\xq}{\yq})}$ of the concatenation of two sequences $\xq$ and $\yq$ is the concatenation $\cat{\rev{\yq}}{\rev{\xq}}$ of $\yq$'s reversal and $\xq$'s reversal.
Then, consider a sequence $\sq$ of length $N$ as the concatenation $\cat{\xq}{\yq}$, where $\len{\xq} = r$ and $\len{\yq} = N - r$.
% Note that $\rot{\sq}{r}$ is $\cat{\yq}{\xq}$.
As demonstrated in Figure~\ref{fig:rotation-by-reversal-example} on the running example, where $N = 6$ and $r = 2$, the rotation by reversal algorithm applies the following transformations to $\sq$:
\[
\sq = \cat{\xq}{\yq}
\xrightarrow{\text{reverse }\xq}
\cat{\rev{\xq}}{\yq}
\xrightarrow{\text{reverse }\yq}
\cat{\rev{\xq}}{\rev{\yq}}
\xrightarrow{\text{reverse all}}
\cat{\yq}{\xq} = \rot{\sq}{r}\,,
\]
where the fundamental property justifies the last reversal of the whole $\cat{\rev{\xq}}{\rev{\yq}}
$.

\begin{figure}[!ht]
\centering
\begin{tikzpicture}
[node distance=1pt]
\lstset{basicstyle=\footnotesize}

  \node (A) [left-item] {A};
  \node (B) [left-item,right=of A] {B};
  \node (C) [right-item,right=of B] {C};
  \node (D) [right-item,right=of C] {D};
  \node (E) [right-item,right=of D] {E};
  \node (F) [right-item,right=of E] {F};

  \node (X) [label=above:{$\xq$},draw,rectangle,rounded corners,dotted,fit={(A)(B)},inner xsep=1pt,inner ysep=3pt] {};
  \node (Y) [label=above:{$\yq$},draw,rectangle,rounded corners,dotted,fit={(C)(D)(E)(F)},inner xsep=1pt,inner ysep=3pt] {};

  \node (Bx1) [left-item,right=40mm of F] {B};
  \node (Ax1) [left-item,right=of Bx1] {A};
  \node (Cx1) [right-item,right=of Ax1] {C};
  \node (Dx1) [right-item,right=of Cx1] {D};
  \node (Ex1) [right-item,right=of Dx1] {E};
  \node (Fx1) [right-item,right=of Ex1] {F};

  \node (Xx1) [label=above:{$\rev{\xq}$},draw,rectangle,rounded corners,dotted,fit={(Ax1)(Bx1)},inner xsep=1pt,inner ysep=3pt] {};
  \node (Yx1) [label=above:{$\yq$},draw,rectangle,rounded corners,dotted,fit={(Cx1)(Dx1)(Ex1)(Fx1)},inner xsep=1pt,inner ysep=3pt] {};

  \node (Bx2) [left-item,below=20mm of Bx1] {B};
  \node (Ax2) [left-item,right=of Bx2] {A};
  \node (Fx2) [right-item,right=of Ax2] {F};
  \node (Ex2) [right-item,right=of Fx2] {E};
  \node (Dx2) [right-item,right=of Ex2] {D};
  \node (Cx2) [right-item,right=of Dx2] {C};

  \node (Xx2) [label=above:{$\rev{\xq}$},draw,rectangle,rounded corners,dotted,fit={(Ax2)(Bx2)},inner xsep=1pt,inner ysep=3pt] {};
  \node (Yx2) [label=above:{$\rev{\yq}$},draw,rectangle,rounded corners,dotted,fit={(Cx2)(Dx2)(Ex2)(Fx2)},inner xsep=1pt,inner ysep=3pt] {};

  \node (B2) [left-item,left=40mm of Bx2] {B};
  \node (A2) [left-item,left=of B2] {A};
  \node (F2) [right-item,left=of A2] {F};
  \node (E2) [right-item,left=of F2] {E};
  \node (D2) [right-item,left=of E2] {D};
  \node (C2) [right-item,left=of D2] {C};

  \node (X2) [label=above:{$\xq$},draw,rectangle,rounded corners,dotted,fit={(A2)(B2)},inner xsep=1pt,inner ysep=3pt] {};
  \node (Y2) [label=above:{$\yq$},draw,rectangle,rounded corners,dotted,fit={(C2)(D2)(E2)(F2)},inner xsep=1pt,inner ysep=3pt] {};

  \begin{scope}[-latex,thick]
  \path (Y.east) edge node[above] {reverse $\xq$} (Xx1.west);
  \path ($(Fx1.south)+(0,-1mm)$) edge node[left] {reverse $\yq$} ($(Cx2.north)+(0,1mm)$);
  \path (Xx2.west) edge node[above] {reverse $\cat{\rev{\xq}}{\rev{\yq}}$} (X2.east);
  \path (X.west) edge[dotted,bend right] node[left] {$\rot{}{2}$} (Y2.west);
  \end{scope}
\end{tikzpicture}
\caption{Rotating sequence $\text{A}\,\text{B}\,\text{C}\,\text{D}\,\text{E}\,\text{F}$ by $2$ through three reversals.}
\label{fig:rotation-by-reversal-example}
\end{figure}

Completing the picture, Figure~\ref{fig:reversal-algo} provides an implementation of \<reverse> that works by switching elements at opposite ends of \<a[low..high)> while working its way inward: each iteration of the main loop swaps \<a[p]> and \<a[q]> on line~\ref{cl:reverse:swap}, and then increments \<p> and decrements \<q> (initialized to \<low> and \<high - 1>) on line~\ref{cl:reverse:increment}.

\begin{figure}[!htb]
\begin{lstlisting}
rotate_reverse (a: ARRAY[G]; r: INTEGER)
   require 0 < r < N
   ensure a = $\rot{\!}{\text{r}}$(old a)
   do
      reverse (a, 0, r)  (*\label{cl:reverse-call-1}*)
      reverse (a, r, N)  (*\label{cl:reverse-call-2}*)
      reverse (a, 0, N)  (*\label{cl:reverse-call-3}*)
   end
\end{lstlisting}
  \caption{Rotate array \<a> by \<r> to the left through three reversals.}
  \label{fig:rotation-reverse-algo}
\end{figure}

\begin{figure}[!hbt]
\begin{lstlisting}
reverse (a: ARRAY[G]; low, high: INTEGER)
   require 0 <= low <= high <= N
   ensure a[low..high) = (old a[low..high))$^{-1}$
   local p, q: INTEGER
   do
      p, q := low, high - 1   (*\label{cl:reverse:init}*)
      while p < q + 1  (*\label{cl:reverse:loopcondition}*)
      invariant 
            low <= p <= q + 2 <= high + 1
            q = high + low - 1 - p
            forall $i$: low <= $i$ < p limplies (old a)[$i$] = a[high + low - 1 - $i$]
            forall $i$: q < $i$ < high limplies (old a)[$i$] = a[high + low - 1 - $i$]
      do
         // swap (*a[p]*) and (*a[q]*)
         a[p], a[q] := a[q], a[p]  (*\label{cl:reverse:swap}*)
         p, q := p + 1, q - 1      (*\label{cl:reverse:increment}*)
      end
   end
\end{lstlisting}
  \caption{In-place reversal of \<a[low..high)> by swapping elements at opposite ends.}
  \label{fig:reversal-algo}
\end{figure}

\subsubsection{Rotation by reversal: computational complexity} \label{sec:reversal:complexity}

Since \<rotate_reverse> just calls \<reverse> three times, the asymptotic complexities of the two algorithms are the same.
The implementation of \<reverse> shown in Figure~\ref{fig:reversal-algo} has space complexity $\Theta(1)$ (since it only needs one variable to swap) and time complexity $\Theta(\@high@ - \@low@)$.
Hence, \<rotate_reverse> has time complexity $\Theta(N)$ and space complexity $\Theta(1)$.
In terms of swaps of array elements, \<reverse> performs $r/2 + (N - r)/2 + N/2 = N$ of them.
Table~\ref{tab:complexity-overall} shows that rotation by reversal also scales gracefully in practice and, while it is not the fastest overall, it normally is in the ballpark of the fastest.

Another appealing feature of rotation by reversal is its flexibility with respect to the data structure it operates on.
As long as we can implement in-place reversal in constant space and linear time on the structure, \<rotate_reverse> will still work with the same complexity.
In particular, we can have linear-time in-place reversal on linked lists; rotation by reversal works there as well as it works on arrays.

\subsubsection{Rotation by reversal: correctness} \label{sec:reversal:proof}

We sketched a correctness argument for \<rotate_reverse> in Section~\ref{sec:algos:reversal}; now we provide a complete proof, beginning with loop invariants sufficient to verify \<reverse>.

\paragraph{Reversal: correctness.}
A basic bounding invariant requires the indexes \<p> and \<q> to be valid positions within \<a[low..high)>:
\begin{equation}
\@low@ \leq \@p@ \leq \@q@ + 2 \leq \@high@ + 1\,.
\label{eq:reverse:bounding-1}
\end{equation}
The loop exits when \<p = q + 1> if \<high - low> is an even number; and when \<p = q + 2> if \<high - low> is an odd number.
This reveals that there is a bit of redundancy in \<reverse>: when \<high - low> is odd, $\@p@ = \@q@ = \lfloor (\@high@ - \@low@) / 2 \rfloor + 1$ at the beginning of the last loop iteration, which consequently swaps \<a[low..high)>'s central element with itself.
To avoid this unnecessary swap, relax the loop staying condition (line~\ref{cl:reverse:loopcondition}) to $\@p@ < \@q@$.
Here, however, we prefer the slightly redundant formulation because it makes for simpler loop invariants and correctness arguments, as we do not have to separately discuss what happens to the central element.

Another bounding invariant relates \<q> to \<p>:
\begin{equation}
\@q@ = \@high@ + \@low@ - 1 - \@p@ \,,
\label{eq:reverse:bounding-2}
\end{equation}
which implies, together with \eqref{eq:reverse:bounding-1}, that \<q> is also within bounds in the loop.

The essential loop invariant is two-fold, as it has to relate elements in the upper half \<a(q..high)> to the corresponding elements in the lower half of \<old a> that have been swapped; and vice versa for the lower half \<a[low..p)> with respect to the upper half of \<old a>:
\begin{equation}
\begin{split}
\forall i: \@low@ \leq i < \@p@  &\quad \limpl\quad  (\@old a@)[i] = \@a@[\@high@ + \@low@ - 1 - i] \,, \\
\forall i: \@q@ < i < \@high@  &\quad \limpl\quad  (\@old a@)[i] = \@a@[\@high@ + \@low@ - 1 - i] \,.
\end{split}
\label{eq:reverse:essential}
\end{equation}

Initiation and consecution are trivial to prove for the bounding invariants, based on how \<p> and \<q> are initialized (line~\ref{cl:reverse:init}) and modified by every loop iteration (line~\ref{cl:reverse:increment}).
The bounding invariants are also the basis to prove inductiveness of the essential invariant: each iteration swaps the elements at positions \<p> and \<q>, thus preserving \eqref{eq:reverse:essential} thanks to \eqref{eq:reverse:bounding-2}.
Finally, one can check that \eqref{eq:reverse:essential} implies \<reverse>'s postcondition when the loop exits with $\@p@ \geq \@q@ + 1$.

\paragraph{Rotation by reversal: correctness.}
We prove the lemma relating reversal and concatenation that underpins the correctness of \<rotate_reverse>.

\begin{lemma}[Reverse of concatenation] \label{lm:rev-cat}
$\rev{(\cat{\sq}{\tq})} = \cat{(\rev{\tq})}{(\rev{\sq})}$, for any two sequences $\sq$ and $\tq$.
\end{lemma}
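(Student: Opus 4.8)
The plan is to prove the identity by showing that its two sides are sequences of equal length whose elements agree position by position, appealing directly to the element-wise definitions \eqref{eq:cat-def} of concatenation and \eqref{eq:rev-def} of reversal. This is preferable to an inductive argument here, since both operations involved have been given explicit element-wise characterizations.

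First I would fix notation: let $m = \len{\sq}$, $n = \len{\tq}$, and $N = m + n$. Both $\rev{(\cat{\sq}{\tq})}$ and $\cat{(\rev{\tq})}{(\rev{\sq})}$ have length $N$: the left side because reversal preserves length and $\len{\cat{\sq}{\tq}} = N$; the right side because $\len{\rev{\tq}} = n$ and $\len{\rev{\sq}} = m$, and concatenation adds lengths. It therefore suffices to check, for every $k$ with $0 \leq k < N$, that $\rev{(\cat{\sq}{\tq})}_k = (\cat{(\rev{\tq})}{(\rev{\sq})})_k$. This automatically covers the degenerate cases in which $\sq$ or $\tq$ is empty, since then either the index range is empty or the claim collapses to a triviality.

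For the left side, \eqref{eq:rev-def} gives $\rev{(\cat{\sq}{\tq})}_k = (\cat{\sq}{\tq})_{N-1-k}$, and then \eqref{eq:cat-def} splits on whether $N-1-k < m$ (equivalently $k \geq n$), yielding $\sq_{N-1-k}$, or $m \leq N-1-k < N$ (equivalently $k < n$), yielding $\tq_{N-1-k-m} = \tq_{n-1-k}$. For the right side, \eqref{eq:cat-def} splits on whether $k < n$, giving $(\rev{\tq})_k = \tq_{n-1-k}$ by \eqref{eq:rev-def}, or $n \leq k < N$, giving $(\rev{\sq})_{k-n} = \sq_{m-1-(k-n)} = \sq_{N-1-k}$. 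Comparing the two regimes $k < n$ and $k \geq n$, the expressions match on the nose, which closes the proof.

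The only thing requiring care is bookkeeping: verifying that the case boundary $k < n$ on the right corresponds exactly to the boundary $N-1-k \geq m$ on the left, and that the two index shifts introduced by \eqref{eq:cat-def} and \eqref{eq:rev-def} compose to the claimed expressions when $N = m+n$. There is no conceptual obstacle here; the ``hard part'' is merely keeping this arithmetic straight across the two nested case analyses, and one should be explicit about each boundary condition rather than waving at symmetry.
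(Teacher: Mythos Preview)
Your proposal is correct and follows essentially the same element-wise approach as the paper: compute $\rev{(\cat{\sq}{\tq})}_k$ via \eqref{eq:rev-def} and \eqref{eq:cat-def}, split into the two cases $k < \len{\tq}$ and $k \geq \len{\tq}$, and match against the corresponding cases of $(\cat{(\rev{\tq})}{(\rev{\sq})})_k$. Your write-up is, if anything, a bit more explicit about the length check and the index arithmetic than the paper's version.
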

\begin{proof}
Let $0 \leq k < \len{\sq} + \len{\tq}$ be a generic position in $\rev{(\cat{\sq}{\tq})}$.
By \eqref{eq:rev-def}, $\rev{(\cat{\sq}{\tq})}_k$ equals $(\cat{\sq}{\tq})_{k'}$, where $k' = \len{\sq} + \len{\tq} - 1 - k$.
We discuss two cases.
First case: (a) $0 \leq k < \len{\tq}$, and hence $\len{\sq} \leq k' < \len{\sq} + \len{\tq}$.
Thus $(\cat{\sq}{\tq})_{k'}$ equals $\tq_{k' - \len{\sq}}$ by \eqref{eq:cat-def} and $k' - \len{\sq} = \len{\tq} - 1 - k$; hence $\tq_{k' - \len{\sq}}$ is the element at position $k$ in $\rev{\tq}$ according to \eqref{eq:rev-def}.
Otherwise, second case: (b) $\len{\tq} \leq k < \len{\sq} + \len{\tq}$, and hence $0 \leq k' < \len{\sq}$.
Thus $(\cat{\sq}{\tq})_{k'}$ equals $\sq_{k'}$ by \eqref{eq:cat-def}; hence $\sq_{k'}$ is the element at position $k - \len{\tq}$ in $\rev{\sq}$ according to \eqref{eq:rev-def}.
(a) and (b) show that $\rev{(\cat{\sq}{\tq})}$ follows the definition of $\cat{(\rev{\tq})}{(\rev{\sq})}$.
\qedhere
\end{proof}

Let $0 \leq x < \@r@$ be an index in the lower half of \<a[0..N)>.
The first reversal of \<rotate_reverse> maps $x$ to $\@r@ - 1 - x$ according to \eqref{eq:rev-def}; the second reversal leaves it unchanged; the third reversal maps it to $\@N@ - 1 - (\@r@ - 1 - x) = x + \@N@ - \@r@$ still according to \eqref{eq:rev-def} and to Lemma~\ref{lm:rev-cat}.
Following~\eqref{eq:lemma-rot-2}, the latter is the position in \<old a>'s rotation of the element originally at $x$ (note that $0 \leq x < \@r@$ implies $x + \@N@ - \@r@ < \@N@$).
The dual argument, for $\@r@ \leq y < \@N@$, establishes that the element originally at $y$ ends up in the position in \<old a>'s rotation.
Since such generic $x$ and $y$ span the whole interval \<[0..N)>, \<rotate_reverse>'s postcondition holds.

\subsection{Rotation by swapping}  \label{sec:algos:swap}

The rotation by swapping algorithm applies a divide-and-conquer strategy to improve over the space requirements of the rotation by copy algorithm (Section~\ref{sec:algos:copy}).

The algorithm builds upon two key observations.
First, as it is apparent from definition \eqref{eq:rot-def} of rotation, rotating \<a[0..N)> to the left by \<r> can be seen as swapping the adjacent array slices \<a[0..r)> and \<a[r..N)>---which have different length in general.
Second, swapping two non-overlapping array slices of \emph{equal} length can be done in place in linear time, as shown in Figure~\ref{fig:swap-algo}: we simply maintain two index variables \<x> and \<z> pointing to the corresponding elements in each section, and swap the corresponding element elements in each iteration.

\begin{figure}[!htb]
\begin{lstlisting}
// swap (*a[low..low $+$ d)*) and (*a[high $-$ d..high)*)
swap_sections (a: ARRAY[G]; low, high: INTEGER; d: INTEGER)
                        // non overlapping slices
   require 0 <= low <= low + d <= high - d <= high <= N
   ensure  a[low..low + d) = old a[high - d..high)
            $\,$a[low + d..high - d) = old a[low + d..high - d)
            $\,$a[high - d..high) = old a[low..low + d)
                    // pointers to left (x) and right (z) slices
   local x, z: INTEGER  
   do
      x, z := low, high - d
      until x = low + d
      invariant
         low <= x <= low + d  $\ $land high - d <= z <= high
         x - low = z - (high - d)
         a[low..x) = (old a)[high - d..z)
         a[x..high - d) = (old a)[x..high - d)
         a[high - d..z) = (old a)[low..x)
         a[z..high) = (old a)[z..high)
      do
         // swap (*a[x]*) and (*a[z]*)
         a[x], a[z] := a[z], a[x]   (*\label{cl:swap:swap}*)
         x, z := x + 1, z + 1
      end
   end
\end{lstlisting}
  \caption{In-place swap of \<a[low..low + d)> and \<a[low..low + d)>.}
  \label{fig:swap-algo}
\end{figure}

The divide-and-conquer strategy implemented by the rotation by swapping algorithm calls \<swap_sections> to compute part of the rotation, and then repeats on the smaller unrotated section until completion.
To illustrate, consider the running example in Figure~\ref{fig:rotation-by-swap-example}, where the goal is to swap the subsequence denoted by $\xq$ with the rest.
Since the size $r = 2$ of $\xq$ is less than $N - r = 4$, we can select another subsequence of size $r$ (denoted by $\zq$ in Figure~\ref{fig:rotation-by-swap-example}), at the other end of the whole sequence, such that it does not overlap $\xq$.
After swapping $\xq$ and $\zq$ by calling \<swap_sections>, $\xq$ acquires its final position in the rotation of the whole sequence.
Then, we recursively apply the algorithm to the subsequence $\cat{\zq}{\yq}$, which we rotate also by $r$.
In this case, the two subsequences $\zq$ and $\yq$ have equal length; hence swapping them concludes the overall rotation.

\begin{figure}[!htb]
\centering
\begin{tikzpicture}
[node distance=3pt]
\lstset{basicstyle=\footnotesize}

  \node (A) [left-item] {A};
  \node (B) [left-item,right=of A] {B};
  \node (C) [right-item,right=of B] {C};
  \node (D) [right-item,right=of C] {D};
  \node (E) [right-item,right=of D] {E};
  \node (F) [right-item,right=of E] {F};

  \node (E2) [right-item,right=30 mm of F] {E};
  \node (F2) [right-item,right=of E2] {F};
  \node (C2) [right-item,right=of F2] {C};
  \node (D2) [right-item,right=of C2] {D};
  \node (A2) [left-item,right=of D2] {A};
  \node (B2) [left-item,right=of A2] {B};

  \node (C3) [right-item,below=20 mm of E2] {C};
  \node (D3) [right-item,right=of C3] {D};
  \node (E3) [right-item,right=of D3] {E};
  \node (F3) [right-item,right=of E3] {F};
  \node (A3) [left-item,right=of F3] {A};
  \node (B3) [left-item,right=of A3] {B};

  \node (X) [label=above:{$\xq$},draw,rectangle,rounded corners,dotted,fit={(A)(B)},inner xsep=1pt,inner ysep=3pt] {};
  \node (Y) [label=above:{$\yq$},draw,rectangle,rounded corners,dotted,fit={(C)(D)},inner xsep=1pt,inner ysep=3pt] {};
  \node (Z) [label=above:{$\zq$},draw,rectangle,rounded corners,dotted,fit={(E)(F)},inner xsep=1pt,inner ysep=3pt] {};

  \node (X2) [label=above:{$\xq$},draw,rectangle,rounded corners,dotted,fit={(A2)(B2)},inner xsep=1pt,inner ysep=3pt] {};
  \node (Y2) [label=above:{$\yq$},draw,rectangle,rounded corners,dotted,fit={(C2)(D2)},inner xsep=1pt,inner ysep=3pt] {};
  \node (Z2) [label=above:{$\zq$},draw,rectangle,rounded corners,dotted,fit={(E2)(F2)},inner xsep=1pt,inner ysep=3pt] {};

  \node (X3) [label=above:{$\xq$},draw,rectangle,rounded corners,dotted,fit={(A3)(B3)},inner xsep=1pt,inner ysep=3pt] {};
  \node (Y3) [label=above:{$\yq$},draw,rectangle,rounded corners,dotted,fit={(C3)(D3)},inner xsep=1pt,inner ysep=3pt] {};
  \node (Z3) [label=above:{$\zq$},draw,rectangle,rounded corners,dotted,fit={(E3)(F3)},inner xsep=1pt,inner ysep=3pt] {};

  \begin{scope}[-latex,thick]
  \path (Z) edge node[above] {swap $\xq$ and $\zq$} (Z2);
  \path (Z2.south east) edge[dashed] node[right] {$\rot{}{2}(\zq\,\yq)$} (Z3.north west);
  \path ($(Y.south)+(0,-1mm)$) edge[dotted,bend right=20] node[below] {$\rot{}{2}$} ($(Y3.west)+(-1mm,0)$);
  \end{scope}

\end{tikzpicture}
\caption{Rotating sequence $\text{A}\,\text{B}\,\text{C}\,\text{D}\,\text{E}\,\text{F}$ by $2$ through swapping sections.}
\label{fig:rotation-by-swap-example}
\end{figure}

It is natural to generalize this approach using a recursive formulation.
As Figure~\ref{fig:rotation-swap-algo} shows, we rely on a helper procedure \<rotate_swap_helper> that swaps the slices \<a[low..p)> and \<a[p..high)>.
If the two slices have equal length (case on line~\ref{cl:rotswap:case-equal}), then calling \<swap_sections> with \<d = p - low = high - p> does the job.
Otherwise, suppose the first slice is smaller (case on line~\ref{cl:rotswap:case-left}, such as in the running example of Figure~\ref{fig:rotation-by-swap-example}, where \<p = r = 2>); that is, \<p - low \< high - p>.
Then, swap \<a[low..p)> with \<a[high - (p - low)..high)>; as a result, the latter slice is in place, and we repeat on \<a[low..high - (p - low))>.
Conversely, if \<p - low \> high - p> (case on line~\ref{cl:rotswap:case-right}, such as in the other example of Figure~\ref{fig:rotation-by-swap-example-right}, where \<p = r = 4>), swap \<a[low..low + (high - p))> with \<a[high - p..high)>; as a result, the former slice is in place, and we repeat on \<a[low + (high - p)..high)>.

\begin{figure}[!htb]
\begin{lstlisting}
rotate_swap (a: ARRAY[G]; r: INTEGER)
   require 0 < r < N
   ensure a = $\rot{\!}{\text{r}}$(old a)
   do rotate_swap_helper (a, 0, r, N) end


// Rotate (*a[low..high)*) at (*p*) by swapping (*a[low..p)*) and (*a[p..high)*)
rotate_swap_helper (a: ARRAY[G]; low, p, high: INTEGER)
   require 0 <= low <= p < high <= N
   ensure
      a[low..high) = $\rot{\!}{(\text{p} - \text{low})}$(old a)[low..high)
      a[0..low) = (old a)[0..low)  land$\:$a[high..N) = (old a)[high..N)
   do
      if low < p < high then
         if p - low = high - p then     (*\label{cl:rotswap:case-equal}*)
            // swap (*a[low..p)*) and (*a[p..high)*)
            swap_sections (a, low, high, p - low)
            // now the whole (*a[low..high)*) is in place
         elseif p - low < high - p then (*\label{cl:rotswap:case-left}*)
            // swap (*a[low..p)*) and   (*a[high $-$ (p $-$ low)..high)*)
            swap_sections (a, low, high, p - low)
            // now (*a[high $-$ (p $-$ low)..high)*) is in place
            rotate_swap_helper (a, low, p, high - (p - low))
         elseif p - low > high - p then (*\label{cl:rotswap:case-right}*)
            // swap (*a[low..low $+$ (high $-$ p))*) and (*a[p..high)*)
            swap_sections (a, low, high, high - p)
            // now (*a[low..low $+$ (high $-$ p))*) is in place
            rotate_swap_helper (a, low + (high - p), p, high)
         end
      end
   end
\end{lstlisting}
  \caption{Rotate array \<a> by \<r> to the left by swapping sections of equal length: recursive algorithm.}
  \label{fig:rotation-swap-algo}
\end{figure}

\begin{figure}[!htb]
\centering
\begin{tikzpicture}
[node distance=3pt]
\lstset{basicstyle=\footnotesize}

  \node (A) [left-item] {L};
  \node (B) [left-item,right=of A] {M};
  \node (C) [left-item,right=of B] {N};
  \node (D) [left-item,right=of C] {O};
  \node (E) [right-item,right=of D] {P};
  \node (F) [right-item,right=of E] {Q};

  \node (E2) [right-item,right=30 mm of F] {P};
  \node (F2) [right-item,right=of E2] {Q};
  \node (C2) [left-item,right=of F2] {N};
  \node (D2) [left-item,right=of C2] {O};
  \node (A2) [left-item,right=of D2] {L};
  \node (B2) [left-item,right=of A2] {M};

  \node (E3) [right-item,below=20 mm of E2] {P};
  \node (F3) [right-item,right=of E3] {Q};
  \node (A3) [left-item,right=of F3] {L};
  \node (B3) [left-item,right=of A3] {M};
  \node (C3) [left-item,right=of B3] {N};
  \node (D3) [left-item,right=of C3] {O};

  \node (X) [label=above:{$\xq$},draw,rectangle,rounded corners,dotted,fit={(A)(B)},inner xsep=1pt,inner ysep=3pt] {};
  \node (Y) [label=above:{$\yq$},draw,rectangle,rounded corners,dotted,fit={(C)(D)},inner xsep=1pt,inner ysep=3pt] {};
  \node (Z) [label=above:{$\zq$},draw,rectangle,rounded corners,dotted,fit={(E)(F)},inner xsep=1pt,inner ysep=3pt] {};

  \node (X2) [label=above:{$\xq$},draw,rectangle,rounded corners,dotted,fit={(A2)(B2)},inner xsep=1pt,inner ysep=3pt] {};
  \node (Y2) [label=above:{$\yq$},draw,rectangle,rounded corners,dotted,fit={(C2)(D2)},inner xsep=1pt,inner ysep=3pt] {};
  \node (Z2) [label=above:{$\zq$},draw,rectangle,rounded corners,dotted,fit={(E2)(F2)},inner xsep=1pt,inner ysep=3pt] {};

  \node (X3) [label=above:{$\xq$},draw,rectangle,rounded corners,dotted,fit={(A3)(B3)},inner xsep=1pt,inner ysep=3pt] {};
  \node (Y3) [label=above:{$\yq$},draw,rectangle,rounded corners,dotted,fit={(C3)(D3)},inner xsep=1pt,inner ysep=3pt] {};
  \node (Z3) [label=above:{$\zq$},draw,rectangle,rounded corners,dotted,fit={(E3)(F3)},inner xsep=1pt,inner ysep=3pt] {};

  \begin{scope}[-latex,thick]
  \path (Z) edge node[above] {swap $\xq$ and $\zq$} (Z2);
  \path (Y2.south east) edge[dashed] node[right] {$\rot{}{2}(\yq\,\xq)$} (Y3.north west);
  \path ($(Y.south)+(0,-1mm)$) edge[dotted,bend right=20] node[below] {$\rot{}{4}$} ($(Z3.west)+(-1mm,0)$);
  \end{scope}

\end{tikzpicture}
\caption{Rotating sequence $\text{L}\,\text{M}\,\text{N}\,\text{O}\,\text{P}\,\text{Q}$ by $4$ through swapping sections.}
\label{fig:rotation-by-swap-example-right}
\end{figure}

In the remainder, we refer to the two recursive cases as the ``left is smaller'' case, for \<p - low \< high - p>, and the ``right is smaller'' case, for \<p - low \> high - p>.
As we justify rigorously in Section~\ref{sec:swap:proof}, the correctness of the algorithm relies on two dual properties of rotation with respect to concatenation, one for each recursive case.
In both cases, we represent \<a[low..high)> as the concatenation $\cat{\cat{\xq}{\yq}}{\zq}$ of three sequences.
In the \emph{left is smaller} case, $\len{\xq} = \len{\zq} = \@p@ - \@low@$, and the property that $\rot{(\cat{\cat{\xq}{\yq}}{\zq})}{\len{\xq}} = \cat{\rot{(\cat{\zq}{\yq})}{\len{\zq}}}{\xq}$ justifies the recursive call.
In the \emph{right is smaller} case, $\len{\xq} = \len{\zq} = \@high@ - \@p@$ (hence $\len{\xq} + \len{\yq} = \@p@ - \@low@$), and the property that $\rot{(\cat{\cat{\xq}{\yq}}{\zq})}{\len{\xq} + \len{\yq}} = \cat{\zq}{\rot{(\cat{\yq}{\xq})}{\len{\yq}}}$ justifies the recursive call.

\subsubsection{Rotation by swapping: computational complexity} \label{sec:swap:complexity}

Here is a back-of-the-envelope complexity analysis of \<rotate_swap> via its helper function.
Overall, \<rotate_swap_helper> makes some $n$ recursive calls to \<swap_sections>; let $d_k$ denote the value of argument \<d> in the $k$th call, for $1 \leq k \leq n$ (for example, $d_1 = \min(\@r@, \@N - r@)$).
Every such call to \<swap_sections> takes time $\Theta(\@d@)$ and reduces the problem size by \<d>.
Since recursion terminates when the yet-to-be-rotated array slice becomes empty, it must be $d_1 + \cdots + d_n = N$.
The overall time complexity is then $\Theta(d_1) + \cdots + \Theta(d_n) = \Theta(N)$.

Gries and Mills~\cite[Sec.~5]{swapping-TR} provide a more rigorous analysis of the complexity of \<rotate_swap> in terms of number of swaps between array elements.
First, note the elegant property that \<rotate_swap_helper> reduces to Euclid's algorithm for greatest common divisor by successive subtractions~\cite[Sec.~1.3]{FMV-CSUR14} if we omit the calls to \<swap_sections>: it computes $\gcd(\@r@, \@N@ - \@r@)$.
Hence, the last call to \<swap_sections> takes place when $\@p@ - \@low@ = \@high@ - \@p@ = \gcd(\@r@, \@N@ - \@r@)$; it places the remaining $2\cdot\gcd(\@r@, \@N@ - \@r@)$ elements in their final rotated position through exactly $\gcd(\@r@, \@N@ - \@r@)$ swaps.
The previous calls to \<swap_sections> perform another $\@N@ - 2\cdot\gcd(\@r@, \@N@ - \@r@)$ swaps: each swap places one element in its final rotated position.
Overall \<rotate_swap> performs $\@N@ - \gcd(\@r@, \@N@ - \@r@)$ swaps.

The space complexity is $\Theta(N)$ due to recursion: the worst case is $\@r@ = 1$, when the maximum recursion depth is \<N>.
However, it is straightforward to produce an equivalent iterative version of \<rotate_swap>, as shown in Figure~\ref{fig:rotation-swap-algo-iter}.
The condition \<lnot(low \< p \< high)> that terminates recursion becomes the exit condition for a loop (line~\ref{cl:rswap:iterative:loop} in Figure~\ref{fig:rotation-swap-algo-iter}) that calls \<swap_sections> and moves \<low> or \<high> closer to each other accordingly.
The iterative version clearly has space complexity $\Theta(1)$.
Practical implementations will use iteration even if enough memory is available, since limits on recursion stack size would become a bottleneck.
As Table~\ref{tab:complexity-overall} shows, such iterative version is quite fast in practice, often nearly as fast as rotation by copy using native methods, but with only constant memory usage.

\subsubsection{Rotation by swapping: correctness} \label{sec:swap:proof}

We first quickly illustrate the invariants for a correctness proof of \<swap_sections>.
Then, we discuss the key steps of a correctness proof for \<rotate_swap> in its recursive and iterative versions.

\paragraph{Swap sections: correctness.}
Variables \<x> and \<z> span the intervals \<[low..low + d)> and \<[high - d.. high)>; hence the bounding invariant 
\begin{equation}
\begin{split}
\@low@ &\leq \@x@ \leq \@low@ + \@d@\,, \\
\@high@ - \@d@ &\leq \@z@ \leq \@high@\,.
\end{split}
\label{eq:swap-bounding-1}
\end{equation}
At the beginning of every loop iteration, they point to the pair of elements that are about to be swapped; hence the other bounding invariant 
\begin{equation}
\@x@ - \@low@ = \@z@ - (\@high@ - \@d@)\,.
\label{eq:swap-bounding-2}
\end{equation}

Based on the bounding invariants \eqref{eq:swap-bounding-1} and \eqref{eq:swap-bounding-2}, we characterize the content of \<a> during \<swap_sections>'s execution as partitioned into six sections:
\begin{center}
\begin{tikzpicture}
[node distance=-0.5pt]

\node (a) {\<a>$\colon$};
\node (left-unchanged) [array-slice,right=2mm of a] {untouched};
\node (left-swapped) [array-slice,right=of left-unchanged] {region X:\\swapped with (Z)};
\node (mid-unchanged) [array-slice,right=of left-swapped] {unchanged};
\node (right-swapped) [array-slice,right=of mid-unchanged] {region Z:\\swapped with (X)};
\node (right-unchanged) [array-slice,right=of right-swapped] {unchanged};
\node (rightmost-unchanged) [array-slice,right=of right-unchanged] {untouched};

\node [xshift=2pt,yshift=6pt] at (left-unchanged.north west) {$0$};
\node [xshift=7pt,yshift=6pt] at (left-swapped.north west) {\<low>};
\node [xshift=3pt,yshift=6pt] at (mid-unchanged.north west) {\<x>};
\node [xshift=20pt,yshift=6pt] at (right-swapped.north west) {\<high - d>};
\node [xshift=3pt,yshift=6pt] at (right-unchanged.north west) {\<z>};
\node [xshift=10pt,yshift=6pt] at (rightmost-unchanged.north west) {\<high>};
\node [xshift=3pt,yshift=6pt] at (rightmost-unchanged.north east) {\<N>};
\end{tikzpicture}
\end{center}
Thus, we have the essential invariants:
\begin{equation}
\begin{split}
\@a@[\@low@..\@x@)  & =  (\@old a@)[\@high@ - \@d@..\@z@)\,,  \\
\@a@[\@x@..\@high@ - \@d@)  & =  (\@old a@)[\@x@..\@high@ - \@d@)\,,  \\
\@a@[\@high@ - \@d@..\@z@)  & =  (\@old a@)[\@low@..\@x@)\,,  \\
\@a@[\@z@..\@high@)  & =  (\@old a@)[\@z@..\@high@)\,.  \\
\end{split}
\label{eq:swap-essential}
\end{equation}
It is not difficult to prove initiation and consecution of \eqref{eq:swap-essential}.
In particular, swapping the elements \<a[x]> and \<a[z]> on line~\ref{cl:swap:swap} in Figure~\ref{fig:swap-algo} maintains invariance of the swapped slices \<a[low..x)> and \<a[high - d..z)>.

\paragraph{Rotation by swapping: correctness of recursive version.}
As mentioned in the overview, the proof makes usage of a fundamental lemma, which we now prove.

\begin{lemma}[Rotation and swap] \label{lm:rot-swap}
For any three sequences $\xq$, $\yq$, $\zq$, with $\len{\xq} = \len{\zq} = d$ and $\len{\xq} + \len{\yq} + \len{\zq} = N$:
\begin{subequations}
\label{eq:lemma-rot-swap-both}
\begin{align}
\rot{(\cat{\cat{\xq}{\yq}}{\zq})}{d} &\ = \cat{\rot{(\cat{\zq}{\yq})}{d}}{\xq}\,,
\label{eq:lemma-rot-swap-left}
\\
\rot{(\cat{\cat{\xq}{\yq}}{\zq})}{N - d} &\ = \cat{\zq}{\rot{(\cat{\yq}{\xq})}{N - 2d}}\,.
\label{eq:lemma-rot-swap-right}
\end{align}
\end{subequations}
\end{lemma}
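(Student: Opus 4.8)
Both identities in~\eqref{eq:lemma-rot-swap-both} reduce to a single ``block rotation'' fact together with the associativity of concatenation, so I would first isolate that fact and then apply it to each equation.

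First I would record the auxiliary observation that \emph{left-rotating a concatenation $\cat{\sq}{\tq}$ by $\len{\sq}$ yields $\cat{\tq}{\sq}$}. When $0 < \len{\sq} < \len{\sq} + \len{\tq}$ this is immediate from the first case of definition~\eqref{eq:rot-def}: $\rot{(\cat{\sq}{\tq})}{\len{\sq}}$ consists of positions $\len{\sq}, \ldots, \len{\sq} + \len{\tq} - 1$ of $\cat{\sq}{\tq}$ followed by positions $0, \ldots, \len{\sq} - 1$, and by~\eqref{eq:cat-def} these two blocks are exactly $\tq$ and $\sq$. The degenerate cases $\len{\sq} = 0$ (rotation by $0$, the identity) and $\len{\tq} = 0$ (rotation by the full length, also the identity) are immediate. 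I would also note that $\circ$ is associative --- a one-line consequence of~\eqref{eq:cat-def} --- so that $\cat{\cat{\xq}{\yq}}{\zq}$ may be regarded freely either as $\cat{\xq}{(\cat{\yq}{\zq})}$ or as $\cat{(\cat{\xq}{\yq})}{\zq}$.

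Given this, the two equations are mechanical. For~\eqref{eq:lemma-rot-swap-left}: on the left, view the argument as $\cat{\xq}{(\cat{\yq}{\zq})}$ and left-rotate by $d = \len{\xq}$; the block fact gives $\cat{(\cat{\yq}{\zq})}{\xq}$. On the right, left-rotate $\cat{\zq}{\yq}$ by $d = \len{\zq}$ to obtain $\cat{\yq}{\zq}$, then append $\xq$, again obtaining $\cat{(\cat{\yq}{\zq})}{\xq}$. For~\eqref{eq:lemma-rot-swap-right}: since $\len{\cat{\xq}{\yq}} = d + (N - 2d) = N - d$, viewing the left argument as $\cat{(\cat{\xq}{\yq})}{\zq}$ and left-rotating by $N - d = \len{\cat{\xq}{\yq}}$ yields $\cat{\zq}{(\cat{\xq}{\yq})}$; and since $N - 2d = \len{\yq}$, left-rotating $\cat{\yq}{\xq}$ by $N - 2d$ gives $\cat{\xq}{\yq}$, so that prepending $\zq$ gives $\cat{\zq}{(\cat{\xq}{\yq})}$ as well. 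Comparing the two sides of each equation closes the proof.

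The only real delicacy is the bookkeeping around the degenerate instances $d = 0$ and $\len{\yq} = 0$, where one of the rotations above is by $0$ or by the full length of its operand and hence is not literally an instance of~\eqref{eq:rot-def} but is the identity map (as the paper already acknowledges for out-of-range rotations); dispatching these up front leaves the rest pure substitution. An alternative, more computational route avoids the block fact: expand both sides with the element-wise formula~\eqref{eq:lemma-rot}, split the index range $[0..N)$ into the three subranges induced by $\xq$, $\yq$, $\zq$, and discharge each subrange using the bound~\eqref{eq:wrap_bounds} on $\wrap{}{}$. I would keep that in reserve if every $\wrap{}{}$ step has to be spelled out, but the block argument is shorter and more transparent.
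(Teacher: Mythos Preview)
Your proof is correct and takes a genuinely different route from the paper's. The paper argues element-wise: it fixes a generic position $0 \le k < N$ and checks, by case analysis on whether $k$ lies in the $\xq$-, $\yq$-, or $\zq$-block, that the left-hand and right-hand sides of~\eqref{eq:lemma-rot-swap-left} send $k$ to the same position (using~\eqref{eq:lemma-rot-2} and the definition of $\wrap{}{}$), and leaves~\eqref{eq:lemma-rot-swap-right} to the reader by symmetry. You instead isolate a single reusable ``block rotation'' fact---$\rot{(\cat{\sq}{\tq})}{\len{\sq}} = \cat{\tq}{\sq}$---and then both identities drop out in one line each by associativity of $\circ$. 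Your argument is shorter and more conceptual; the paper's is closer to the index-level reasoning that the mechanized Boogie and Dafny proofs in later sections actually carry out (which is presumably why the author chose that presentation). You even mention the paper's approach as your fallback ``alternative, more computational route''. The degenerate cases you flag ($d = 0$, $\len{\yq} = 0$) are indeed the only places requiring care, and your handling of them is fine.
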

\begin{proof}
We prove \eqref{eq:lemma-rot-swap-left}; the proof of \eqref{eq:lemma-rot-swap-right} can be constructed by similar means.
Let $0 \leq k < N$ be a generic position in $\cat{\cat{\xq}{\yq}}{\zq}$; the goal is showing that the mapping $k \mapsto_1 k_1$ determined by the left-hand side transformation $\rot{(\cat{\cat{\xq}{\yq}}{\zq})}{d}$ and the mapping $k \mapsto_2 k_2$ determined by the right-hand side transformation $\cat{\rot{(\cat{\zq}{\yq})}{d}}{\xq}$ are such that $k_1 = k_2$.
We discuss two cases: (a) $0 \leq k < d$ and (b) $d \leq k < N$.

In case (a), $k + N - d < N$; hence $k_1$ is $k + N - d$ according to \eqref{eq:lemma-rot-2}.
Also in case (a), $k$ denotes a position of $\xq$; hence, mapping $\mapsto_2$ shifts $k$ by $\len{\zq} + \len{\yq}$ to the right; that is, $k_2 = k + (N - 2d) + d = k_1$, which concludes case (a).

In case (b), $N \leq k + N - d < 2N$; hence $k_1$ is $(k + N - d) - N = k - d$ according to \eqref{eq:lemma-rot-2} (and the definition \eqref{eq:wrap-def} of `$\wrap{}{}$').
To determine the value of $k_2$, we describe $\mapsto_2$ as the application of $\mapsto_{2.1}$ followed by $\mapsto_{2.2}$: $\mapsto_{2.1}$ accounts for the swapping of $\zq$ and $\xq$ in $\cat{\cat{\xq}{\yq}}{\zq}$, and $\mapsto_{2.2}$ accounts for the rotation $\rot{(\cat{\zq}{\yq})}{d}$.
Accordingly, we further split case (b) into: (b.1) $d \leq k < N - d$ and (b.2) $N - d \leq k < N$.
In case (b.1), $k$ denotes a position within $\yq$ in sequence $\cat{\cat{\xq}{\yq}}{\zq}$.
Hence, mapping $\mapsto_{2.1}$ leaves $k$ unchanged (since $\len{\zq} = \len{\xq}$), and then $\mapsto_{2.2}$ maps it to $\wrap{(k + \len{\cat{\zq}{\yq}} - d)}{\len{\cat{\zq}{\yq}}} = k - d$ according to \eqref{eq:lemma-rot-2} (since $\len{\cat{\zq}{\yq}} = N - d$ and $N - d \leq k + (N - d) - d < 2(N - d)$ in this case).
In case (b.2), $k$ denotes a position within $\zq$ in sequence $\cat{\cat{\xq}{\yq}}{\zq}$.
Hence, mapping $\mapsto_{2.1}$ shifts $k$ by $(N - d)$ to the left (i.e., $k \mapsto_{2.1} k - (N - d)$), and then $\mapsto_{2.2}$ maps it to $(k - (N - d)) + (N - 2d) = k - d$ according to \eqref{eq:rot-def} (since $\len{\yq} = N - 2d$).
This concludes the proof that $k_1 = k_2$ in all cases.
\qedhere
\end{proof}

The proof of \<rotate_swap_helper> now discusses the three main cases, for the three conditional branches on lines~\ref{cl:rotswap:case-equal}, \ref{cl:rotswap:case-left}, and \ref{cl:rotswap:case-right} in Figure~\ref{fig:rotation-swap-algo}.
In the first case, \<p - low = high - p>, note that the postcondition of \<swap_sections> called on\linebreak \<a[low..high)> with \<d = p - low> satisfies definition \eqref{eq:cat-def} of rotation for $r = \@p@ - \@low@$ and $N = \@high@ - \@low@$.
In the \emph{left is smaller} case, \<p - low \< high - p>, after the call to \<swap_sections> the content of \<a[low..high)> is the concatenation
\[
\underbrace{
(\@old a@)[\@high@ - (\@p@ - \@low@)..\@high@)
}_{Z}
\cat{}{}
\underbrace{
(\@old a@)[\@p@..\@high@ - (\@p@ - \@low@))
\cat{}{}
}_{Y}
\underbrace{
(\@old a@)[\@low@..\@p@)
}_{X}
\,.
\]
Using the names assigned to each slice, the call to \<swap_sections> turns sequence $\cat{X}{\cat{Y}{Z}}$ into $\cat{Z}{\cat{Y}{X}}$; the following recursive call rotates $\cat{Z}{Y}$ by $r = \@p@ - \@low@$.
According to~\eqref{eq:lemma-rot-swap-left}, for $\xq = X$, $\yq = Y$, $\zq = Z$, and $d = r$, this achieves a rotation of the original sequence $\@(old a)@[\@low@..\@high@) = \cat{X}{\cat{Y}{Z}}$ by the same $r$, which establishes \<rotate_swap_helper>'s postcondition.
The \emph{right is smaller} case is symmetric and crucially relies on \eqref{eq:lemma-rot-swap-right} for $d = \@high@ - \@p@$ and $N = \@high@ - \@low@$.

\begin{figure}[!htb]
\begin{lstlisting}
rotate_swap_iterative(a: ARRAY[G]; r: INTEGER)
   require 0 < r < N
   ensure a = $\rot{\!}{\text{r}}$(old a)
   local low, high, p: INTEGER
   do
      low, p, high := 0, r, N
      while low < p < high   (*\label{cl:rswap:iterative:loop}*)
      invariant
         0 <= low <= p <= high <= N
         low = p liff p = high
            // rotated on the left
         forall $i$: 0 <= $i$ < low limplies a[$i$] = $\rot{\!}{\text{r}}$(old a)[$i$]
            // to be rotated
         p - low < high - low limplies forall $i$: low <= $i$ < high limplies 
                  $\rot{\!}{\text{p} - \text{low}}$(a[low..high))[$i$ - low] = $\rot{\!}{\text{r}}$(old a)[$i$]
            // rotated on the right
         forall $i$: high <= $i$ < N limplies a[$i$] = $\rot{\!}{\text{r}}$(old a)[$i$]
      do
         if p - low = high - p then
            // swap (*a[low..p)*) and (*a[p..high)*)
            swap_sections (a, low, high, p - low)
            // now the whole (*a[low..high)*) is in place
            low, high := low + (p - low), high - (high - p)
         elseif p - low < high - p then
            // swap (*a[low..p)*) and (*a[high $-$ (p $-$ low)..high)*)
            swap_sections (a, low, high, p - low)
            // now (*a[high $-$ (p $-$ low)..high)*) is in place
            high := high - (p - low)
         elseif p - low > high - p then
            // swap (*a[low..low $+$ (high $-$ p))*) and (*a[p..high)*)
            swap_sections (a, low, high, high - p)   (*\label{cl:rswap:iter:right}*)
            // now (*a[low..low $+$ (high $-$ p))*) is in place
            low := low + (high - p)  (*\label{cl:rswap:iter:right-index}*)
         end
      end
   end
\end{lstlisting}
  \caption{Rotate array \<a> by \<r> to the left by swapping sections of equal length: iterative algorithm.}
  \label{fig:rotation-swap-algo-iter}
\end{figure}

\paragraph{Rotation by swapping: correctness of iterative version.}
As usual, we start by identifying the straightforward bounding invariants.
Variables \<low> and \<high> mark a shrinking slice of \<a> as they get closer to \<p>, hence the obvious invariant
\begin{equation}
0 \leq \@low@ \leq \@p@ \leq \@high@ \leq \@N@\,.
\label{eq:rotation-swap-iter-bounding-1}
\end{equation}
As we prove inductiveness of this invariant based on how \<low> and \<high> are updated in every iteration, we notice that when the loop exits both \<p = low> and \<p = high> hold.
We record this fact with another bounding invariant
\begin{equation}
\@low@ = \@p@ \quad\liff\quad \@p@ = \@high@ \,,
\label{eq:rotation-swap-iter-bounding-2}
\end{equation}
which lets us establish that the interval \<[low..high)> is empty when the loop exits.

The bounding invariants suggest an essential invariant that predicates about three slices of \<a>:
\begin{center}
\begin{tikzpicture}
[node distance=-0.5pt]

\node (a) {\<a>$\colon$};
\node (left-unchanged) [array-slice,right=2mm of a] {rotated};
\node (tbr-1) [array-slice,right=of left-unchanged] {to be rotated\\(with next)};
\node (tbr-2) [array-slice,right=of tbr-1] {to be rotated\\(with previous)};
\node (tbr-3) [array-slice,right=of tbr-2] {rotated};

\node [xshift=2pt,yshift=6pt] at (left-unchanged.north west) {$0$};
\node [xshift=7pt,yshift=6pt] at (tbr-1.north west) {\<low>};
\node [xshift=3pt,yshift=6pt] at (tbr-2.north west) {\<p>};
\node [xshift=10pt,yshift=6pt] at (tbr-3.north west) {\<high>};
\node [xshift=3pt,yshift=6pt] at (tbr-3.north east) {\<N>};
\end{tikzpicture}
\end{center}
The leftmost and rightmost slices are initially empty and invariably in place as the loop iterates:
\begin{equation}
\begin{split}
\@a@[0..\@low@) & = \rot{(\@old a@)}{\@r@}[0..\@low@)\,,  \\
\@a@[\@high@..\@N@) & = \rot{(\@old a@)}{\@r@}[\@high@..\@N@)\,.
\end{split}
\label{eq:rotation-swap-iter-essential-1}
\end{equation}
The mid slice has to be rotated; precisely, slices \<a[low..p)> and \<a[p..high)> have to be swapped relative to each other:
\begin{equation}
\@p@ - \@low@ < \@high@ - \@low@ \limpl
\left(
\begin{array}{l}
\forall i\colon
\@low@ \leq i \leq \@high@
\quad \limpl \\
\rot{(\@a@[\@low@..\@high@))}{\@p@ - \@low@}[i - \@low@]
=
\rot{(\@old a@)}{\@r@}[i]
\end{array}
\right)\!.
\label{eq:rotation-swap-iter-essential-2}
\end{equation}
Invariant \eqref{eq:rotation-swap-iter-essential-2} is required only to make the other essential invariant inductive, not to establish the postcondition which follows from \eqref{eq:rotation-swap-iter-essential-1} alone upon exiting the loop.
Notice the index shift in the left-hand side of the equality in \eqref{eq:rotation-swap-iter-essential-2}: the first element (index $0$) of the rotation of sequence \<a[low..high)> by \<p - low> corresponds to the element at position \<low> in the rotation of the whole \<a> by \<r = p>.

Without the antecedent $\@p@ - \@low@ < \@high@ - \@low@$, \eqref{eq:rotation-swap-iter-essential-2} would not be inductive in the case \<p - low = high - p>: in this case, \<$r =\ $ p - low = high - low = 0 = $N$> after updating \<low> and \<high>, but \eqref{eq:rot-def} is undefined if $r = N$.
Then, proving inductiveness of the essential invariants in the ``left is smaller'' and ``right is smaller'' cases crucially relies on Lemma~\ref{lm:rot-swap}, following the same overall argument as the proof of the recursive version \<rotate_swap>.
Consider the \emph{right is smaller} case: \<p - low \> high - p>.
The call to \<swap_sections> in the corresponding branch of \<rotate_swap_iterative>'s loop (line~\ref{cl:rswap:iter:right} in Figure~\ref{fig:rotation-swap-algo-iter}) swaps $X = \@(old a)[low..low + (high - p))@$ with $Z = \@(old a)[p..high)@$, while leaving $Y =  \@(old a)[low + (high - p)..p)@$ untouched.
Thus, \<a[low..high)> consists of $\cat{Z}{\cat{Y}{Z}}$ after the swap.
For $N = \@high@ - \@low@$, $d = \@high@ - \@p@$, $\xq = X$, $\yq = Y$, and $\zq = Z$, \eqref{eq:lemma-rot-swap-right} shows that $Z$ is in place, whereas \<a[low + (high - p)..high)> must be rotated by $N - 2d = 2\@p@ - \@high@ - \@low@$.
After incrementing \<low> by \<high - p> (on line~\ref{cl:rswap:iter:right-index} in Figure~\ref{fig:rotation-swap-algo-iter}), this corresponds to a rotation by \<p - low> of \<a[low..high)>, thus establishing that \eqref{eq:rotation-swap-iter-essential-2} is inductive in this case.

\subsection{Rotation by modular visit}  \label{sec:algos:modular}
The rotation by modular visit algorithm has the property that it directly moves elements into their final position. 
To understand how it works, we see $\rot{}{r}$ as a permutation of the set $\setn{N}$---that is as the mapping $k \mapsto \wrap{(k + (N - r))}{N}$ defined in \eqref{eq:lemma-rot-2}.
(Remind that we only deal with left rotations: $0 \leq r < N$.)

\paragraph{Cycle decomposition of rotations.}
As recalled in Section~\ref{sec:modular-arithmetic}, $\rot{}{r}$ has a unique decomposition in disjoint cycles.
The first cycle starts from the element at index $0$, goes through the elements at indexes 
\begin{alignat*}{3}
      0 
& \quad\to\quad \wrap{(0 + (N - r))}{N} 
& \quad\to\quad \wrap{(0 + 2(N - r))}{N}
& \quad\to\quad \cdots
\end{alignat*}
until it reaches index $0$ again.
Similarly, the second cycle goes through  
\begin{alignat*}{3}
      1 
& \quad\to\quad \wrap{(1 + (N - r))}{N} 
& \quad\to\quad \wrap{(1 + 2(N - r))}{N}
& \quad\to\quad \cdots
\end{alignat*}
until it reaches $1$ again.
And a generic cycle that starts from $s$ is
\begin{alignat}{3}
      s 
& \quad\to\quad \wrap{(s + (N - r))}{N} 
& \quad\to\quad \wrap{(s + 2(N - r))}{N}
& \quad\to\quad \cdots
\label{eq:cycle-at-s}
\end{alignat}
until $s$.

The number of elements in each cycle is the smallest positive integer $t$ such that $\wrap{s + (t(N - r))}{N} = s$, which we equivalently express as the modular equation 
\begin{equation}
t(N - r) \equiv 0 \pmod N\,.
\label{eq:mod-period}
\end{equation}
The Linear Congruence Theorem~\cite[Th.~1.6.14]{NumberTheory}\footnote{Also:~\url{http://en.wikipedia.org/wiki/Linear_congruence_theorem}.} says that \eqref{eq:mod-period} has solutions for $t \in \{ k N / \gcd(N, N - r) \mid k \in \integers \}$.
The smallest positive integer in this set is obviously $N / \gcd(N, N - r)$, which is then the length of each cycle.

\begin{figure}[!htb]
\begin{lstlisting}
rotate_modulo (a: ARRAY[G]; r: INTEGER)
  require 0 < r < N
  ensure a = $\rot{\!}{\text{r}}$(old a)
  local start, moved, v: INTEGER; displaced: G
  do
     start := 0                 (*\label{cl:rmodulo:start-init}*)
     moved := 0
     while moved /= N           (*\label{cl:rmodulo:outerloop}*)
     invariant
        $0$ <= moved <= N
        $0$ <= start <= $\gcd$(N, N - r)
        moved = start $\!\!\cdot\!\!$ xxic
        forall $i, s, p$: $0$ <= $i$ < xxic$\,\!$ land $\,0$ <= $s$ < start land $p$ = xxmp($s, i$)
                  $\,$limplies a[$p$] = $\rot{\!}{\text{r}}$(old a)[$p$]
     do
        displaced := a[start]
        v := start   (*\label{cl:rmodulo:v-init}*)
        repeat      (*\label{cl:rmodulo:innerloop}*)
           v := v + N - r                     (*\label{cl:rmodulo:next}*)
           // wrap over (*a*)'s bounds
           if v >= N then v := v - N end      (*\label{cl:rmodulo:wrap}*)
           // swap (*a[v]*) and (*displaced*)
           a[v], displaced := displaced, a[v] (*\label{cl:rmodulo:swap}*)
           moved := moved + 1                 (*\label{cl:rmodulo:incmoved}*)
        invariant
          0 < moved - start $\!\!\cdot\!\!$ xxic $\,\leq$ xxic
          v = xxmp(start, moved - start $\!\!\cdot\!\!$ xxic)
          displaced = (old a)[v]
          forall $i, s, p$: $0$ <= $i$ < xxic$\,\!$ land $\,0$ <= $s$ < start land $p$ = xxmp($s, i$)
                     $\,$limplies a[$p$] = $\rot{\!}{\text{r}}$(old a)[$p$]
          forall $j, q$: $0$ < $j$ <= moved - start $\!\!\cdot\!\!$ xxic
                  $\,$land $q$ = xxmp(start, $j$) $\,$limplies $\,$ a[$q$] = $\rot{\!}{\text{r}}$(old a)[$q$]
        until v = start end    (*\label{cl:rmodulo:inner-exit}*)
        start := start + 1   (*\label{cl:rmodulo:incstart}*)
     end
  end
\end{lstlisting}
  \caption{Rotate array \<a> by \<r> to the left through modular visit.}
  \label{fig:rotation-by-modulo-algo}
\end{figure}

\paragraph{Rotation by visiting cycles.}
We finally have all elements to describe the rotation by modular visit algorithm, presented in Figure~\ref{fig:rotation-by-modulo-algo} and demonstrated on the running example in Figure~\ref{fig:rotation-by-modulo-example}.
The basic idea is to go through elements in the order given by the decomposition, one cycle at a time until all elements are moved.
During the visit, the element originally at position $0$ moves to position $\wrap{(0 + (N - r))}{N}$; the element originally at $\wrap{(0 + (N - r))}{N}$ moves to $\wrap{(0 + 2(N - r))}{N}$, and so on for all elements in the cycle.
Thanks to the unique decomposition property of permutations, this procedure eventually reaches all elements in the sequence; when they are all moved, the whole sequence has been rotated in place.

In the implementation of Figure~\ref{fig:rotation-by-modulo-algo}, 
the outermost loop (line~\ref{cl:rmodulo:outerloop}) performs a series of cyclic visits starting with the element at index \<start>---which is $0$ initially (line~\ref{cl:rmodulo:start-init}).
Variable \<moved> is a counter that records the number of number of elements that are in place; correspondingly, the outermost loop exits when \<moved = N> and the rotation is complete.

The inner loop (line~\ref{cl:rmodulo:innerloop}) actually performs the visits of the cycles; precisely, each iteration of the outer loop executes the inner loop to completion for the current value of \<start>, which visits all elements in the cycle beginning at \<start> as follows.
% In the running example (Figure~\ref{fig:rotation-by-modulo-example}) the first cycle has \<start = 0>.
With every iteration of the inner loop, a local variable \<v> takes on the values in the cycle beginning at \<start>: \<start>, \<start + N - r>, and so on, where each new value of \<v> (line~\ref{cl:rmodulo:next}) is wrapped over when it overflows \<N> (line~\ref{cl:rmodulo:wrap}).
After updating \<v>, the inner loop exchanges \<a[v]> with the element at the \emph{previous} position in the cycle (line~\ref{cl:rmodulo:swap}), which is stored in variable \<displaced> (initially equal to \<a[start]> and successively updated after updating \<v>). It then continues with the next iteration.
In the running example, the first iteration of the inner loop begins with $\@displaced@ = \@a@[\@0@] = \text{A}$ (top left in Figure~\ref{fig:rotation-by-modulo-example});  which it writes to position \<0 + N - r = 4> (its position in the rotation by \<r>) while saving \<(old a)[4]> into \<displaced> for the next iteration (mid right in Figure~\ref{fig:rotation-by-modulo-example}).

Earlier in this section, we established that each cycle has $\@N@ / \gcd(\@N@, \@N@ - \@r@)$ elements.
Hence, the inner loop iterates this many times before reaching the exit condition \<v = start> (line~\ref{cl:rmodulo:inner-exit}).
In the running example, $\@N@ = 6$, $\@r@ = 2$, and $\gcd(6, 4) = 2$, and in fact the inner loop has put $6/2 = 3$ elements in place when it reaches \<start> again (bottom right in Figure~\ref{fig:rotation-by-modulo-example}).
The outer loop correspondingly performs exactly $\@N@ / (\@N@ / \gcd(\@N@, \@N@ - \@r@)) = \gcd(\@N@, \@N@ - \@r@)$ iterations, which is when the last cycle in the decomposition is visited (mid left in Figure~\ref{fig:rotation-by-modulo-example}, where the outer loop iterates twice).

\begin{figure}[!ht]
\centering
\begin{tikzpicture}
[
  node distance=1pt,
  pin distance=2.5mm,
  every pin edge/.style={<-, shorten <=1pt, thick,black}
]

  \node (A) [generic-item,pin=above:{\<start>},pin=below:{\<v>}] {A};
  \node (B) [generic-item,right=of A] {B};
  \node (C) [generic-item,right=of B] {C};
  \node (D) [generic-item,right=of C] {D};
  \node (E) [generic-item,right=of D] {E};
  \node (F) [generic-item,right=of E] {F};

  \node (dis) [left-item,below=8mm of D,label=left:\<displaced:>] {A};

  \node (A2) [generic-item,right=25mm of F,pin=above:{\<start>}] {A};
  \node (B2) [generic-item,right=of A2] {B};
  \node (C2) [generic-item,right=of B2] {C};
  \node (D2) [generic-item,right=of C2] {D};
  \node (E2) [left-item,right=of D2,pin=below:{\<v>}] {A};
  \node (F2) [generic-item,right=of E2] {F};

  \node (dis2) [right-item,below=8mm of D2,label=left:\<displaced:>] {E};

  \node (A3) [generic-item,below=25mm of A2,pin=above:{\<start>}] {A};
  \node (B3) [generic-item,right=of A3] {B};
  \node (C3) [right-item,right=of B3,pin=below:{\<v>}] {E};
  \node (D3) [generic-item,right=of C3] {D};
  \node (E3) [left-item,right=of D3] {A};
  \node (F3) [generic-item,right=of E3] {F};

  \node (dis3) [right-item,below=8mm of D3,label=left:\<displaced:>] {C};

  \node (A4) [right-item,below=25mm of A3,pin=above:{\<start>},pin=below:{\<v>}] {C};
  \node (B4) [generic-item,right=of A4] {B};
  \node (C4) [right-item,right=of B4] {E};
  \node (D4) [generic-item,right=of C4] {D};
  \node (E4) [left-item,right=of D4] {A};
  \node (F4) [generic-item,right=of E4] {F};

  \node (dis4) [left-item,below=8mm of D4,label=left:\<displaced:>] {A};

  \node (F5) [generic-item,left=25mm of A4] {F};
  \node (E5) [left-item,left=of F5] {A};
  \node (D5) [generic-item,left=of E5] {D};
  \node (C5) [right-item,left=of D5] {E};
  \node (B5) [generic-item,left=of C5,pin=above:{\<start>},pin=below:{\<v>}] {B};
  \node (A5) [right-item,left=of B5] {C};

  \node (dis5) [left-item,below=8mm of D5,label=left:\<displaced:>] {B};

  \node (F6) [left-item,left=25mm of A3] {B};
  \node (E6) [left-item,left=of F6] {A};
  \node (D6) [right-item,left=of E6] {F};
  \node (C6) [right-item,left=of D6,pin=above:{\<start>}] {E};
  \node (B6) [right-item,left=of C6,pin=below:{\<v>}] {D};
  \node (A6) [right-item,left=of B6] {C};

  \node (dis6) [generic-item,below=8mm of D6,label=left:\<displaced:>] {D};

  \begin{scope}[-latex,thick]
  \path ($(F.east)+(0,-7mm)$) edge ($(A2.west)+(0,-7mm)$);
  \path ($(F2.east)+(0,-6mm)$) edge[bend left] ($(F3.east)+(0,-6mm)$);
  \path ($(F3.east)+(0,-8mm)$) edge[bend left] ($(F4.east)+(0,-7mm)$);
  \path ($(A4.west)+(0,-7mm)$) edge ($(F5.east)+(0,-7mm)$) ;
  \path (A.west) edge[bend right=45,dotted] node[left] {$\rot{}{2}$} (A6.west);
  \path ($(A5.west)+(0,-7mm)$) edge[bend left,dashed] ($(A6.west)+(0,-7mm)$);
  \end{scope}
\end{tikzpicture}
\caption{Rotating sequence $\text{A}\,\text{B}\,\text{C}\,\text{D}\,\text{E}\,\text{F}$ by $2$ through modular visit.}
\label{fig:rotation-by-modulo-example}
\end{figure}

According to Gries and Mills~\cite{swapping-TR}, the ``jumping around'' pattern of the cyclic visits suggested the name ``dolphin algorithm'' by which it is sometimes referred to---like a dolphin that leaps out of water and plunges back into it someplace forth.

\subsubsection{Rotation by modular visit: complexity} \label{sec:modulo:complexity}
The illustration of the algorithm suggests the complexity of rotation by modular visit.
Clearly, only a finite amount of scratch memory is needed; hence the space complexity is $\Theta(1)$.
The outer loop iterates $\gcd(N - r, N)$ times, each of which sees the inner loop iterate $N / \gcd(N - r, N)$; hence the time complexity is $\Theta(N)$.

This corresponds to $N$ array writes (one per element put in place).
Gries and Mills~\cite{swapping-TR} present a variant of the algorithm that puts the elements in place in each cycle backwards, using $N + \gcd(N - r, N)$ array \emph{accesses}: one for each element plus $\gcd(N - r, N)$ to temporarily save the array value put in place last and overwritten first (i.e., \<a[start]>).
If we count swapping a pair of array elements as three array accesses (using a temporary variable for the swap), this variant of the modular visit algorithm performs the fewest number of array writes among the rotation algorithms.
Even in the form of Figure~\ref{fig:rotation-by-modulo-algo}, rotation by modular visit has the property that it swaps elements directly into their final position (using \<displaced> as pivot).

Nevertheless, Table~\ref{tab:complexity-overall} suggests that the algorithm does not scale well in practice.
While we have not thoroughly investigated the reasons for this lackluster practical performance, it might have to do with (lack of) locality in access: the ``jumping around'' of modular visits accesses non-adjacent elements which may generate many cache misses when a large array cannot be stored in the fastest level of the memory hierarchy.\footnote{It is somewhat surprising that method \<rotate> of \<java.util.Collections> in OpenJDK 6 uses rotation by modular visit (with an implementation very similar to the one used for the experiments reported in Table\ref{tab:complexity-overall}) not only for small collections but also whenever a collection supports constant-time random access---as in arrayed lists.}

\subsubsection{Rotation by modular visit: correctness} \label{sec:modulo:proof}

\begin{minipage}[c][4em][c]{0.8\textwidth}
\begin{quote}
\emph{A proof that it works is remarkably difficult\ldots} \\
--- Richard Bornat about rotation by modular visit~\cite{Bornat-slides}
\end{quote}
\end{minipage}

We introduce the loop invariants necessary to prove correctness; we then discuss how to prove their inductiveness.

\paragraph{Loop invariants.}
The formal analysis starts with the outer loop: each iteration visits (and puts in place) all elements whose index is in the cycle that begins at \<start>.
The bounding invariants
\begin{align}
0 &\leq \@moved@ \leq \@N@\,,
\label{eq:modular:bounding-moved} \\
0 &\leq \@start@ \leq \gcd(\@N, N@ - \@r@)\,,
\label{eq:modular:bounding-start}
\end{align}
are then easy to justify (but not so easy to prove!).
\eqref{eq:modular:bounding-moved} follows from the number of moved elements being initially zero; and the outer loop exiting when all \<N> elements have been moved.
\eqref{eq:modular:bounding-start} captures the fact that the outer loop executes exactly $\gcd(\@N@, \@N@ - \@r@)$ times---as each cycle visits $\@N@ / \gcd(\@N@, \@N@ - \@r@)$ elements.
Since the quantity $\@N@ / \gcd(\@N@, \@N@ - \@r@)$ will feature often in the invariants and proof of the algorithm, we give it an abbreviation:
\begin{equation}
\ic{N}{N - r} \quad=\quad 
\frac{N}{\gcd(N, N - r)}\,.
\label{eq:ic-def}
\end{equation}

The first usage of this definition is to express an exact relation between \<moved> and \<start>.
As we repeatedly discussed, every cycle consists of exactly \<xxic> elements, and \<start> is incremented on line~\ref{cl:rmodulo:incstart} after every cycle is completed; hence the invariant 
\begin{equation}
\@moved@ = \@start@ \cdot \@xxic@\,.
\label{eq:modular:bounding-moved-from-start}
\end{equation}

The outer loop's essential loop invariant precisely characterizes the elements put in place by each iteration of the loop.
Given an initial value $\overline{s}$ of \<start>, such an iteration visits all elements whose indexes are in the cycle of \<xxic> elements defined by \eqref{eq:cycle-at-s} for $s = \overline{s}$.
Using the abbreviation
\begin{equation}
\mpos{N}{M}{s}{k}
\quad=\quad
\wrap{(s + kM)}{N}
\label{eq:mp-def}
\end{equation}
to denote the $k$th index in the cycle of step $M$ modulo $N$ that starts at $s$, the essential loop invariant is
\begin{equation}
\forall i, s, p\colon
\left(
\begin{array}{cl}
& 0 \leq i < \@xxic@ \\
\land
& 0 \leq s < \@start@ \\
\land
& p = \@xxmp@(s, i)
\end{array}
\right)
\limpl
\@a@[p] = \rot{(\@old a@)}{\@r@}[p]\,.
\label{eq:modular:outer:essential}
\end{equation}
That is, all elements of all cycles originating in values of \<start> less than the current one have been put in place.

Moving on to the inner loop,\footnote{Notice it is a \<repeat...until> loop, and hence initiation for its invariants means that they have to hold after one iteration.}
variable \<v> takes on all the values in the currently visited cycle (beginning at \<start> on line~\ref{cl:rmodulo:v-init}).
Since all cycles previously visited have the same length \<xxic>, we can express the value of \<v> as a function of \<start> and \<moved>:
\begin{equation}
\@v@ =
\@xxmp@ (\@start, moved@ - \@start@\cdot\@xxic@)\,.
\label{eq:modular:inner:var-v}
\end{equation}
Expression $\@moved@ - \@start@\cdot\@xxic@$ is $1$ initially (that is, after the first unconditionally executed loop iteration); it is \<xxic> when the inner loop exits with \<v = start>.
Hence the bounding loop invariant
\begin{equation}
\@0@ < \@moved@ - \@start@\cdot\@xxic@ \leq \@xxic@\,.
\label{eq:modular:inner:bounding-expr}
\end{equation}
Given the current value of \<v>, \<displaced> is simply the value in \<a> initially at index \<v>:
\begin{equation}
\@displaced@ =
(\@old a@)[\@v@]\,.
\label{eq:modular:inner:displaced}
\end{equation}

The outer loop's essential invariant \eqref{eq:modular:outer:essential} is also maintained by the inner loop: \eqref{eq:modular:outer:essential} only involves elements whose indexes are in a cycle starting at some $s < \@start@$, but these cycles are disjoint from the currently visited cycle (which begins at \<start>).
To describe partial progress made by the inner loop in visiting the current cycle, we introduce another essential invariant:
\begin{equation}
\forall j, q\colon
\left(
\begin{array}{cl}
& 0 < j \leq \@moved@ - \@start@\cdot\@xxic@ \\
\land
& q = \@xxmp@(\@start,@\, j)
\end{array}
\right)
\limpl
\@a@[q] = \rot{(\@old a@)}{\@r@}[q]\,.
\label{eq:modular:inner:essential}
\end{equation}
Quantified variable $j$ determines the position in the current cycle; correspondingly, $j$'s range of quantification excludes $0$, since $\@a@[\@xxmp@(\@start, 0@)]$ is set last, and includes $\@moved@ - \@start@\cdot\@xxic@$, corresponding to the element set in the latest loop iteration.

\paragraph{Proving initiation.}
Initiation is trivial for the outer loop invariants, so let's focus on initiation for the inner loop invariants.
The outer loop's \eqref{eq:modular:bounding-moved-from-start} still holds at the beginning of the inner loop body, since neither \<start> nor \<moved> has changed.
Thus, incrementing \<moved> at the end of the inner loop body makes $\@moved@ - \@start@\cdot\@xxic@ = 1$, which proves initiation of \eqref{eq:modular:inner:bounding-expr}.
Since $\@xxmp@(\@start, 1@)$ is $\wrap{(\@start@ + \@N@ - \@r@)}{\@N@}$, \eqref{eq:modular:inner:var-v} also holds initially.
For similar reasons, \<displaced> stores the value originally at index \<v> that has just been assigned to; hence \eqref{eq:modular:inner:displaced} initially.

We already discussed that the outer loop essential invariant's validity is not affected by the inner loop's work; hence \eqref{eq:modular:outer:essential} satisfies initiation and consecution with respect to the inner loop.
Finally, initiation for \eqref{eq:modular:inner:essential} amounts to proving that the value assigned to $\@a@[\@xxmp@(\@start, @1)]$ in the first execution of line~\ref{cl:rmodulo:swap} is the one of $\rot{(\@old a@)}{\@r@}$; this can be done by matching the definitions of \<xxmp> and of rotation \eqref{eq:lemma-rot-2}.

\paragraph{Proving consecution.}
The outer loop's bounding invariant \eqref{eq:modular:bounding-start} is unaffected by the inner loop, which does not modify \<start>.
Its inductiveness follows from the bound \eqref{eq:modular:bounding-moved} on \<moved> and on the connection \eqref{eq:modular:bounding-moved-from-start} between the latter and \<start>.

For the consecution proofs of the remaining outer loop invariant, we rely on the inner loop invariants.
When the inner loop terminates, \<v = start>; through \eqref{eq:modular:inner:var-v}, it follows that $\@moved@ - \@start@\cdot\@xxic@ = \@xxic@$; hence \eqref{eq:modular:bounding-moved-from-start} is restored after incrementing \<start> by one on line~\ref{cl:rmodulo:incstart}.

We now move to proving inductiveness of the inner loop invariants.
Since the inner loop exits when \<v = start>, \eqref{eq:modular:inner:var-v} implies that the increment of \<moved> does not overflow \<xxic> relative to the initial value at the current outer loop iteration; hence \eqref{eq:modular:inner:bounding-expr} is maintained.
Conversely, \eqref{eq:modular:inner:var-v} is maintained because the inner loop body implements the definition of \<xxmp> with respect to the current \<moved> that is incremented by one.
Along the same lines one can prove that \eqref{eq:modular:inner:displaced} is inductive.

Disjointness of the cycles visited by the inner loop ensures that \eqref{eq:modular:outer:essential} is also maintained.
Proving consecution of \eqref{eq:modular:inner:essential} is more involved.
Thanks to the inductive hypothesis, we only have to establish progress for $q = \@v@$.
The inner loop, on line~\ref{cl:rmodulo:swap}, assigns to \<a[v]> the value that was assigned to \<displaced> in the \emph{previous} loop iteration; from \eqref{eq:modular:inner:displaced}, noting that \<moved> has just been incremented, this is the element in \<old a>  whose index is given by the previous value in the cycle, that is the previous value of \<v>.
This is in fact $\rot{(\@old a@)}{\@r@}[\@v@]$ because of how \<xxmp> is defined.

\paragraph{Final correctness.}
The final step in the correctness proof of \<rotate_modulo> is establishing the postcondition from the outer loop invariants.
When the outer loop terminates, \<moved = N>; \eqref{eq:modular:bounding-moved-from-start} implies that $\@start@ = \gcd(\@N@, \@N@ - \@r@)$.
Therefore, proving the postcondition boils down to the following lemma.
\begin{lemma} \label{lm:invert-mp}
Given $r$ and $N$ satisfying the precondition of \<rotate_modulo>: for every $0 \leq k < N$, there exist $0 \leq i < \ic{N}{N - r}$ and $0 \leq s < \gcd(N, N - r)$ such that $\mpos{N}{N - r}{s}{i} = k$.
\end{lemma}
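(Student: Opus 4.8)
The plan is to read the statement as a surjectivity claim for the map that indexes the cycle decomposition, and to reduce surjectivity to injectivity by a cardinality count. Put $g = \gcd(N, N - r)$ and $M = N - r$; by the definition \eqref{eq:ic-def} we have $\ic{N}{N-r} = N/g$, and since $g \mid N$ and $0 < r < N$ both $g$ and $M$ are positive. Consider the map
\[
\Phi : \{0, \ldots, g - 1\} \times \{0, \ldots, \ic{N}{N-r} - 1\} \longrightarrow \setn{N}, \qquad \Phi(s, i) = \mpos{N}{N-r}{s}{i},
\]
which is well defined, i.e. lands in $\setn{N}$, by \eqref{eq:wrap_bounds} applied to $s + iM \geq 0$ and $N > 0$. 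Its domain has $g \cdot \ic{N}{N-r} = g \cdot (N/g) = N$ elements, the same as the cardinality $N$ of $\setn{N}$; hence it is enough to show that $\Phi$ is injective, since then $\Phi$ is a bijection, in particular surjective, which is exactly the lemma (given $k$, take $(s, i) = \Phi^{-1}(k)$).

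For injectivity, recall from the remark following \eqref{eq:wrap-def} that $\wrap{x}{N}$ coincides with $x \bmod N$ whenever it is defined, so by \eqref{eq:mp-def} we have $\mpos{N}{N-r}{s}{i} \equiv s + iM \pmod N$. Suppose $\Phi(s, i) = \Phi(s', i')$; then $s + iM \equiv s' + i'M \pmod N$. Reducing this congruence modulo $g$ and using $g \mid N$ and $g \mid M$ yields $s \equiv s' \pmod g$, and since $0 \leq s, s' < g$ this forces $s = s'$. Cancelling the common summand leaves $(i' - i) M \equiv 0 \pmod N$.

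It remains to deduce $i = i'$, and this is where the number-theoretic input already assembled for the complexity analysis applies verbatim: as recalled around \eqref{eq:mod-period}, the Linear Congruence Theorem describes the solution set of $t M \equiv 0 \pmod N$ as $\{ k \cdot \ic{N}{N-r} : k \in \mathbb{Z}\}$. Hence $\ic{N}{N-r}$ divides $i' - i$, and since $|i' - i| < \ic{N}{N-r}$ by the range of $i$ and $i'$, we get $i' = i$. Thus $\Phi$ is injective, hence bijective, and the lemma follows.

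I do not expect a genuine obstacle: the one nontrivial ingredient --- that each cycle has exactly $\ic{N}{N-r}$ distinct elements, equivalently that $N/g$ is the least period of adding $M$ modulo $N$ --- was already established earlier in the excerpt through the Linear Congruence Theorem, so the argument is essentially bookkeeping. The only places that need a little care are making the ``an injective map between equinumerous finite sets is surjective'' step explicit, and the reduction modulo $g$. A more hands-on alternative would avoid the counting trick by showing directly that for each fixed $s$ the image $\{\mpos{N}{N-r}{s}{i} : 0 \leq i < \ic{N}{N-r}\}$ is precisely the residue class $\{ k \in \setn{N} : k \equiv s \pmod g\}$, and that these $g$ classes partition $\setn{N}$; that route, however, relies on exactly the same minimal-period fact.
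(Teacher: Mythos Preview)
Your proof is correct and takes a genuinely different route from the paper's. The paper argues constructively: given $k$, it produces an explicit $s$ (via a residue computation) and then solves the linear congruence $i\cdot(N-r)/g \equiv (k-s)/g \pmod{N/g}$ for $i$ using B\'ezout's identity and the extended Euclidean algorithm. You instead argue by counting: you show the map $\Phi$ is injective (reducing modulo $g$ to pin down $s$, then invoking the minimal-period fact from~\eqref{eq:mod-period} to pin down $i$), and conclude surjectivity from the equality of cardinalities $g\cdot(N/g)=N$. Both routes rest on the same number-theoretic input---the Linear Congruence Theorem already quoted near~\eqref{eq:mod-period}---so neither is ``deeper'' than the other. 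Your argument is cleaner and shorter; the paper's has the advantage of being constructive, which fits its emphasis on mechanized proofs (indeed, the Boogie and Dafny encodings later introduce an explicit inverse function \texttt{yp} that mirrors the paper's construction). Your closing remark about the residue-class partition is apt: that is essentially the content of your injectivity step unpacked, and it is the form in which the axioms on lines~\ref{ax:mp-ic-1}--\ref{ax:mp-ic-2} of Figure~\ref{fig:boogie:mp-gcd-ic-def} are stated.
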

\begin{proof}
The lemma ultimately follows from the property of decomposition in cycles of permutations (Section~\ref{sec:modular-arithmetic}).

Since $k$ is listed in some cycle \eqref{eq:cycle-at-s}, the three variables $k, s, i$ satisfy $s + i(N - r) \equiv k \pmod N$, which indicates that $s = k \bmod (N - r)$.
It follows, from standard properties of modular arithmetic~\cite[Sec.~1.6.2]{NumberTheory}, that $N - r$ divides $k - s$.
Hence, we can rewrite the expression that relates $k$, $s$, and $i$---where now only $i$ is unknown---as
\begin{equation}
i \cdot \frac{N - r}{g} \equiv \frac{k - s}{g} \left(\operatorname{mod} \frac{N}{g}\right),
\label{eq:solve-for-i}
\end{equation}
for $g = \gcd(N, N - r)$.
\eqref{eq:solve-for-i} has exactly one solution: use the extended Euclidean algorithm to find $x$ and $y$ such that $x (N - r)/g + y N / g = \gcd((N - r)/g, N/g) = 1$ satisfies B\'ezout's identity.
Then, $i = x(k - s)/g$ is the unique solution.
\end{proof}

%%% Local Variables: 
%%% mode: latex
%%% TeX-master: t
%%% End: 

\section{Rotation: mechanized proofs in Boogie}
\label{sec:all-boogie-proofs}
\lstset{language=boogie}

Notwithstanding our efforts to be as rigorous as possible in the correctness arguments of Section~\ref{sec:algos}, there still is substantial ground to cover before we can have mechanized proofs.
Part of the remaining gap is due to the unforgiving level of precision that is required by mechanical proof tools; another part is more specific to the nature of a specific tool we may choose, such as its level of automation and limitations.
In this section, we provide a detailed account of what is necessary to turn the proof ideas of Section~\ref{sec:algos} into successful verification using Boogie~\cite{BoogieManual}.\footnote{The presentation assumes basic familiarity with the Boogie language and tool.}

Boogie is an \emph{auto-active} tool, providing a level of automation intermediate between completely automatic (such as in static analyzers) and interactive (such as in proof assistants).
In practice, users interact with the tool offline by providing \emph{annotations} (such as assertions and lemmas) that guide proof attempts.

Boogie is mainly used as an intermediate language for verification; hence we will have to provide annotations at a relatively low level of detail.
This will turn out to be instructive and will showcase several fundamental categories of annotations and annotation styles that are present, in one form or another, in practically every auto-active tool---and possibly in other kinds of tools as well.

The four rotation algorithms make for a gradual introduction to these features of automated verification, as each of them requires new specific annotation techniques:
\begin{description}
\item[Rotation by copy] is simple enough that it only requires basic definitions; we can replicate the proof essentially as done on paper.

\item[Rotation by reversal] requires intermediate assertions to guide the prover and explicit \emph{lemmas} proved separately and applied where appropriate in the main correctness proof.

\item[Rotation by swapping] requires lemmas with non-trivial proofs and modularization tailored to the proof outline; it also requires a little usage of \emph{triggers} to curb instantiation patterns of the underlying automatic theorem prover.

\item[Rotation by modular visit] requires clever axiomatization, as well as non-trivial \emph{ghost code} added to the implementation specifically to represent additional information about program state required to justify the correctness proof; a \emph{framing} specification is of the essence.
\end{description}

In each case, the features used (such as ghost code or framing annotations) are not necessarily the only way to carry out a proof of that algorithm using Boogie.
However, they support a natural approach, and one that is often idiomatic to using auto-active tools of the same family.

\subsection{Sequences and rotated sequences} \label{sec:boogie:theory}

Before delving into the details of the algorithms, Figure~\ref{fig:boogie:seq-rot-def} introduces some basic definitions that we will use in all the Boogie annotations and proofs.
As arrays, we use Boogie \emph{maps} from integers to integers (type \<[int]int>).
While we could use a generic type as codomain, sticking to plain integers generally works better as it has better support with the underlying SMT solver (in other words, it requires fewer explicit axioms).

\begin{figure}[!tb]
\begin{boogie}
// Sequence of (*a[low..high)*)
function seq(a: [int]int, low: int, high: int) returns([int]int);
axiom (forall a: [int]int, low: int, high: int, i: int :: 
   0 <= i && i < high - low ==> seq(a, low, high)[i] == a[low + i]);

// Definition (*\eqref{eq:wrap-def}*): (*i wrap N*)
function wrap(i: int, N: int) returns(int);
axiom (forall i, N: int :: 0 <= i && i < N ==> wrap(i, N) == i);
axiom (forall i, N: int :: (*\label{cl:axiom:wrap-2}*)
        0 < N && N <= i ==> wrap(i, N) == wrap(i - N, N));

// Left-rotated sequence of (*a[low..high)*) by (*r*)
function rot(a: [int]int, low: int, high: int, r: int)
         returns([int]int);
axiom (forall a: [int]int, low: int, high: int, i, r: int :: 
  0 <= r && r < high - low && 0 <= i && i < r ==> 
  rot(a, low, high, r)[i + high - low - r] == seq(a, low, high)[i]);
axiom (forall a: [int]int, low: int, high: int, i, r: int :: 
  0 <= r && r < high - low && r <= i && i < high - low ==> 
  rot(a, low, high, r)[i - r] == seq(a, low, high)[i]);
\end{boogie}
\caption{Boogie declarations and axiomatic definitions of array slice sequence \<seq>, \<wrap>, and rotated sequence \<rot>.}
\label{fig:boogie:seq-rot-def}
\end{figure}

First, it is convenient to have an explicit representation of array slices as \emph{sequences}.
Since we still using maps to represent sequences, this amounts to a scaling of indexes, so that the lower index of the sequence corresponding to \<a[low..high)> is zero.
Thus, function \<seq> takes a map \<a>, a lower index \<low>, and an upper index \<high> and returns another map whose content over indexes \<[0..high - low)> coincides with \<a[low..high)>.

The two axioms defining \<wrap> replicate definition \eqref{eq:wrap-def} verbatim.
And the definition of rotated sequence follows \eqref{eq:rot-def}: precisely, function \<rot> takes a slice \<a[low..high)> and a rotation coefficient \<r> and returns a sequence (that is, a zero-based map) representing the left rotation of \<a[low..high)> by \<r>.
Since we only consider nonnegative values of \<r>, the axiomatic definition of \<rot> consists of two axioms, in the same order as definition \eqref{eq:rot-def}: the first one describes the head of \<(high - low) - r> elements; and the second one describes the tail of \<r> elements of the rotated sequence.

Using functions \<seq> and \<rot>, Figure~\ref{fig:boogie:seq-rot-def} shows the generic signature and input/output specification of a Boogie procedure that performs rotation.
Since input arguments are read only in Boogie, \<rotate> returns another map \<b> whose content represents the input slice \<a[0..N)> after processing.
The rest is as in Figure~\ref{fig:rotate-specification}, but we have to make explicit, in the postcondition, the quantification over range that was implicit in the notation $\@a@ = \rot{(\@old a@)}{\@r@}$.

\begin{figure}[!hbt]
\begin{boogie}
procedure rotate(a: [int]int, N: int, r: int) returns(b: [int]int)
  requires 0 < r && r < N;
  ensures (forall i: int :: 
    0 <= i && i < N ==> seq(b, 0, N)[i] == rot(a, 0, N, r)[i]);
\end{boogie}
  \caption{Rotate array \<a> by \<r> to the left: signature and specification in Boogie.}
  \label{fig:boogie:rotate-specification}
\end{figure}

\subsection{Rotation by copy: mechanized proofs as on paper} \label{sec:boogie:copy}

Rotation by copy retains most of its simplicity in Boogie.
As Figure~\ref{fig:boogie:rotate-copy} shows, the same implementation and loop invariants of Figure~\ref{fig:rotation-by-copy-algo} work in Boogie.

\begin{figure}[!tb]
\begin{boogie}
// Rotate (*a[0..N)*) to the left by (*r*) by copying
procedure rotate_copy(a: [int]int, N: int, r: int) 
          returns(b: [int]int)
  requires 0 < r && r < N;
  ensures (forall i: int :: 
    0 <= i && i < N ==> seq(b, 0, N)[i] == rot(a, 0, N, r)[i]);
{
   var s, d: int;
   s, d := 0, N - r;
   while (s < N)
   invariant 0 <= s && s <= N;
   invariant d == wrap(s + N - r, N);
   invariant (forall i: int :: 0 <= i && i < s ==> (*\label{cl:boogie:copy:essential}*)
     seq(a, 0, N)[i] == seq(b, 0, N)[wrap(i + N - r, N)]);
   {
      b[d] := a[s];
      s := s + 1;
      d := d + 1;
      if (d == N) { d := 0; }
   }
   assert (forall i: int :: 0 <= i && i < N ==>  (*\label{cl:boogie:copy:assert}*)
     rot(a, 0, N, r)[i] == seq(a, 0, N)[wrap(i + r, N)]);
}
\end{boogie}
\caption{Verified Boogie annotated implementation of the rotation by copy algorithm of Figure~\ref{fig:rotation-by-copy-algo}.}
\label{fig:boogie:rotate-copy}
\end{figure}

The proof outline presented in Section~\ref{sec:copy:proof} mentioned that the essential loop invariant implies the postcondition thanks to the equivalent definition of rotation \eqref{eq:lemma-rot-2}.
In a similar way, Boogie has to prove that the representation of \<rot> in terms of \<wrap>, used in the loop invariant, and the axiomatic definition of \<rot> in Figure~\ref{fig:boogie:seq-rot-def}, used in the postcondition, are equivalent.
To this end, we introduce an assert expressing \eqref{eq:lemma-rot}, on line~\ref{cl:boogie:copy:assert} in Figure~\ref{fig:boogie:rotate-copy}.

Since \eqref{eq:lemma-rot} is \eqref{eq:lemma-rot-2}'s inverse, it is the former that translates from a representation based on the latter into one conforming to \<rot>'s axioms.
Namely, after proving the \<assert> from \<rot>'s definition, Boogie's reasoning follows this chain of equalities, for a generic index $0 \leq k < \@N@$:
\begin{center}
\begin{boogie}[numbers=none]
   rot(a, 0, N, r)[$k$] 
=  seq(a, 0, N)[$k$ + r]    (assert on line (*\ref{cl:boogie:copy:assert}*) in Figure (*\ref{fig:boogie:rotate-copy}*))
\end{boogie}
\end{center}

\begin{center}
\begin{boogie}[numbers=none]
=  seq(b, 0, N)[wrap(($k$ + r) + N - r, N)] 
                      (invariant on line (*\ref{cl:boogie:copy:essential}*) in Figure (*\ref{fig:boogie:rotate-copy}*))
=  seq(b, 0, N)[wrap($k$ +$\,$N, N)]                 (*(arithmetic)*)
=  seq(b, 0, N)[$k$]         (axiom on line (*\ref{cl:axiom:wrap-2}*) in Figure (*\ref{fig:boogie:seq-rot-def}*))
   QED                             (postcondition ensures).
\end{boogie}
\end{center}

\subsection{Rotation by reversal: mechanized lemmas} \label{sec:boogie:reverse}

Rotation by reversal requires expressing Lemma~\ref{lm:rev-cat} in Boogie, which is crucial to prove that the three reversals achieve a rotation of the original sequence.
In order to be able to do that, we first axiomatize reversal along the same lines as done for rotation.

\begin{figure}[!p]
\begin{boogie}[xleftmargin=6pt]
// Reversed sequence of (*a[low..high)*)
function rev(a: [int]int, low: int, high: int) returns([int]int);
axiom (forall a: [int]int, low: int, high: int, i: int ::   (*\label{cl:axiom:rev-def}*)
  0 <= i && i < high - low ==> 
  rev(a, low, high)[i] == seq(a, low, high)[high - low - 1 - i]);

// The position (*i*) maps to in a reversal of (*[low..high)*)
function rp(i: int, low: int, high: int) returns(int);
axiom (forall i, low, high: int :: rp(i, low, high) == high + low - 1 - i);
\end{boogie}
\caption{Boogie declarations and axiomatic definitions of reversed sequence \<rev> and inverse index mapping \<rp> in a reversal.}
\label{fig:boogie:rev:def}
\end{figure}

\subsubsection{Axioms and lemmas about reversal}
Figure~\ref{fig:boogie:rev:def} shows an axiomatization based on~\eqref{eq:rev-def}: \<rev(a, low, high)> is the sequence obtained by reversal of \<a[low..high)>.
We also introduce a function \<rp(i, low, high)> that represents the mapping used in the essential loop invariant of \<reverse>.
This is merely a convenience, since we could equivalently use the expanded expression \<high + low - 1 - i> wherever \<rp(i, low, high)> occurs.
However, this choice may have an impact in practice because Boogie introduces different triggers for integer expressions than for uninterpreted function applications.
For lack of space, we won't discuss every single alternative in detail; trying out some of them is a useful exercise, also to realize the sensitivity of Boogie's encoding to changes in annotation style.

\begin{figure}[!p]
\begin{boogie}
// Representation (*\eqref{eq:lemma-rot}*) is equivalent to (*rot*)'s definition
procedure lemma_rot(a: [int]int, low, high: int, r: int, p: int)
   requires low <= high;
   requires 0 <= r && r < high - low;
   requires 0 <= p && p < high - low;
   ensures rot(a, low, high, r)[p] == 
           seq(a, low, high)[wrap(p + r, high - low)];
{ }

// Lemma (*\ref{lm:rev-cat}*) (reverse of concatenation)
procedure lemma_rev_cat(t: [int]int, tl: int, th: int, 
                        s: [int]int, sl: int, sh: int, 
                        c: [int]int, p: int)
   requires tl <= th && sl <= sh;
   requires 0 <= p && p < sh - sl + th - tl;
   requires (forall i: int :: 0 <= i && i < th - tl ==> 
     seq(c, 0, th - tl)[i] == rev(t, tl, th)[i]);
   requires (forall i: int :: 0 <= i && i < sh - sl ==> 
     seq(c, th - tl, sh - sl + th - tl)[i] == rev(s, sl, sh)[i]);
   ensures 0 <= p && p < sh - sl ==>   (*\label{cl:boogie:revcat:p1}*)
     rev(c, th - tl, th - tl + sh - sl)[p] == seq(s, sl, sh)[p];
   ensures 0 <= p && p < th - tl ==>   (*\label{cl:boogie:revcat:p2}*)
     rev(c, 0, th - tl)[p] == seq(t, tl, th)[p];
   ensures 0 <= p && p < sh - sl ==>   (*\label{cl:boogie:revcat:p3}*)
     rev(c, 0, th - tl + sh - sl)[p] == seq(s, sl, sh)[p];
   ensures 0 <= p && p < th - tl ==>   (*\label{cl:boogie:revcat:p4}*)
     rev(c, 0, th - tl + sh - sl)[p + sh - sl] == seq(t, tl, th)[p];
{ }
\end{boogie}
\caption{Two Boogie lemma procedures about rotation and reversal. Procedure \<lemma_rot> establishes that \eqref{eq:lemma-rot} is equivalent to \<rot>'s axiomatic definition.
Procedure \<lemma_rev_cat> represents, for a generic index $0 \leq \@p@ < \len{\sq} + \len{\tq}$,  Lemma~\ref{lm:rev-cat} with $\sq = \@s@[\@sl@..\@sh@)$, $\tq = \@t@[\@tl@..\@th@)$, and $\@c@[0..\len{\sq}+\len{\tq}) = \cat{\rev{\tq}}{\rev{\sq}}$.}
\label{fig:boogie:rev:lemmas}
\end{figure}

\paragraph{Lemmas as procedures.}
In Boogie, lemmas are encoded as \<procedure>s without returned values: preconditions express the lemma's hypotheses; preconditions express the lemma's statement; and the procedure body outlines steps in the lemma's proof.
For example, this is a lemma procedure expressing the conclusion of a classic syllogism:
\begin{center}
\begin{boogie}[numbers=none,xleftmargin=16pt]
procedure syllogism(p: P) requires greek(p) ensures mortal(p)
{ assert human(p); /* (*p*) is human, and hence mortal */ }
\end{boogie}
\end{center}

Since lemmas are procedures in Boogie, asserting a lemma is done by calling the corresponding procedure.
For example, we can ask Boogie to derive the fact that Socrates is mortal by the instruction 
\<call syllogism(socrates)>, which checks that \<greek(socrates)> and derives that \<mortal(socrates)>.
To instantiate a lemma for a generic value of some of its arguments, there is the \<\\cforall> instruction: $\@\\cforall@\ \@syllogism(*)@$ makes the fact that every Greek is mortal available at the call site.

\paragraph{Two lemmas about rotation and reversal in Boogie.}
The first lemma we present is the Boogie version of \eqref{eq:lemma-rot} as an alternative definition of \<rot>.
We already used this fact in the proof of \<rotate_copy>, where we introduced it as an ad hoc \<assert>; now, we proceed systematically and formalize it as procedure \<lemma_rot>.

The second lemma is the fundamental Lemma~\ref{lm:rev-cat}.
Procedure \<lemma_rev_cat> formalizes it in a form that is readily usable with maps: it asserts that, given slices \<s[sl..sh)>, \<t[tl..th)>, and \<c[0..sh - sl + th - tl)> such that \<c>'s slice equals the concatenation of \<t[tl..th)>'s reversal and \<s[sl..sh)>'s reversal, reversing the whole \<c>'s slice gives the concatenation of \<s[sl..sh)> and \<t[tl..th)>.
Precisely, only the last two \<ensures>, lines~\ref{cl:boogie:revcat:p3} and \ref{cl:boogie:revcat:p4} in Figure~\ref{fig:boogie:rev:lemmas}, express the lemma's conclusion.
By contrast, using the more readily understandable slice notation, the first two \<ensures> express that
\[
\begin{split}
\rev{\@c@[\@ah - al..ah - al + bh - bl@)}
& = \@b@[\@bl..bh@)\,, \\
\rev{\@c@[\@0..ah - al@)}
& = \@a@[\@al..ah@)\,,
\end{split}
\]
which is part of the information used in proving the lemma.
If we wanted to directly reflect the lemma's structure on paper, we would move the formulas on lines~\ref{cl:boogie:revcat:p1} and \ref{cl:boogie:revcat:p2} in Figure~\ref{fig:boogie:rev:lemmas} as \<assert>s inside \<lemma_rev_cat>'s body.
This is another alternative that we do not explore in full.
It turns out, however, that having those formulas as \<ensures> rather than \<assert> makes for an overall faster verification---probably because the extra \<ensures> are useful facts where the lemma is employed: not having to derive them again from other available facts at the call site is advantageous.

Short of this, both lemma procedures have empty bodies: Boogie can round up the facts required to prove them without additional guidance.

\begin{figure}[!htb]
\begin{boogie}
procedure reverse(a: [int]int, low, high: int) returns(b: [int]int)
   ensures (forall i: int :: 0 <= i && i < high - low ==> 
     seq(b, low, high)[i] == rev(a, low, high)[i]);
   ensures (forall i: int :: i < low ==> b[i] == a[i]);
   ensures (forall i: int :: high <= i ==> b[i] == a[i]);   
{
   var s, d, t: int;  // Variable (*t*) is temporary for swapping
   b := a;
   if (low >= high) { return; }  (*\label{cl:boogie:reverse:empty}*)
   s, d := low, high - 1;
   assert d == rp(s, low, high);
   while (s < d + 1)
   invariant low <= s && s <= d + 2 && d + 1 <= high;
   invariant d == rp(s, low, high);
   invariant 
    (forall i: int :: low <= i && i < s ==> a[i] == b[rp(i, low, high)]);
   invariant 
    (forall i: int :: d < i && i < high ==> a[i] == b[rp(i, low, high)]);
   invariant (forall i: int :: s <= i && i <= d ==> b[i] == a[i]);
   invariant (forall i: int :: i < low ==> b[i] == a[i]);
   invariant (forall i: int :: high <= i ==> b[i] == a[i]);
   {  // swap (*b[s]*) and (*b[d]*)
      t := b[d]; b[d] := b[s]; b[s] := t;
      s, d := s + 1, d - 1;
   }
}                                                         
\end{boogie}
\caption{Verified Boogie annotated implementation of the in-place reversal algorithm of Figure~\ref{fig:reversal-algo}.}
\label{fig:boogie:rotate}
\end{figure}

\subsubsection{Mechanized proofs of reversal and rotation by reversal}
We now have all the ingredients to present the implementation and proof of the main algorithms.

\paragraph{Proof of in-place reversal.}
Figure~\ref{fig:boogie:rotate} shows Boogie procedure \<reverse>,\footnote{Boogie's performance in this example is affected by the names given to local variables \<s> and \<d>, as well as the temporary local \<t>. For instance, using \<p> and \<q> as in the original pseudo-code listing triggers a time out. We could not investigate this bizarre (and somewhat distressing) issue in depth, but it probably has to do with how the SMT solver's instantiation rewrite order depends on Boogie's translation of variable names in the encoding of verification conditions.} which renders the pseudo-code implementation of Figure~\ref{fig:reversal-algo} using the same convention on input and output used for \<rotate> in Figure~\ref{fig:boogie:rotate-specification}.

Boogie can convert between the representation of \<b>'s content with respect to \<a>'s given by the loop invariants and the one used in the definition of \<rev>, and hence in \<reverse>'s postcondition.
When the loop terminates, the essential invariant characterizes the program state in a way that can be expressed as follows, for $0 \leq k < \@high@ - \@low@\colon$
\begin{equation}
\@seq(a, low, high)@[k] = \@seq(b, low, high)@[\@high@ - \@low@ - 1 - k]\,.
\label{eq:boogie:reverse:invariant-exit-sequence}
\end{equation}
Hence, Boogie verifies the following chain of equalities:
\begin{center}
\begin{boogie}[numbers=none]
   rev(a, low, high)[$k$] 
=  seq(a, low, high)[high - low - 1 - $k$] 
                         (definition of rev: axiom on line (*\ref{cl:axiom:rev-def}*))
=  seq(b, low, high)[high - low - 1 - (high - low - 1 - $k$)] 
                                                  (equation (*\ref{eq:boogie:reverse:invariant-exit-sequence}*))
=  seq(b, low, high)[$k$]                             (*(arithmetic)*)
   QED                                  (postcondition ensures).
\end{boogie}
\end{center}

The code in Figures~\ref{fig:boogie:rotate} and~\ref{fig:reversal-algo} is structurally very similar.
The only major---yet unsurprising---difference is that Boogie procedure \<reverse> includes information (in the postcondition and, correspondingly, in the loop invariants) about what is \emph{not} changed by the body: \<b> is the same as \<a> for indexes smaller than \<low> and greater than or equal to \<higher>.
This is a simple form of \emph{framing} necessary because Boogie's reasoning is modular: the effects of calls to \<reverse> within any of its callers are limited to what is explicit in \<reverse>'s specification irrespective of its implementation; anything that is not explicitly defined in \<reverse>'s postcondition may have changed.

Another, minor, difference between the Boogie code in Figures~\ref{fig:boogie:rotate} and the pseudo-code in Figure~\ref{fig:reversal-algo} is that the former's \<reverse> has no precondition, and simply returns the input \<a> when \<[low..high)> is an empty range of indexes.
In fact, the conditional \<return> statement on line~\ref{cl:boogie:reverse:empty} is actually not needed, since the following loop exits immediately when $\@low@ \geq \@high@$ (but note that the invariant \<s \<= d + 2> may fail initially if \<s \>= d + 1>).
Boogie needs a little nudge to understand how to handles this case separately: providing a conditional \<return> is one way to do it with code; ways to do it with annotations are inserting a precondition \<requires low <= high>, or making the failing invariant conditional, so that it holds vacuously when \<low \>= high>.

\begin{figure}[!tb]
\begin{boogie}
// Rotate (*a[0..N)*) to the left by (*r*) by reversal
procedure rotate_reverse(a: [int]int, N: int, r: int) 
          returns(b: [int]int)
  requires 0 < r && r < N;
  ensures (forall i: int :: 
    0 <= i && i < N ==> seq(b, 0, N)[i] == rot(a, 0, N, r)[i]);
{
   b := a;
   call b := reverse(b, 0, r);
   call b := reverse(b, r, N);
   \cforall lemma_rev_cat(a, 0, r, a, r, N, b, *);
   call b := reverse(b, 0, N);
   \cforall lemma_rot(a, 0, N, r, *);
}
\end{boogie}
\caption{Verified Boogie annotated implementation of the rotation by reversal algorithm of Figure~\ref{fig:rotation-reverse-algo}.}
\label{fig:boogie:rotation-reverse}
\end{figure}

\paragraph{Proof of rotation by reversal.}
As we can see in Figure~\ref{fig:boogie:rotation-reverse}, the Boogie annotated implementation of the rotation by reversal algorithm closely follows its presentation in Section~\ref{sec:reversal:proof}.
After reversing in-place \<b[0..r)> and then \<b[r..N)>, \<lemma_rev_cat> ensures that reversing  \<b[0..N)> again yields a rotation of \<a[0..N)> by \<r>.
The last call to \<reverse> performs this final reversal; and \<lemma_rot> helps convert between the index representation in \<lemma_rev_cat>'s postcondition and the one used in the definition of \<rot>.
Even if \<wrap> is directly used in neither, it is applicable to ``invert'' the former for \<sh - sl = N - r>.
This reasoning is similar to the argument at the end of Section~\ref{sec:boogie:copy} that should have become familiar by now.

\subsection{Rotation by swapping: organizing code for proofs} \label{sec:boogie:swap}

Mechanizing the rotation by swapping algorithm requires more complex usage of lemma procedures; and a careful organization of the imperative code to help guide the proof search so that it terminates in reasonable time.
The first step is, however, straightforward: verifying the auxiliary routine \<swap_sections>---which we discuss next.

\begin{figure}[!htb]
\begin{boogie}
procedure swap_sections(a: [int]int, low, high: int, d: int) 
          returns(b: [int]int)
 requires low <= low + d && low + d <= high - d && high - d <= high;
 ensures (forall i: int :: 0 <= i && i < d ==> 
   seq(b, low, high)[i] == seq(a, low, high)[i + (high - low - d)]);
 ensures (forall i: int :: d <= i && i < high - low - d ==> 
   seq(b, low, high)[i] == seq(a, low, high)[i]);
 ensures (forall i: int :: high - low - d <= i && i < high - low ==> 
   seq(b, low, high)[i] == seq(a, low, high)[i - (high - low - d)]);
 ensures (forall i: int :: i < low ==> b[i] == a[i]);  (*\label{cl:boogie:swapinplace-frame1}*)
 ensures (forall i: int :: high <= i ==> b[i] == a[i]); (*\label{cl:boogie:swapinplace-frame2}*)
{
 var x, z: int;
 var tmp: int;   // Temporary variable for swap
 b := a;
 x, z := low, high - d;
 while (x < low + d)
 invariant low <= x && x <= low + d;
 invariant high - d <= z && z <= high;
 invariant x - low == z - (high - d);
 invariant (forall i: int :: 0 <= i && i < x - low ==>
   seq(b, low, high)[i] == seq(a, low, high)[i + (high - low - d)]);
 invariant (forall i: int :: x - low <= i && i < high - low - d ==>
   seq(b, low, high)[i] == seq(a, low, high)[i]);
 invariant (forall i: int :: high - low - d <= i && i < z - low ==>
   seq(b, low, high)[i] == seq(a, low, high)[i - (high - low - d)]);
 invariant (forall i: int :: i < low ==> b[i] == a[i]);
 invariant (forall i: int :: z <= i ==> b[i] == a[i]);
 {  // swap (*b[x]*) and (*b[z]*)
    tmp := b[z]; b[z] := b[x]; b[x] := tmp;
    x, z := x + 1, z + 1;
 }
}
\end{boogie}
\caption{Verified Boogie annotated implementation of the in-place slice swapping algorithm of Figure~\ref{fig:swap-algo}.}
\label{fig:boogie:swap-algo}
\end{figure}

\subsubsection{Mechanized proof of swapping sections}
Procedure \<swap_sections> in Figure~\ref{fig:boogie:swap-algo} directly translates the pseudo-code in Figure~\ref{fig:swap-algo}.
The are only few, unsurprising differences:
\begin{itemize}
\item The Boogie procedure uses an output map since input arguments are read only; maps have infinite domains, and hence there is no need to require that $0 \leq \@low@$ and $\@high@ \leq \@N@$.
\item The postcondition (and correspondingly the loop invariants) has two additional clauses about framing on lines~\ref{cl:boogie:swapinplace-frame1} and \ref{cl:boogie:swapinplace-frame2}: the output \<b> is the same as the input \<a> for indexes outside the range \<[low..high)>.
\item A \<while> loop in Boogie renders the semantics of the \lstinline[language=PseudoEiffel]|until...do| loop in pseudo-code.
\end{itemize}

The correctness proof goes through without additional annotations.
In fact, unlike the case of \<reverse>, Boogie's translation seems much more robust with respect to inessential changes such as variable names or equivalent orderings of declarations.

\subsubsection{Lemmas about swapping}  \label{sec:boogie:lemma-procedures-swapping}
The most elaborate component for a Boogie proof of rotation by swapping is a translation of  Lemma~\ref{lm:rot-swap}, which is used to justify the recursive calls in the main algorithm.
We provide two distinct lemma procedures with symmetric structures, one for each of the \emph{left is smaller} case \eqref{eq:lemma-rot-swap-left} and \emph{right is smaller} case \eqref{eq:lemma-rot-swap-right}.

\begin{figure}[!tb]
\begin{boogie}
procedure lemma_left_smaller(a: [int]int, al, ah: int, 
                             c: [int]int, cl, ch: int, 
                             b: [int]int, bl, bh: int, d: int)
   requires al < ah;
   requires ah - al == bh - bl && ah - al == ch - cl;
   requires 0 < d && d < ah - al - d;
   requires (forall i: int :: ah - (d + al) <= i && i < ah - al ==>  (*\label{cl:lemma-left:p1}*)
     rot(a, al, ah, d)[i] == seq(b, bl, bh)[i]);
   requires (forall i: int :: 0 <= i && i < ah - (d + al) ==> (*\label{cl:lemma-left:p2}*)
     rot(c, cl, ch - d, d)[i] == seq(b, bl, bh - d)[i]);
   requires (forall i: int :: 0 <= i && i < d ==> (*\label{cl:lemma-left:p3}*)
     seq(c, cl, ch)[i] == seq(a, al, ah)[i + (ah - d - al)]);
   requires (forall i: int :: d <= i && i < ah - (d + al) ==> (*\label{cl:lemma-left:p4}*)
     seq(c, cl, ch)[i] == seq(a, al, ah)[i]);
   ensures (forall i: int :: 0 <= i && i < ah - al ==> (*\label{cl:lemma-left:post}*)
     rot(a, al, ah, d)[i] == seq(b, bl, bh)[i]);
{
   assert (forall i: int :: 0 <= i && i < ah - (d + al) ==> (*\label{cl:lemma-left:as1}*)
     rot(c, cl, ch - d, d)[i] == seq(b, bl, bh)[i]);
   assert (forall i: int :: 0 <= i && i < ch - cl - d - d ==> (*\label{cl:lemma-left:as2}*)
     rot(c, cl, ch - d, d)[i] == seq(c, cl, ch - d)[i + d]);
   assert (forall i: int :: 0 <= i && i < ch - cl - d - d ==> (*\label{cl:lemma-left:as3}*)
     rot(c, cl, ch - d, d)[i] == seq(c, cl, ch)[i + d]);
   assert (forall i: int :: ch - cl - d - d <= i && i < ch - cl - d ==> (*\label{cl:lemma-left:as4}*)
     rot(c, cl, ch - d, d)[i] == 
     seq(c, cl, ch - d)[i - (ch - cl - d - d)]);
   assert (forall i: int :: ch - cl - d - d <= i && i < ch - cl - d ==> (*\label{cl:lemma-left:as5}*)
     rot(c, cl, ch - d, d)[i] == 
     seq(c, cl, ch)[i - (ch - cl - d - d)]);
   assert (forall i: int :: 0 <= i && i < ah - (d + al) ==> (*\label{cl:lemma-left:as6}*)
     rot(a, al, ah, d)[i] == seq(b, bl, bh)[i]);
   assert (forall i: int :: ah - (d + al) <= i && i < ah - al ==> (*\label{cl:lemma-left:as7}*)
     rot(a, al, ah, d)[i] == seq(b, bl, bh)[i]);
}
\end{boogie}
\caption{Lemma~\ref{lm:rot-swap} for case \emph{left is smaller} \eqref{eq:lemma-rot-swap-left} as a Boogie lemma procedure, with $d = \@d@$, $N = \@ah@ - \@al@ = \@bh@ - \@bl@ = \@ch@ - \@cl@$,  $\@a[al..ah)@ = \cat{\xq}{\cat{\yq}{\zq}}$, $\@c[cl..ch - d)@ = \cat{\zq}{\yq}$, and $\@b[bl..bh)@ = \cat{\rot{(\@c[cl..ch - d)@)}{d}}{\xq}$.}
\label{fig:boogie:lemma-left-is-smaller}
\end{figure}

\paragraph{Lemma for \emph{left is smaller} case.}
Figure~\ref{fig:boogie:lemma-left-is-smaller} shows the Boogie translation of Lemma~\ref{lm:rot-swap} in case \eqref{eq:lemma-rot-swap-left}.
Recall how the lemma justifies the main algorithm, demonstrated in Figure~\ref{fig:rotation-by-swap-example}: to compute $\rot{(\cat{\xq}{\cat{\yq}{\zq}})}{d}$, first swap equal-length sequences $\xq$ and $\zq$, and then recur on $\cat{\zq}{\yq}$.

Procedure \<lemma_left_smaller> traces these two macro steps through an additional input map \<c>.
Then, \<a[al..ah)> represents the input consisting of $\cat{\xq}{\cat{\yq}{\zq}}$, with \<d> the length of both leftmost $\xq$ and rightmost $\zq$ segments, as in the top-left picture of Figure~\ref{fig:rotation-by-swap-example}.
Slice \<c[cl..ch - d)> represents $\cat{\zq}{\yq}$: the initial part of the processed array after swapping $\xq$ and $\zq$, as in the top-right picture of Figure~\ref{fig:rotation-by-swap-example}.
Slice \<b[bl..bh)> represents the final output after recursively rotating \<c[cl..ch - d)> by \<d>, as in the bottom-right picture of Figure~\ref{fig:rotation-by-swap-example}.
The preconditions of \<lemma_left_smaller> encode these assumptions: line~\ref{cl:lemma-left:p1} describes the right-most slice \<b[bh - d..bh)> as $\xq$, which is in place in the rotation of \<a[al..ah)>; line~\ref{cl:lemma-left:p2} describes the other slice \<b[bl..bh - d)> as $\rot{(\@c[cl..ch - d)@)}{\@d@}$; lines~\ref{cl:lemma-left:p3} and~\ref{cl:lemma-left:p3} respectively describe \<c[cl..cl + d)> as $\zq$ and \<c[cl + d..ch - d)> as $\yq$.
The postcondition on line~\ref{cl:lemma-left:post} concludes that the \<b[bl..bh)> described in the precondition is indeed the rotation of \<a[al..ah)> by \<d>.

The procedure body consists of a series of seven \<assert> that guide Boogie through the proof of the postcondition from the preconditions.
As usual, there is room for variations, but this particular sequence of \<assert> is fairly natural and produces a fast proof; to illustrate, this is an informal explanation of what each \<assert> establishes:
\begin{description}
\item[Line~\ref{cl:lemma-left:as1}] relaxes the right bound of \<seq(b)> in the precondition on line~\ref{cl:lemma-left:p2} from \<bh - d> to \<bh>, since indexes beyond \<bh - d> are out of the quantification range.
\item[Line~\ref{cl:lemma-left:as2}] recalls the definition of \<rot> for \<c[cl + d..ch - d - d)> or $\yq$.
\item[Line~\ref{cl:lemma-left:as3}] relaxes the right bound of \<seq(c)> in the previous \<assert> from \<ch - d> to \<ch>, since indexes beyond \<ch - d> are out of the quantification range.
\item[Line~\ref{cl:lemma-left:as4}] recalls the definition of \<rot> for \<c[cl..cl + d)> or $\zq$.
\item[Line~\ref{cl:lemma-left:as5}] relaxes the right bound of \<seq(c)> in the previous \<assert> from \<ch - d> to \<ch>, since indexes beyond \<ch - d> are out of the quantification range.
\item[Line~\ref{cl:lemma-left:as6}] concludes that \<b[bl..bh - d)> is $\rot{(\@a[al..ah))@}{\@d@}$ between \<[al..ah - d)>, using the facts about \<c> proved so far (specifically, lines~\ref{cl:lemma-left:as3} and \ref{cl:lemma-left:as5}), and the relations between \<b> and \<c> and between \<b> and \<a> in the preconditions.
\item[Line~\ref{cl:lemma-left:as7}] recalls that \<b[bh - d..bh)> coincides with $\rot{(\@a[al..ah))@}{\@d@}[\@ah - d..ah@)$; even if this \<assert> is the very same as the precondition on line~\ref{cl:lemma-left:p1}, it is necessary to recall it explicitly in the body so that Boogie uses it to close the proof.
\end{description}

\begin{figure}[!bt]
\begin{boogie}
procedure lemma_right_smaller(a: [int]int, al, ah: int, 
                              c: [int]int, cl, ch: int, 
                              b: [int]int, bl, bh: int, d: int)
 requires al < ah;
 requires ah - al == bh - bl && ah - al == ch - cl;
 requires 0 < d && d < ah - al - d;
 requires (forall i: int :: 0 <= i && i < d ==> (*\label{cl:lemma-right:p1}*)
   rot(a, al, ah, ah - al - d)[i] == seq(b, bl, bh)[i]);
 requires (forall i: int :: 0 <= i && i < ah - (d + al) ==> (*\label{cl:lemma-right:p2}*)
   rot(c, cl + d, ch, ah - al - d - d)[i] == seq(b, bl + d, bh)[i]);
 requires (forall i: int :: ah - (d + al) <= i && i < ah - al ==> (*\label{cl:lemma-right:p3}*)
   seq(c, cl, ch)[i] == seq(a, al, ah)[i - (ah - d - al)]);
 requires (forall i: int :: d <= i && i < ah - (d + al) ==> (*\label{cl:lemma-right:p4}*)
   seq(c, cl, ch)[i] == seq(a, al, ah)[i]);
 ensures (forall i: int :: 0 <= i && i < ah - al ==> (*\label{cl:lemma-right:post}*)
   rot(a, al, ah, ah - al - d)[i] == seq(b, bl, bh)[i]);
{
 assert (forall i: int :: 0 <= i && i < ah - (d + al) ==> (*\label{cl:lemma-right:as1}*)
   rot(c, cl + d, ch, ah - al - d - d)[i] == seq(b, bl, bh)[i + d]);
 assert (forall i: int :: d <= i && i < ch - cl - d ==>  (*\label{cl:lemma-right:as2}*)
   rot(c, cl + d, ch, ah - al - d - d)[i] == 
   seq(c, cl + d, ch)[i - d]);
 assert (forall i: int :: 0 <= i && i < ch - cl - d - d ==> (*\label{cl:lemma-right:as3}*)
   rot(c, cl + d, ch, ah - al - d - d)[i + d] == 
   seq(c, cl + d, ch)[i]);
 assert (forall i: int :: d <= i && i < ch - cl - d ==> (*\label{cl:lemma-right:as4}*)
   rot(c, cl + d, ch, ah - al - d - d)[i] == seq(c, cl, ch)[i]);
 assert (forall i: int :: 0 <= i && i < d ==> (*\label{cl:lemma-right:as5}*)
   rot(c, cl + d, ch, ah - al - d - d)[i] == 
   seq(c, cl + d, ch)[i + (ch - cl - d - d)]);
 assert (forall i: int :: 0 <= i && i < d ==> (*\label{cl:lemma-right:as6}*)
   rot(c, cl + d, ch, ah - al - d - d)[i] == 
   seq(c, cl, ch)[i + (ch - cl - d)]);
 assert (forall i: int :: d <= i && i < ah - al ==> (*\label{cl:lemma-right:as7}*)
   rot(a, al, ah, ah - al - d)[i] == seq(b, bl, bh)[i]);
 assert (forall i: int :: { seq(b, bl, bh)[i] }  // trigger  (*\label{cl:lemma-right:as8}*)
   0 <= i && i < d ==>
   rot(a, al, ah, ah - al - d)[i] == seq(b, bl, bh)[i]);
}
\end{boogie}
\caption{Lemma~\ref{lm:rot-swap} for case \emph{right is smaller} \eqref{eq:lemma-rot-swap-right} as a Boogie lemma procedure, with $d = \@d@$, $N = \@ah@ - \@al@ = \@bh@ - \@bl@ = \@ch@ - \@cl@$,  $\@a[al..ah)@ = \cat{\xq}{\cat{\yq}{\zq}}$, $\@c[cl + d..ch)@ = \cat{\yq}{\xq}$, and $\@b[bl..bh)@ = \cat{\zq}{\rot{(\@c[cl + d..ch)@)}{N - 2d}}$.}
\label{fig:boogie:lemma-right-is-smaller}
\end{figure}

Once we have understood the rationale behind \<lemma_left_smaller>, it is not difficult to derive the dual \<lemma_right_smaller> shown in Figure~\ref{fig:boogie:lemma-right-is-smaller} and corresponding to case~\eqref{eq:lemma-rot-swap-right} of Lemma~\ref{lm:rot-swap}.
Following the example of Figure~\ref{fig:rotation-by-swap-example-right} going from \<a> (top-left picture) through \<c> (top-right picture) to \<b> (bottom-right picture) helps understand the lemma procedure.
Compared to \<lemma_left_smaller>, there now is one more \<assert> due to an additional index rescaling (the second \<assert> refers to \<i - d>, which becomes \<i> in the third \<assert>).

The real twist is, however, the need for a \emph{trigger} in the last \<assert> on line~\ref{cl:lemma-right:as8}:
\[
\@\{ seq(b, bl, bh)[i] \}@\,.
\]
Even if the \<assert> is just a repetition of the precondition on line~\ref{cl:lemma-right:p1}, Boogie needs help to pick the relevant facts among the many instantiated terms that are available.
The trigger directs the SMT solver\footnote{The description of the SMT solver Simplify~\cite{Simplify} discusses how triggers work; see also~\cite{RR-triggers,ALR-computingSMT} and \cite[Sec.~11.2]{BoogieManual}.} to only instantiate the universal quantifier in the \<assert> for those \<i>'s such that \<seq(b, bl, bh)[i]> is a term in the current proof context.
In this particular case, using the trigger makes a dramatic difference in terms of performance when proving the whole lemma procedure.

\begin{figure}[!htb]
\begin{boogie}
procedure rotate_swap(a: [int]int, N: int, r: int) 
          returns(b: [int]int)
   requires 0 < r && r < N;
   ensures (forall i: int :: 
     0 <= i && i < N ==> seq(b, 0, N)[i] == rot(a, 0, N, r)[i]);
{
   call b := rotate_swap_helper(a, r, 0, N);
}

procedure rotate_swap_helper(a: [int]int, p: int, low, high: int) 
   returns(b: [int]int)
   requires low <= p && p < high;
   ensures (forall i: int :: 0 <= i && i < high - low ==>
     rot(a, low, high, p - low)[i] == seq(b, low, high)[i]);
   ensures (forall i: int :: i < low ==> b[i] == a[i]);
   ensures (forall i: int :: high <= i ==> b[i] == a[i]);
{
   var c: [int]int;  // ghost: value of b before recursive call
   if (p == low) { b := a; return; }
   if (p $-$ low $\leq$ high $-$ p) {
      // swap (*a[low..p)*) and (*a[high $-$ (p $-$ low)..high)*)
      call b := swap_sections(a, low, high, p - low);  (*\label{cl:boogie:swaphelper:swap-left}*)
      if (p - low == high - p) {
         // now the whole (*b[low..high)*) is in place
         return;
      } else {
         // now (*b[high $-$ (p $-$ low)..high)*) is in place
         c := b;   // ghost
         call b := rotate_swap_helper(b, p, low, high - (p - low));
         call lemma_left_smaller(a, low, high, c, low, high, 
                                 b, low, high, p - low);
      }
   } else { 
      assert p - low > high - p;
      assert 0 <= high - p && high - p <= high - low;
      // swap (*a[low..low $+$ (high $-$ p))*) and (*a[p..high)*)
      call b := swap_sections(a, low, high, high - p);
      // now (*b[low..low $+$ (high $-$ p))*) is in place
      c := b;   // ghost
      call b := rotate_swap_helper(b, p, low + (high - p), high);
      call lemma_right_smaller(a, low, high, c, low, high, 
                               b, low, high, high - p);
   }
}
\end{boogie}
\caption{Verified Boogie annotated implementation of the rotation by swapping recursive algorithm of Figure~\ref{fig:rotation-swap-algo}.}
\label{fig:boogie:rotation-swap}
\end{figure}

\subsubsection{Mechanized proof of rotation by swapping: recursive version} \label{sec:swap:proof:recursive}
Presenting annotated versions of \<rotate_swap> and \<rotate_swap_helper>, Figure~\ref{fig:boogie:rotation-swap} is the Boogie counterpart to Figure~\ref{fig:rotation-swap-algo}.

With respect to the pseudo-code version of Figure~\ref{fig:rotation-swap-algo}, \<rotate_swap_helper> in Boogie has some structural differences that are worth discussing.
The most pronounced one is a different conditional structures.
The pseudo-code algorithm clearly distinguishes between three cases (equal length of slices to be swapped, left is smaller, right is smaller), and each case has a call to \<swap_sections> followed, in the last two cases, by a recursive call to the helper; the trivial base case \lstinline[language=PseudoEiffel]|low = p| 
is handled by an enclosing \lstinline[language=PseudoEiffel]|if|.
By contrast, the Boogie procedure handles the trivial base case initially introducing abrupt termination (i.e., a \<return>).
Then, the call to \<swap_sections> on line~\ref{cl:boogie:swaphelper:swap-left} applies to two cases: ``equal length'' and ``left is smaller''.
This structure helps reduce repetitions in reasoning along different conditional branches, and in fact it makes for quicker verification.
Having one fewer call to \<swap_sections> with respect to the pseudo-code version avoids checking \<swap_sections>'s precondition twice with the same arguments in different contexts; and the \<return> in the ``equal length'' case drives a direct proof of the helper's postcondition from \<swap_sections>'s postcondition and the few other facts available at that location, instead of having to consider many other inapplicable facts in a conditional reasoning at the unique exit point of the structured pseudo-code version.
Of course, other solutions are possible in Boogie with some trial and error, but it should be clear that the two versions are semantically equivalent.
To help unravel the branching structure with more clarity, we have two \<assert> in the ``right is smaller'' branch; they also are crucial for performance.

The usage of a \emph{ghost} variable \<c> is another novelty of Figure~\ref{fig:boogie:rotation-swap} compared to the previous Boogie examples.
It is no coincidence that the name \<c> is also used for one argument of the lemma procedures presented in Section~\ref{sec:boogie:lemma-procedures-swapping}.
In \<rotate_swap_helper>, \<c> keeps track of the value of \<b> after the first macro-step (call to \<swap_sections>) and before the second one (recursive call to the helper).
Thanks to \<c>, we conclude the proof of each recursive case by calling the corresponding lemma procedure, which relates the input \<a> to the final output \<b> through \<c> to establish \<rotate_swap_helper>'s postcondition.
As discussed in the upcoming Section~\ref{sec:boogie:modulo}, the rotation by modular visit algorithm contemplates a much richer usage of ghost code, but the idea is already clear here: ghost code keeps track of program state beyond what is explicit in the non-ghost program variables (that is, variables used in the actual computation), capturing information that is readily useful for proofs.

A final aspect of modularization leveraged in the proof of rotation by swapping is not apparent in the presentation on paper.
We split the proof of the procedures in separate files. 
Each file contains only one procedure with implementation (for example, \<rotate_swap_helper>) together with only the signature and specification of other procedures called in the single implementation (for example, \<lemma_left_smaller>, \<lemma_right_smaller>, and \<swap_sections>).
We invoke Boogie separately on each file.
Even if Boogie works modularly, there clearly is interference between different proofs originating in the same file; having only one procedure to prove per invocation significantly reduces the possible problems---ultimately causing slower proofs or timeouts due to unfruitful proof search heuristics being applied.
The bottom line is that how code and annotations are structured can make a significant different when mechanizing verification of algorithms.

\subsubsection{Mechanized proof of rotation by swapping: iterative version}

A careful organization of code and annotations is also central to the proof of the \emph{iterative} version of rotation by swapping.
A formalization of Lemma~\ref{lm:rot-swap} is still at the core of the correctness argument; but we now proceed using a different approach than in the recursive version: since lemmas and imperative code are both encoded as \<procedure>s in Boogie, we combine them in the same procedure.

%% After setting the current page, flush all pending floats
\afterpage{\clearpage}

To this end, we introduce three variants of \<swap_sections>, one for each of the by-now familiar cases: ``equal length'' sections, ``left is smaller'', and ``right is smaller''.
We name the three variants accordingly: \<swap_equal>, \<swap_left>, and \<swap_right>.
The operational part of the variants is identical, and simply consists of a suitable call to \<swap_sections> of Figure~\ref{fig:boogie:swap-algo}.
What is different is their \emph{specification}: besides describing output in terms of input, it also relates the output to the original reversal problem as per Lemma~\ref{lm:rot-swap}.
Take for example \<swap_left> in Figure~\ref{fig:boogie:swap-left-withlemma}, which swaps \<c[l..p)> and \<c[h - (p - l)..h)> under the assumption \<p - l \< h - p>.
Its input arguments also include the original input \<a[low..high)> to be rotated.
Its precondition assumes that \<c[low..l)> and \<c[h..high)> correspond to already rotated slices of \<a[low..high)>.
Its postcondition guarantees that output \<b[low..l)> and \<b[h - (p - l)..high)> will consist of rotated slices of \<a[low..high)>, thus ensuring progress; and that rotating \<b[l..h - (p - l))> by \<p - l> will complete the rotation of \<a[low..high)>.
Of course, the names \<a>, \<b>, and \<c> correspond to the three macro-step also underlying the recursive version and the running example in Figure~\ref{fig:rotation-by-swap-example}.
Similar comments apply to the augmented specification of \<swap_equal> and \<swap_right> shown in Figure~\ref{fig:boogie:swap-equal-right-withlemma}.

\begin{figure}[!htb]
\begin{boogie}
procedure swap_left(a: [int]int, c: [int]int, low, high: int, 
                    l, h: int, p: int) returns(b: [int]int)
   requires low <= l && l < p && p < h && h <= high;
   requires p - l < h - p;      // left is smaller
   requires (forall i: int :: 0 <= i && i < l - low ==>
     rot(a, low, high, p - low)[i] == seq(c, low, high)[i]);
   requires (forall i: int :: l - low <= i && i < h - low ==> 
     rot(a, low, high, p - low)[i] == 
     rot(c, l, h, p - l)[i - (l - low)]);
   requires (forall i: int :: h - low <= i && i < high - low ==>
     rot(a, low, high, p - low)[i] == seq(c, low, high)[i]);
   requires (forall i: int :: i < low ==> c[i] == a[i]);
   requires (forall i: int :: high <= i ==> c[i] == a[i]);
   ensures (forall i: int :: 0 <= i && i < l - low ==>
     rot(a, low, high, p - low)[i] == seq(b, low, high)[i]);
   ensures (forall i: int :: l - low <= i && i < h - low - (p - l) ==> 
     rot(a, low, high, p - low)[i] == 
     rot(b, l, h - (p - l), p - l)[i - (l - low)]);
   ensures (forall i: int :: h - low - (p - l) <= i && i < high - low ==>
     rot(a, low, high, p - low)[i] == seq(b, low, high)[i]);
   ensures (forall i: int :: i < low ==> b[i] == a[i]);
   ensures (forall i: int :: high <= i ==> b[i] == a[i]);
{
   call b := swap_sections(c, l, h, p - l);
   // Asserts to prove post from pre and swap_sections's post
   // ...
} 
\end{boogie}
\caption{Verified Boogie annotated implementation of the in-place slice swapping algorithm Figure~\ref{fig:swap-algo}: postcondition augmented with Lemma~\ref{lm:rot-swap} for case \emph{left is smaller}~\eqref{eq:lemma-rot-swap-left}.}
\label{fig:boogie:swap-left-withlemma}
\end{figure}

\begin{figure}[!p]
\begin{boogie}
procedure swap_equal(a: [int]int, c: [int]int, low, high: int, 
                     l, h: int, p: int) returns(b: [int]int);
   requires low <= l && l < p && p < h && h <= high;
   requires p - l == h - p;     // left same size as right
   requires (forall i: int :: 0 <= i && i < l - low ==>
     rot(a, low, high, p - low)[i] == seq(c, low, high)[i]);
   requires (forall i: int :: l - low <= i && i < h - low ==> 
     rot(a, low, high, p - low)[i] == 
     rot(c, l, h, p - l)[i - (l - low)]);
   requires (forall i: int :: h - low <= i && i < high - low ==>
     rot(a, low, high, p - low)[i] == seq(c, low, high)[i]);
   requires (forall i: int :: i < low ==> c[i] == a[i]);
   requires (forall i: int :: high <= i ==> c[i] == a[i]);
   ensures (forall i: int :: 0 <= i && i < p - low ==>
     rot(a, low, high, p - low)[i] == seq(b, low, high)[i]);
   ensures (forall i: int :: p - low <= i && i < high - low ==>
     rot(a, low, high, p - low)[i] == seq(b, low, high)[i]);
   ensures (forall i: int :: i < low ==> b[i] == a[i]);
   ensures (forall i: int :: high <= i ==> b[i] == a[i]);

procedure swap_right(a: [int]int, c: [int]int, low, high: int, 
                     l, h: int, p: int) returns(b: [int]int);
   requires low <= l && l < p && p < h && h <= high;
   requires p - l > h - p;  // right is smaller
   requires (forall i: int :: 0 <= i && i < l - low ==>
     rot(a, low, high, p - low)[i] == seq(c, low, high)[i]);
   requires (forall i: int :: h - low <= i && i < high - low ==>
     rot(a, low, high, p - low)[i] == seq(c, low, high)[i]);
   requires (forall i: int :: l - low <= i && i < h - low ==> 
     rot(a, low, high, p - low)[i] == 
     rot(c, l, h, p - l)[i - (l - low)]); 
   requires (forall i: int :: i < low ==> c[i] == a[i]);
   requires (forall i: int :: high <= i ==> c[i] == a[i]);
   ensures (forall i: int :: 0 <= i && i < l - low + (h - p) ==>
     rot(a, low, high, p - low)[i] == seq(b, low, high)[i]);
   ensures (forall i: int :: l - low + (h - p) <= i && i < h - low ==> 
     rot(a, low, high, p - low)[i] == 
     rot(b, l + (h - p), h, p - (l + (h - p)))[i - (l - low + h - p)]);
   ensures (forall i: int :: h - low <= i && i < high - low ==>
     rot(a, low, high, p - low)[i] == seq(b, low, high)[i]);
   ensures (forall i: int :: i < low ==> b[i] == a[i]);
   ensures (forall i: int :: high <= i ==> b[i] == a[i]);
\end{boogie}
\caption{Boogie specifications of the in-place slice swapping algorithm Figure~\ref{fig:swap-algo}. The postcondition of \<swap_equal> is augmented with the property that, when \<p - l = h - p>, swapping \<c[l..p)> and \<c[p..h)> in place is tantamount to rotating \<c[l..h)> by \<p - l>. The postcondition of \<swap_right> is augmented with Lemma~\ref{lm:rot-swap} for case \emph{right is smaller}~\eqref{eq:lemma-rot-swap-right}.}
\label{fig:boogie:swap-equal-right-withlemma}
\end{figure}

\begin{figure}[!htb]
\begin{boogie}
procedure rotate_swap_iterative(a: [int]int, N: int, r: int) 
          returns(b: [int]int)
   requires 0 < r && r < N;
   ensures (forall i: int :: 
     0 <= i && i < N ==> seq(b, 0, N)[i] == rot(a, 0, N, r)[i]);
{
   var low, p, high: int;
   low, p, high := 0, r, N;
   b := a;
   while (low < p && p < high)
   invariant 0 <= low && low <= p && p <= high && high <= N;
   invariant low == p <==> p == high;
   invariant (forall i: int :: 0 <= i && i < low ==>
              rot(a, 0, N, p)[i] == seq(b, 0, N)[i]);
   invariant p - low < high - low ==>
             (forall i: int :: low <= i && i < high ==> 
              rot(a, 0, N, p)[i] == 
              rot(b, low, high, p - low)[i - low]);
   invariant (forall i: int :: high <= i && i < N ==>
              rot(a, 0, N, p)[i] == seq(b, 0, N)[i]);
   invariant (forall i: int :: i < 0 ==> b[i] == a[i]);
   invariant (forall i: int :: N <= i ==> b[i] == a[i]);
   {
      goto equal_length, left_smaller, right_smaller;
      equal_length:
         assume p - low == high - p;
         call b := swap_equal(a, b, 0, N, low, high, p);
         low, high := low + (p - low), high - (high - p);
         goto continue;
      left_smaller:
         assume p - low < high - p;
         call b := swap_left(a, b, 0, N, low, high, p);
         high := high - (p - low);
         goto continue;
      right_smaller:
         assume p - low > high - p;
         call b := swap_right(a, b, 0, N, low, high, p);
         low := low + (high - p);
         goto continue;
      continue:
   }
}
\end{boogie}
\caption{Verified Boogie annotated implementation of the rotation by swapping iterative algorithm of Figure~\ref{fig:rotation-swap-algo-iter}.}
\label{fig:boogie:rotation-swap-iter}
\end{figure}

The advantage of this approach is that we can reason about special properties of swapping separately in each case.
The call to \<swap_sections> in the bodies of \<swap_left>, \<swap_right>, and \<swap_equal> is followed by a sequence of \<assert> that proves the special properties of the swapping declared in the augmented postconditions.
For brevity, we omit the proofs; suffice it to say that \<swap_equal> has a simple proof, whereas \<swap_left> and \<swap_right>'s proofs are quite involved and require elaborate assertions and careful usage of triggers.

With this organization, Boogie can prove \<rotate_swap_iterative> in Figure~\ref{fig:boogie:rotation-swap-iter} with the same invariants as the pseudo code in Figure~\ref{fig:rotation-swap-algo-iter} without additional annotations (the only exception being the straightforward framing invariants to keep track of the unchanged parts of the map domain before \<0> and after \<N>).
While we could have used nested \<if>s to replicate the three-case structure in the loop body of Figure~\ref{fig:rotation-swap-algo-iter}, we demonstrate another construct, nondeterministic \<goto>, which emphasizes the three-way case split.
Embedding the proof of Lemma~\ref{lm:rot-swap} in separate procedures makes for a simple and efficient high-level proof that reflects the argument on paper.

\subsection{Rotation by modular visit: ghost code and framing} \label{sec:boogie:modulo}

Underlying the proof of the rotation by modular visit algorithm discussed informally 
in Section~\ref{sec:modulo:proof} were properties of modular arithmetic and cyclic decompositions 
of permutations.
Mechanizing the proofs of those properties all the way down to fundamental arithmetic would be exceedingly complicated and out of the scope of the present discussion; instead, we capture the fundamental mathematical properties as axioms whose correctness is intuitively clear, and build the main correctness proof atop them.

This approach has the additional advantage that it lets us focus on other aspects central to mechanizing the proof of rotation by modular visit, and in particular on keeping track of implicit information in the program state by means of \emph{ghost code}.
We already encountered ghost code among the annotations of rotation by swapping (Section~\ref{sec:swap:proof:recursive}), but proving rotation by modular visit will require more complex usage, especially to detail \emph{framing} of the result array \<b>.

\subsubsection{Axioms about cycles}

Figure~\ref{fig:boogie:mp-gcd-ic-def} shows declarations and axiomatic definitions of three fundamental quantities featuring in the proof of rotation by modular visit: $\mpos{N}{M}{s}{k}$, $\gcd(N, M)$, and $\ic{N}{M}$ corresponding to $\@mp@(N, M, s, k)$, $\@gcd@(N, M)$, and $\@ic@(N, M)$ in Boogie.

\begin{figure}[!htb]
\begin{boogie}
function mp(N: int, M: int, s: int, p: int) returns(int);
axiom (forall N: int, M: int, s: int :: 
  0 < M && M < N && 0 <= s && s < N ==> mp(N, M, s, 0) == s);
axiom (forall N: int, M: int, s: int, k: int :: 
  0 < M && M < N && 0 <= s && s < N && 0 < k ==> 
  mp(N, M, s, k) == wrap(mp(N, M, s, k - 1) + M, N));

function gcd(N: int, M: int) returns(int);
axiom (forall N, M: int :: 0 < N && 0 < M ==>   (*\label{ln:ax:gcd-0}*)
  0 < gcd(N, M) && gcd(N, M) <= N && gcd(N, M) <= M);

function ic(N: int, M: int) returns(int);  
axiom (forall N: int, M: int ::  (*\label{ln:ax:ic-0}*)
  0 <= M && M < N ==> 0 < ic(N, M) && ic(N, M) <= N);

axiom (forall N: int, M: int :: 0 < M && M < N ==> (*\label{ln:ax:gcd-ic-N}*)
  gcd(N, M) * ic(N, M) == N);
axiom (forall N: int, M: int, s: int :: mp(N, M, s, ic(N, M)) == s); (*\label{ln:ax:mp-ic-start}*)
axiom (forall N: int, M: int, s: int, p: int :: 
  0 < p && p < ic(N, M) ==> mp(N, M, s, p) != s);
axiom (forall N: int, M: int, s: int, p, q: int :: (*\label{ax:mp-ic-1}*)
  0 <= p && p < ic(N, M) && 0 <= q && q < ic(N, M) && p != q ==> 
  mp(N, M, s, p) != mp(N, M, s, q));  (*\label{ln:ax:mp-ic-end}*)
axiom (forall N: int, M: int, s, t: int, p, q: int :: (*\label{ax:mp-ic-2}*)
  0 <= p && p < ic(N, M) && 0 <= q && q < ic(N, M) && 
  0 <= s && s < t && t < s + gcd(N, M) && t < N ==> 
  mp(N, M, s, p) != mp(N, M, t, q));
\end{boogie}
\caption{Boogie declarations and axiomatic definitions of \<mp>, \<gcd>, and \<ic>.}
\label{fig:boogie:mp-gcd-ic-def}
\end{figure}

The first two axioms characterize $\mpos{N}{M}{s}{k}$---which gives the $k$-th index in a cycle starting at $s$ with step $M$ wrapping over $N$---inductively as
\[
\mpos{N}{M}{s}{k} = 
\begin{cases}
s  &  k = 0 \,, \\
\wrap{(M + \mpos{N}{M}{s}{k - 1})}{N}
& k > 0 \,.
\end{cases}
\]
This definition and the one in~\eqref{eq:mp-def} are equivalent (a fact which could be proved from a suitable axiomatization of modular arithmetic), but the inductive definition has the advantage of directly matching the program's logic: each iteration of the inner loop moves \<v> to the ``next'' value in the modular visit.
In contrast, \eqref{eq:mp-def} is inductive only indirectly through definition \eqref{eq:wrap-def} of `$\wrap{}{}$'.

The following two axioms, lines \ref{ln:ax:gcd-0} and \ref{ln:ax:ic-0} in Figure~\ref{fig:boogie:mp-gcd-ic-def}, define how $\gcd$ and $\ic{}{}$ are bounded by their arguments.

The remaining axioms in Figure~\ref{fig:boogie:mp-gcd-ic-def} complete the characterization of $\mpos{}{}{}{}$, $\gcd$, and $\ic{}{}$ in terms of mutual properties.
The axiom on line~\ref{ln:ax:gcd-ic-N} is equivalent to \eqref{eq:ic-def}; in the proof, it is necessary to conclude that the inner and outer loops combined visit all the \<N> elements of the input.
The three axioms on lines~\ref{ln:ax:mp-ic-start}--\ref{ln:ax:mp-ic-end} define the \<ic(N, M)> elements in the same cycle starting at a generic \<s>; in the proof, these characterize the elements visited by the inner loop (executed to completion for a given value of \<s>).
By contrast, the last axiom in Figure~\ref{fig:boogie:mp-gcd-ic-def} declares disjointness between elements of the cycles a rotation can be decomposed into; in the proof, it is necessary to combine the effect of each iteration of the outer loop (in fact, the axiom covers exactly \<ic(N, M)> different consecutive values of \<s>).

% It should be clear that the axioms in Figure~\ref{fig:boogie:mp-gcd-ic-def} capture fundamental properties of $\mpos{}{}{}{}$, $\gcd$, and $\ic{}{}$ that follow from their definitions and were implicit in the presentation of Section~\ref{sec:modulo:proof}.
% The mechanized proof builds upon this level of abstraction.

\begin{figure}[!htb]
\begin{boogie}
procedure rotate_modulo(a: [int]int, N: int, r: int) 
          returns(b: [int]int)
  requires 0 < r && r < N;
  ensures (forall i: int :: 
    0 <= i && i < N ==> seq(b, 0, N)[i] == rot(a, 0, N, r)[i]);
{
   var start, v, displaced: int;
   // ghost:
   var k: int;  // index (*mp(N, N $-$ r, start, k)*) currently visited
   var set: [int]bool; // (*set[i]*) iff (*b[i]*) has been assigned to

   b := a;
   assume (forall i: int :: !set[i]);  // ghost: initialize (*b*)  (*\label{ln:boogie:modulo:outer:assume}*)

   start := 0;
   while (start < gcd(N, N - r))  
   invariant (0 <= start && start <= gcd(N, N - r));
   invariant (forall i: int ::   (*\label{ln:boogie:modulo:outer:framing-same}*)
     0 <= i && i < N && !set[i] ==> b[i] == a[i]);
   invariant (forall i: int, s: int ::  (*\label{ln:boogie:modulo:outer:framing-notset}*)
     0 <= i && i < ic(N, N - r) && start <= s && s < gcd(N, N - r) 
     ==> !set[mp(N, N - r, s, i)]);
   invariant (forall i: int, s: int :: (*\label{ln:boogie:modulo:outer:framing-essential}*)
     0 <= i && i < ic(N, N - r) && 0 <= s && s < start 
     ==> set[mp(N, N - r, s, i)]);
   invariant (forall i: int :: 0 <= i && i < N && set[i] ==> (*\label{ln:boogie:modulo:outer:essential}*)
     seq(b, 0, N)[i] == rot(a, 0, N, r)[i]);
   {
      v, displaced := start, b[start];
      k := 0; // ghost

      // one unconditional iteration of the inner loop
      k := k + 1; // ghost
      v := v + N - r;
      if (v >= N) { v := v - N; }
      b[v], displaced := displaced, b[v];
      set[v] := true;  // ghost

      // Inner loop here: see (*Figure~\ref{fig:boogie:rotation-modulo-inner}*)

      assert k == ic(N, N - r);
      start := start + 1;
   }
   // Concluding assertions here: see (*Figure~\ref{fig:boogie:rotation-modulo-closing}*)
}
\end{boogie}
\caption{Verified Boogie annotated implementation of the rotation by modular visit algorithm of Figure~\ref{fig:rotation-by-modulo-algo}.}
\label{fig:boogie:rotation-modulo-outer}
\end{figure}

\subsubsection{Outer loop}

The Boogie version of the algorithm is a mouthful;
we begin looking at the annotated outer \<while> loop, whose Boogie version is shown in Figure~\ref{fig:boogie:rotation-modulo-outer} (with references to the parts presented later).

%% After setting the current page, flush all pending floats
\afterpage{\clearpage}

\paragraph{Framing using a ghost map.}
A fundamental difficulty we encounter trying to translate the annotated algorithm of Figure~\ref{fig:rotation-by-modulo-algo} into Boogie is the lack of readily available framing annotations.
To prove that the essential outer loop invariant \eqref{eq:modular:outer:essential} is inductive, we have to establish that each new iteration works on new elements of \<b> or, equivalently, that it does not touch the elements set by previous iterations.
This non-interference property ultimately boils down to the fact that the cycle visited by each outer loop iteration is disjoint from the other cycles.
To put this fact to use in the mechanized proof, we introduce ghost state that keeps track precisely of the visited locations.
The axioms in Figure~\ref{fig:boogie:mp-gcd-ic-def} can then be used to prove that the ghost state changes following invariants that reflect progress as in the original loop invariant \eqref{eq:modular:outer:essential}.

Concretely, we introduce a Boolean map \<set> as ghost state: $\@set@[k]$ is \<true> iff the imperative code has changed the value of $\@b@[k]$ from its initial input value $\@a@[k]$ to its correct value in the rotation underway.
This convention makes it possible to decouple framing (``what elements the algorithm modifies'') from functional properties (``how the algorithm modifies the elements''), which simplifies the life of the theorem prover by bringing the annotations closer in form to the axioms used to verify them, and hence also simplifies the task of checking each of them individually.

The invariant on line~\ref{ln:boogie:modulo:outer:essential} in Figure~\ref{fig:boogie:rotation-modulo-outer} restates the essential outer loop invariant \eqref{eq:modular:outer:essential} in terms if \<set>: if \<set[i]> is \<true> then \<b[i]> represents the elements at position \<i> in a rotation of \<a> by \<r>.
This is equivalent to \eqref{eq:modular:outer:essential} if combined with the other invariant on line~\ref{ln:boogie:modulo:outer:framing-essential}: $\@set@[p]$ is \<true> for the same values of $i, s, p$ as in the antecedent of \eqref{eq:modular:outer:essential}.

The two other outer loop invariants about \<set> (lines~\ref{ln:boogie:modulo:outer:framing-same} and \ref{ln:boogie:modulo:outer:framing-notset}) provide the complementary information about what elements have \emph{not} been modified: \<set[i]> is \<false> for all \<i>'s corresponding to values in cycles not visited yet (beginning at indexes larger than or equal to the current value of \<start>); and \<b[i]> is unchanged for these \<i>'s.

We have to appropriately update ghost variable \<set> during the computation.
The inner loop, which performs the actual visits, also sets \<set[i]> to \<true> whenever it assigns to \<b[i]>.
The rest of the program is only responsible for initializing \<set> to all \<false> values, which we do with an \<assume> (line~\ref{ln:boogie:modulo:outer:assume}) rather than with imperative code that would needlessly increase the complexity of verification.

\paragraph{Simplifying program state.}
The remaining bounding outer loop invariants, \eqref{eq:modular:bounding-moved}, \eqref{eq:modular:bounding-start}, and \eqref{eq:modular:bounding-moved-from-start}, constrain the values of \<moved> and \<start>.
They are redundant since the value of \<moved> between iterations of the outer loop is uniquely determined by the value \<start> through \eqref{eq:modular:bounding-moved-from-start}.
We simplify the program state by omitting \<moved> and using \eqref{eq:modular:bounding-moved-from-start} to rewrite properties of \<moved> in terms of \<start>.
Then, \eqref{eq:modular:bounding-start} remains the only bounding invariant of the outer loop, whose staying condition changes from \lstinline[language=PseudoEiffel]|moved /= N| to \<start \< gcd(N, N - r)>.

In practice, we realized that this simplification was very useful, if not necessary, only late while arranging the mechanized proof.
Boogie became very sensitive to adding more annotations and invariants, and it struggled to connect to the postcondition the final state characterized by the outer loop invariant.
Removing the dependence on \<moved> greatly helped, since it simplified the logic of the whole program down to the inner loop (which incremented \<moved>).
Since the imperative parts of the program are modified only minimally (just the assignments that initialize and update \<moved>), and we still prove the same postcondition, we can still consider this a full-fledged mechanized correctness proof of the original algorithm in Figure~\ref{fig:rotation-by-modulo-algo}.

\begin{figure}[!htb]
\begin{boogie}
 call lemma_rotmp(start, a, 0, N, r, k);

 while (v != start) 
 invariant 0 <= v && v < N;
 invariant 0 < k && k <= ic(N, N - r);  (*\label{ln:modular:boogie:inner:var-v}*)
 invariant v == mp(N, N - r, start, k);  (*\label{ln:modular:boogie:bounding-expr}*)
 invariant displaced == a[mp(N, N - r, start, k)];  (*\label{ln:modular:boogie:inner:displaced}*)
 invariant (forall i: int :: 0 <= i && i < N && !set[i] ==> b[i] == a[i]);
 invariant (forall i: int :: k < i && i <= ic(N, N - r) ==> (*\label{ln:boogie:modulo:inner:framing-notset}*)
   !set[mp(N, N - r, start, i)]);
 invariant (forall i: int :: 0 < i && i <= k ==> (*\label{ln:boogie:modulo:inner:framing-set}*)
   set[mp(N, N - r, start, i)]);
 invariant (forall i: int, s: int :: 
   0 <= i && i < ic(N, N - r) && start < s && s < gcd(N, N - r) ==> 
   !set[mp(N, N - r, s, i)]);
 invariant (forall i: int, s: int :: 
   0 <= i && i < ic(N, N - r) && 0 <= s && s < start ==> 
   set[mp(N, N - r, s, i)]);
 invariant (forall i: int :: 0 <= i && i < N && set[i] ==> 
   seq(b, 0, N)[i] == rot(a, 0, N, r)[i]);
 {
    k := k + 1; // ghost
    v := v + N - r;
    if (v >= N) { v := v - N; }
    b[v], displaced := displaced, b[v];
    set[v] := true;  // ghost

    \cforall lemma_mp(N, N - r, start, *);
    call lemma_rotmp(start, a, 0, N, r, k);
 }
\end{boogie}
\caption{Inner loop of the verified Boogie annotated implementation of the rotation by modular visit algorithm of Figure~\ref{fig:rotation-by-modulo-algo}.}
\label{fig:boogie:rotation-modulo-inner}
\end{figure}

%% After setting the current page, flush all pending floats
% \afterpage{\clearpage}

\subsubsection{Inner loop}
The inner loop in Figure~\ref{fig:rotation-by-modulo-algo} is a \lstinline[language=PseudoEiffel]|repeat...until|, whose body is executed at least once.
Boogie only has one kind of loop (the \<while> loop), and hence the inner loop body appears twice: once executed unconditionally right before the inner loop, in Figure~\ref{fig:boogie:rotation-modulo-outer}), and once as body of the inner \<while> loop, in Figure~\ref{fig:boogie:rotation-modulo-inner}.
The following discussion applies to both but focuses on the latter.

\paragraph{Progress in the current cycle using a ghost variable.}
The major novelty in the inner loop is the introduction of a new ghost variable \<k>.
The need for \<k> comes quite naturally from observing that the expression $\@moved@ - \@start@ \cdot \@ic(N, N - r)@$ appears twice in the inner loop invariants.
The value of this expression enumerates the indexes of the current cycle, each visited by an iteration of the inner loop: a value of \<0> corresponds to the first index, a value of \<1> to the second index, and so on.
Thus, we introduce a ghost variable \<k> that keeps track of this value; this is also consistent with our choice to drop \<moved> and represent its information by means of other variables.
\<k> is initialized to \<0> in the outer loop before every execution of the inner loop; and is incremented by one in the inner loop body.
The inner loop's bounding invariants \eqref{eq:modular:inner:var-v} and \eqref{eq:modular:inner:bounding-expr} become the invariants on lines \ref{ln:modular:boogie:inner:var-v} and \ref{ln:modular:boogie:bounding-expr} in Figure~\ref{fig:boogie:rotation-modulo-inner} after substituting \<k> for $\@moved@ - \@start@ \cdot \@ic(N, N - r)@$.

The invariant \eqref{eq:modular:inner:displaced} defines the value of \<displaced> as \lstinline[language=PseudoEiffel]|(old a)[v]|, corresponding to just \<a[v]> in Boogie.
However, this formulation does not work well with Boogie, which reasons more directly if the definition of \<v> is replicated, giving the invariant on line~\ref{ln:modular:boogie:inner:displaced}.

We express the essential inner loop invariants---in particular \eqref{eq:modular:inner:essential}, specific to the inner loop---in terms of \<set> as we expressed the essential outer loop invariant.
\eqref{eq:modular:inner:essential} determines two new invariants on lines~\ref{ln:boogie:modulo:inner:framing-notset} and \ref{ln:boogie:modulo:inner:framing-set}.
Both predicate about indexes in the currently visited cycle.
The former invariant targets those not visited yet, for positions larger than \<k>; the latter targets those visited, for positions up to \<k>.
The remaining framing invariants are as in the outer loop.

\begin{figure}[!htb]
\begin{boogie}[xleftmargin=4pt,xrightmargin=4pt]
procedure lemma_mp(N: int, M: int, s: int, p: int)
 requires 0 < M && M < N;
 requires 0 <= s && s < N;
 requires p >= 0;
 ensures 0 <= mp(N, M, s, p) < N;
{ // proof by induction
  if (p == 0) { } else { call lemma_mp(N, m, s, p - 1); }
}

procedure lemma_rotmp(s: int, a: [int]int, 
                      low: int, high: int, r: int, k: int)
 requires 0 < r && r < high - low && 0 <= s && s < high - low;
 requires k > 0;
 ensures rot(a, low, high, r)[mp(high - low, high - low - r, s, k)]
   == seq(a, low, high)[mp(high - low, high - low - r, s, k - 1)];
{ // proof by induction
  if (k == 1) { } else {
    // (*lemma\_mp*) makes it possible to apply the definition of (*rot*)
    call lemma_mp(high - low, high - low - r, s, k - 1);
  }
}
\end{boogie}
\caption{Lemmas used to prove the inner loop of rotation by modular visit.}
\label{fig:boogie:yp-modtheory-lemmas-for-inner}
\end{figure}

\paragraph{Lemmas to prove inductiveness.}
To prove the inductiveness of the inner loop invariants, Boogie needs a little help in the form of two lemmas about properties of function \<mp>, whose statements and proofs are shown in Figure~\ref{fig:boogie:yp-modtheory-lemmas-for-inner}.
\<lemma_mp> simply bounds \<mp(N, M, s, p)> to be nonnegative and less than \<N>.
This is a consequence of the definition of \<mp> in terms of \<wrap N>, but we need to nudge Boogie to use this property among the many others that could be proved.
The Boogie proof is by induction, corresponding to a conditional \<if> in the lemma procedure: the inductive step calls the lemma for the previous value of \<p - 1> assumed by inductive hypothesis; since the definition of \<mp(N, M, s, p)> is in terms of \<mp(N, M, s, p - 1)>, this is enough to close the proof.\footnote{Boogie does not check that the inductive call is sound by referring to a \emph{smaller} instance, but it should be clear that this is the case here.}

\<lemma_rotmp> asserts that two elements at consecutive indexes in a cycle (that is, two evaluations of function \<mp> for successive values of its last argument), relate elements in a rotation.
This is an important property that explicitly connects the indexes in the cycles to the definition of rotation.
Boogie can prove it by induction: the inductive step calls \<lemma_mp> whose bounds justify the application of the definition of \<rot>; based on this, the SMT solver combines the axiomatic definitions of \<rot> and \<mp> to prove the lemma.

We close the body of the inner loop by calling \<lemma_mp> followed by \<lemma_rotmp>; the order matters since the former asserts a more fundamental property on which the latter builds.
Note that we also need to recall \<lemma_rotmp> before entering the inner loop, to prove initiation after one unconditional execution of the loop body.

There remains one simple element of specification needed to guide Boogie's proof to success.
Even if this is, once again, a consequence of the definition of \<mp>, we have to express it as a new bounding loop invariant on \<v>: \<0 \<= v \< N>.
This guarantees that the accesses to \<b[v]> are within the bounds the other invariants predicate about.
In fact, recalling \<lemma_mp> in the loop body helps prove this invariant, which is then used in the rest of the proof.

\paragraph{Variants and performance.}
The Boogie proof is sensitive to the order in which some invariants appear and the ghost state is updated.
To achieve a bit more robustness, we could add ghost state to make for a more step-wise proof of inductiveness.
For example, we could add a ghost \<c> map that represents the value of \<b> in the previous iteration, so that the inductiveness proof uses facts about \<c> as inductive hypotheses and only has to prove the inductive step about the latest update in \<b>.
We do not discuss this variant in more detail and prefer the terser proof presented above.

\begin{figure}[!htb]
\begin{boogie}[escapeinside={\#}{\#}]
 assert (forall i: int, s: int :: 
   0 <= i && i < ic(N, N - r) && 0 <= s && s < gcd(N, N - r) ==> 
   set[mp(N, N - r, s, i)]);
 assert (forall i: int :: 0 <= i && i < N && set[i] ==> 
   seq(b, 0, N)[i] == rot(a, 0, N, r)[i]);
 assert 0 < N - r && N - r < N;
 \cforall lemma_wrap_bounds(*, gcd(N, N - r));
 assert (forall i: int :: 0 <= i && i < N ==> 
   set[mp(N, N - r, wrap(i, gcd(N, N - r)), 
          yp(N, N - r, wrap(i, gcd(N, N - r)), i))]);
 \cforall lemma_yp_mp(N, N - r, *, set);
 call lemma_extensional(N, N - r, set);
 assert (forall i: int :: 0 <= i && i < N ==> set[i]);
\end{boogie}
\caption{Concluding assertions in the verified Boogie annotated implementation of the rotation by modular visit algorithm of Figure~\ref{fig:rotation-by-modulo-algo}.}
\label{fig:boogie:rotation-modulo-closing}
\end{figure}

\begin{figure}[!hbt]
\begin{boogie}[xleftmargin=4pt,xrightmargin=4pt]
// (*yp(N, m, s, i)*) == (*p*) iff (*mp(N, m, s, p)*) == (*i*)
function yp(N: int, m: int, s: int, i: int) returns(int);
axiom (forall N: int, m: int, i: int :: 
  0 < m && m < N && 0 <= i && i < N ==>
  0 <= yp(N, m, wrap(i, gcd(N, m)), i) < ic(N, m));
axiom (forall N: int, m: int, i: int :: 
  0 < m && m < N && 0 <= i && i < N ==>
  mp(N, m, wrap(i, gcd(N, m)), yp(N, m, wrap(i, gcd(N, m)), i)) == i);

procedure lemma_yp_mp(N: int, m: int, i: int, set: [int]bool)
 requires 0 <= i && i < N;
 requires 0 < m && m < N;
 requires 0 <= wrap(i, gcd(N, m)) < gcd(N, m);
 requires 0 <= yp(N, m, wrap(i, gcd(N, m)), i) < ic(N, m);
 requires mp(N, m, wrap(i, gcd(N, m)), 
             yp(N, m, wrap(i, gcd(N, m)), i)) == i;
 requires set[mp(N, m, wrap(i, gcd(N, m)), 
              yp(N, m, wrap(i, gcd(N, m)), i))];
 ensures set[i];
{ }

procedure lemma_extensional(N: int, m: int, set: [int]bool);
 requires 0 < m && m < N;
 requires (forall i: int :: 0 <= i && i < N  ==>
  0 <= wrap(i, gcd(N, m)) < gcd(N, m));
 requires (forall i: int :: 0 <= i && i < N  ==>
  0 <= yp(N, m, wrap(i, gcd(N, m)), i) < ic(N, m));
 requires (forall i: int :: 0 <= i && i < N  ==>
  mp(N, m, wrap(i, gcd(N, m)), yp(N, m, wrap(i, gcd(N, m)), i)) == i);
 requires (forall i: int :: 0 <= i && i < N  ==>
  set[mp(N, m, wrap(i, gcd(N, m)), 
      yp(N, m, wrap(i, gcd(N, m)), i))]);
 free ensures (forall i: int :: 0 <= i && i < N ==> set[i]);
\end{boogie}
\caption{Lemmas and additional definitions used in the conclusion of the proof of rotation by modular visit.}
\label{fig:boogie:yp-modtheory-lemmas}
\end{figure}

\subsubsection{Proof conclusion}
At high level, the Boogie proof of the postcondition from the outer loop invariants and exit condition follows the same steps as the one illustrated in Section~\ref{sec:modulo:proof}.
The assertions in Figure~\ref{fig:boogie:rotation-modulo-closing} correspond to such final steps: the first two assertions recall the essential outer loop invariants upon exiting the loop; then two simple arithmetic facts about \<N - r> and \<wrap> are recalled (the second fact in the form of a lemma procedure \<wrap_bounds> corresponding to a formal statement of \eqref{eq:wrap_bounds}); then an assertion and a lemma \<yp_mp> capture and use the statement of Lemma~\ref{lm:invert-mp}; the concluding call to \<lemma_extensional> and assertion on the last line in Figure~\ref{fig:boogie:rotation-modulo-closing} are technicalities that we discuss last.

Let us focus on the interesting part of expressing Lemma~\ref{lm:invert-mp} in Boogie.
Informally, Lemma~\ref{lm:invert-mp} shows how to ``invert'' \<mp> so that the indexes it enumerates can be shown to span the whole domain of the input array.
The proof of Lemma~\ref{lm:invert-mp} uses fundamental properties of modular arithmetic that we avoided axiomatizing in detail in Boogie.
Instead, we extend the axiomatization at the same level of abstraction used so far in the mechanized proofs by introducing the definitions in Figure~\ref{fig:boogie:yp-modtheory-lemmas}.
Function \<yp> is like an inverse of \<mp>, as declared by the axioms in Figure~\ref{fig:boogie:yp-modtheory-lemmas}---the second one in particular which declares that \<mp(..., yp(..., i)) == i>.
We postulate its existence, instead of proving it from simpler principles as Lemma~\ref{lm:invert-mp} does.
Then, lemma procedure \<lemma_yp_mp> connects \<mp> to its inverse in the context in which they are used in the proof: if \<set[mp(..., yp(..., i))]> for any \<i> between \<0> and \<N>, then \<set[i]> as well.
The concluding proof in Figure~\ref{fig:boogie:rotation-modulo-closing} recalls \<lemma_yp_mp>.
For speed, it asserts one of the procedure's preconditions before calling to focus the proof context.

At this point it would seem that all facts are available to prove the postcondition.
Procedure \<lemma_yp_mp> concludes that \<set[i]> for all \<0 \<= i < N> and the essential outer loop invariant (repeated by an \<assert>) upon exiting says that \<b[0..N)> is \<a[0..N)>'s rotation for all \<i> such that \<set[i]>.
Nonetheless, we have to shoehorn the final conclusion into lemma procedure \<extensional>, followed by an \<assert> that reaffirms its postcondition.
Specifically, the prover refuses to match the identical quantifications over \<[0..N)> in the \<\\cforall> of \<lemma_yp_mp>'s postcondition and in:
\begin{center}
\begin{boogie}[numbers=none]
      assert (forall i: int :: 0 <= i && i < N ==> set[i]).
\end{boogie}
\end{center}
While Boogie can prove the same assertion if put in the body of \<lemma_extensional>\footnote{The name is because the form of the property reminds one of extensionality.}, it still cannot match it to the lemma's identical postcondition.
Using triggers does not seem to help.
Since this is clearly due to inessential details of quantifier instantiation in the SMT solver, we simply declare the property as a \<free ensures> and use it without guilt.
Finally, note that using a \<free ensures> in a lemma procedure is generally preferable to using an axiom with the same statement, because its application is on demand---where it is really needed---rather than being a fact always available---cluttering the proof anywhere else.

%%% Local Variables: 
%%% mode: latex
%%% TeX-master: "rotation.tex"
%%% End: 

\section{Rotation: mechanized proofs in Dafny}
\label{sec:all-dafny-proofs}
\lstset{language=dafny}

Like Boogie, Dafny~\cite{Dafny,Dafny-tutorial} is an auto-active verifier, but one providing a language and reasoning capabilities at a higher level of abstraction.\footnote{Our experiments used Dafny v.~1.9.1.11022 and Z3 v.~4.3.2.}
In fact, Dafny uses Boogie as back-end, and its programs can be compiled and executed.
These features make it suitable for developing verified algorithm implementations that are straightforward to present; in applying it to the rotation algorithms we will therefore try to be as close as possible to our proofs on paper (discussed in Section~\ref{sec:algos}).

Similar warnings as those raised about the Boogie solutions of Section~\ref{sec:all-boogie-proofs} apply: the solutions we show may not be the only ways to carry out proofs of the rotation algorithms in Dafny; and the addition of new features to Dafny in the future may pave the way to better ones---for example using fewer annotations.
However, our solutions are appropriate given Dafny's current capabilities and the rotation algorithms that we're dealing with.

\subsection{Sequences and rotated sequences} \label{sec:dafny:theory}

Dafny's language offers features that match the idioms we used in Sections~\ref{sec:modular-arithmetic} and~\ref{sec:problem-def} to define sequences and the rotation problem.

Dafny supports arrays with C\#-like syntax; our algorithms will operate on integer arrays.
Given an integer array \<a: array\<int\>>, \<a.Length> denotes its length; \<a[..]> denotes the \emph{sequence} of \<a>'s elements (from position \<0> to position \<a.Length - 1>), whose length \<|a[..]|> is then also \<a.Length>; and \<a[low..high]> denotes the \emph{slice} (or subsequence) of length \<high - low>, which we indicated as \<a[low..high)> in the rest of this paper---that is, all elements from position \<low> included to position \<high> excluded.
Dafny sequences are immutable types used in specifications, just like we used them in the definitions of Section~\ref{sec:problem-def}.
Suitable axioms in Dafny's prelude specify their defining properties, but one doesn't have to deal with such axioms explicitly; this raises the level of abstraction and reduces the chances that we introduce subtle unwanted inconsistencies in how we represent the fundamental entities in our specifications.

\begin{figure}[!hbt]
\begin{boogie}
// (*x*) $\wrap{}{}$ (*y*)
function wrap(x: int, y: int): int
   requires 0 <= x && 0 < y;
{ if x < y then x else wrap(x - y, y) }

// (*rot(S, r)*) is $\rot{}{}\textsuperscript{r}\text{S}$
function rot(S: seq<int>, r: int): seq<int>
   requires 0 <= r < |S|;
   ensures |S| == |rot(S, r)|;
   ensures forall k :: 
      0 <= k < |S| ==> rot(S, r)[k] == S[wrap(k + r, |S|)];
   ensures forall k :: 
      0 <= k < |S| ==> S[k] == rot(S, r)[wrap(k + |S| - r, |S|)];
{ S[r..|S|] + S[0..r] }
\end{boogie}
\caption{Dafny definitions of \<wrap> and rotated sequence \<rot>.}
\label{fig:dafny:seq-rot-def}
\end{figure}

Figure~\ref{fig:dafny:seq-rot-def} lists Dafny definitions of the \<wrap> and \<rot> functions, which closely follow the original definitions \eqref{eq:wrap-def} and \eqref{eq:rot-def}.
Dafny's \<function>s are \emph{ghost}; this means that they can be used in specification but not in executable code---exactly as we use them in Section~\ref{sec:problem-def}.
The explicit definition of a function (the expression between the function's body marked by curly braces) is available to Dafny wherever the function is used.
Additionally, functions may have preconditions (restricting a function's domain) and postconditions; the latter express derived properties of a function, which also are available for reasoning wherever the function is used.
In our case, \<rot>'s postcondition clauses in Figure~\ref{fig:dafny:seq-rot-def} correspond to properties \eqref{eq:lemma-rot} and \eqref{eq:lemma-rot-2} of the rotation function $\rot{}{}$.

\begin{figure}[!htb]
\begin{dafny}
method rotate(a: array<int>, r: int)
   requires a != null;
   requires 0 <= r < a.Length;
   modifies a;
   ensures a.Length == old(a.Length);
   ensures a[..] == rot(old(a[..]), r);
 \end{dafny}
\caption{Specification of rotation routines in Dafny.}
\label{fig:dafny:rotate-spec}
\end{figure}

Figure~\ref{fig:dafny:rotate-spec} shows the specification of a generic rotation routine (a \<method> in Dafny).
Arrays are heap-allocated; hence we check that \<a> is not \<null>, that is, it is attached to a valid array object.
We allow ourselves a bit more generality than in the other sections, and choose to deal with the (trivial) case \<r $\,=\,$ 0>  explicitly.
The rotation algorithms work by modifying the input array in place; hence the frame specification \<modifies a>, that is, \<rotate> may modify the state of array \<a>.
The essential specification is given by referring to the sequence \<a[..]> before (using the \<old> notation) and after \<rotate> executes.
Note that \<old(a)> has a very different semantics than \<old(a[..])> in Dafny, where \<a> is a reference to a heap-allocated array object: \<old(a)> denotes the same reference as \<a> which obviously does not change, whereas \<old(a[..])> denotes the sequence of \<a>'s elements in the pre-state, which changes because the body modifies \<a>'s content.
The postcondition clause \<a.Length == old(a.Length)> is implied by the essential postcondition, but having it explicitly generally enables faster proofs since it suggests an intermediate goal to be proved before reasoning about rotation.

\subsection{Rotation by copy: simplified function definitions} \label{sec:dafny:copy}

Rotation by copy is, as usual, straightforward, but it already exposes some critical behavior of Dafny's that we will have to deal with carefully in more complex proofs.

\begin{figure}[!tb]
\begin{dafny}
// Left-rotate (*a*) by (*r*) by copying
method rotate_copy(a: array<int>, r: int)
   requires a != null;
   requires 0 <= r < a.Length;
   modifies a;
   ensures a[..] == rot(old(a[..]), r);
{
   var N := a.Length;
   var b: array<int>;
   var s: int, d: int;

   if r == 0 { return; }

   b := new int[N];
   s := 0; d := N - r;
   while s < N
   invariant 0 <= s <= N;
   invariant d == wrap(s + N - r, N);
   invariant forall i :: 0 <= i < s ==> a[i] == b[wrap(i + N - r, N)];
   invariant a[..] == old(a[..]);
   {
      b[d] := a[s];
      s, d := s + 1, d + 1;
      // wrap over a's bounds
      if d == N { d := 0; }
   }
   // copy b's content back into a
   copy(a, b);
}
\end{dafny}
\caption{Verified Dafny annotated implementation of the rotation by copy algorithm of Figure~\ref{fig:rotation-by-copy-algo}.}
\label{fig:dafny:rotate-copy}
\end{figure}

Figure~\ref{fig:dafny:rotate-copy} shows the Dafny verified implementation of rotation by copy, which differ from Figure~\ref{fig:rotation-by-copy-algo} only by minor details, such as dealing with the case \<r $\,=\,$ 0> and adding the framing invariant that \<a>'s content is not changed by the loop.

\begin{figure}[!htb]
\begin{dafny}
// Copy (*b*)'s content into (*a*)
method copy(a: array<int>, b: array<int>)
   requires a != null && b != null;
   requires a.Length == b.Length;
   modifies a;
   ensures a != null;
   ensures a[..] == b[..];
{ forall (i | 0 <= i < b.Length) { a[i] := b[i]; } }
\end{dafny}
\caption{Array copy verified in Dafny.}
\label{fig:dafny:array-copy}
\end{figure}

Figure~\ref{fig:dafny:array-copy} displays the straightforward implementation of routine \<copy>, used within \<rotate_copy> to transfer the result from local array \<b> back into \<a>.
\<copy>'s body uses Dafny's \emph{forall statement}, which performs a parallel assignment over data defined by a bounded quantification.
While such forall statements are mainly used in specification (as we will use them in other rotation algorithms), they are executable, and hence \<copy>'s implementation is complete.
Inlining \<copy>'s body directly where it is used in \<rotate_copy> also works but significantly slows down verification.

An important detail for the annotated implementation of \<rotate_copy> to verify is the definition of \<wrap>.
Dafny times out if we use the recursive definition of Figure~\ref{fig:dafny:seq-rot-def}: recursion triggers many otiose instantiations that make the search for a proof very inefficient.
Instead, we simply use \lstinline|if x < y then x else x - y| as \<wrap>'s body, which makes verification terminate in reasonable time.
The simpler definition is equivalent to the one in Figure~\ref{fig:dafny:seq-rot-def} only for \<x \< 2*y>.
You can see that this constraint is always satisfied wherever \<wrap> appears in \<rotate_copy>'s specification: the loop invariant \<s $\,\leq\,$ N> implies that \<s + N - r \< 2*N>, because \<r \> 0> in the loop.
Unfortunately, Dafny cannot check this explicitly: adding a precondition \<requires x \< 2*y> to \<wrap>'s new, non-recursive, definition also triggers a time out.
It is clear, however, that \<rotate_copy> uses \<wrap>'s definition consistently, and hence we can consider verification complete even without such an explicit check.

\afterpage{\clearpage}
\subsection{Rotation by reversal: inductive sequence definitions and splits} \label{sec:dafny:reverse}

Rotation by reversal in Dafny is similar to the Boogie version of Section~\ref{sec:boogie:reverse} in terms of used definitions and lemmas, but leverages inductive (that is, recursive) functional definitions which are higher level and match inductive proofs of lemmas as methods.

\subsubsection{Reversal: definition and lemmas}

We find it effective to define the reversal \<rev> of a sequence recursively; the base case is the empty sequence, which is its own reversal.
Figure~\ref{fig:dafny:rev:def} shows the complete definition, together with the definition of function \<rp> which relates indexes in a reversal as per \eqref{eq:rev-def}.
The postconditions of \<rev> and \<rp> are useful to quickly discharge bounding constraints (about the length of a reversed sequence, and about the range of indexes in a reversal) wherever they functions will be used in specification.

\begin{figure}[!tb]
\begin{dafny}
// Inductive definition of reversal of a sequence
function rev(S: seq<int>): seq<int>
   ensures |rev(S)| == |S|;
{ if |S| == 0 then S else rev(S[1..]) + [S[0]] }

// Position (*k*) maps to in a reversal of index range (*[low..high)*)
function rp(k: int, low: int, high: int): int
   requires low <= k < high;
   ensures low <= rp(k, low, high) < high;
{ high + low - 1 - k }
\end{dafny}
\caption{Dafny definitions of reversed sequence \<rev> and inverse index mapping \<rp> in a reversal.}
\label{fig:dafny:rev:def}
\end{figure}

Since \<rev>'s and \<rp>'s definitions are not obviously related, we introduce a lemma \<rev_is_rp> stating that they provide equivalent representations.
We also provide Lem\-ma~\ref{lm:rev-cat} as \<lemma_rev_cat>; and the property that $(\rev{S})\rev{} = S$ as \<lemma_rev_rev>.
Figure~\ref{fig:dafny:rev:lemmas} shows these three lemmas, which are \emph{ghost} methods in Dafny since they do not belong to the executable part of the code.
The native notation for sequences makes for readable specifications, close to the ones we use on paper (in Dafny, \<+> also denotes sequence concatenation $\cat{}{}$).
All three lemmas are proved through the same correspondence between imperative code and logic proofs~\cite{Jacobs2010} also underlying Boogie's lemma procedures.
Specifically, the crucial step is providing a suitable split of sequences that matches \<rev>'s inductive definition, and using it in a recursive call (\<rev_is_rp>), in an assertion (\<lemma_rev_cat>), or in a call to another lemma (\<lemma_rev_rev>).

\begin{figure}[!bt]
\begin{dafny}
ghost method rev_is_rp(S: seq<int>, T: seq<int>)
   requires |S| == |T|;
   requires forall i :: 0 <= i < |S| ==> S[i] == T[rp(i, 0, |S|)];
   ensures T == rev(S);
{ if |S| > 0 { rev_is_rp(S[1..], T[0..|S| - 1]); } }
   
ghost method lemma_rev_cat(S: seq<int>, T: seq<int>)
   ensures rev(S + T) == rev(T) + rev(S);
{ if |S| > 1 { assert (S + T)[1..] == S[1..] + T; } }

ghost method lemma_rev_rev(S: seq<int>)
   ensures rev(rev(S)) == S;
{ if |S| > 0 { lemma_rev_cat(rev(S[1..]), [S[0]]); } }
\end{dafny}
\caption{Dafny lemmas (ghost methods) about rotation and reversal. 
Method \<rev_is_rp> asserts the equivalence between the pointwise definition \<rp> of indexes and the inductive definition \<rev> of reversal.
Method \<lemma_rev_cat> expresses Lemma~\ref{lm:rev-cat}.
Method \<lemma_rev_rev> proves that \<rev> is idempotent.
}
\label{fig:dafny:rev:lemmas}
\end{figure}

\begin{figure}[!hbt]
\begin{dafny}
method reverse(a: array<int>, low: int, high: int)
   requires a != null;
   requires 0 <= low <= high <= a.Length;
   modifies a;
   ensures a != null;
   ensures a.Length == old(a.Length);
   ensures a[low..high] == rev(old(a[..])[low..high]);
   ensures forall i :: 
      low <= i < high ==> old(a[..])[i] == a[rp(i, low, high)];
   ensures forall i :: 0 <= i < low ==> a[i] == old(a[..])[i];
   ensures forall i :: high <= i < a.Length ==> a[i] == old(a[..])[i];
{
   var p: int, q: int;
   p, q := low, high - 1;
   while p < q + 1
   invariant low <= p <= q + 2 <= high + 1;
   invariant q == high + low - 1 - p;
   invariant forall i :: 
      low <= i < p ==> old(a[..])[i] == a[rp(i, low, high)];
   invariant forall i :: 
      q < i < high ==> old(a[..])[i] == a[rp(i, low, high)];
   // frame invariants
   invariant forall i :: p <= i <= q ==> old(a[..])[i] == a[i];
   invariant forall i :: 0 <= i < low ==> old(a[..])[i] == a[i];
   invariant forall i :: high <= i < a.Length ==> old(a[..])[i] == a[i];
   {
      // swap (*a[p]*) and (*a[q]*)
      a[p], a[q] := a[q], a[p];
      p, q := p + 1, q - 1;
   }
   rev_is_rp(old(a[..])[low..high], a[low..high]);
}
\end{dafny}
\caption{Verified Dafny implementation of the in-place reversal algorithm of Figure~\ref{fig:reversal-algo}.}
\label{fig:dafny:rotate}
\end{figure}

\subsubsection{Mechanized proofs of reversal and rotation by reversal}

The proofs of reversal and rotation by reversal depend on the lemmas and definitions we just introduced.

\paragraph{Proof of in-place reversal.}
Figure~\ref{fig:dafny:rotate} shows the in-place reversal algorithm \<reverse> in Dafny.
Its postcondition is partially redundant, in that the third \<ensures> clause \<a[low..high] == rev(old(a[..])[low..high])> and the next one are equivalent according to \<rev_is_rp>, which is in fact called at the end of \<reverse>'s body to prove one postcondition from the other.
It turns out that this redundancy helps significantly to verify \<reverse> and to effectively use it in \<rotate_reverse>.
In a nutshell, \<reverse>'s loop invariants is best expressed in terms of \<rp> and explicit quantification; however, the sequence representation \<rev> is useful in other contexts.
We retain both by proving their equivalence as part of \<reverse>'s postcondition once and for all, so that Dafny can pick either one whenever reasoning about \<reverse> in clients.
The only other difference between \<reverse> in Figure~\ref{fig:reversal-algo} and in Figure~\ref{fig:dafny:rotate} is that Dafny's version has frame invariants (the last three clauses of the loop invariants) that specify the portions of \<a> that are \emph{not} modified by the loop.

\paragraph{Proof of rotation by reversal.}
Figure~\ref{fig:dafny:rotation-reverse} displays Dafny's verified rotation by reversal algorithm.
As we know from the previous proofs, the fundamental property needed for verification is \<lemma_rev_cat>, which is called after the two partial reversals of \<a[0..r]> and \<a[r..a.Length]> and before the final reversal of the whole \<a[0..a.Length]>.
Dafny also needs to realize that \<a[0..r]> and \<a[r..a.Length]> have each been reversed twice (once in a partial reversal, and once in the final complete reversal), and hence their elements appear in the original order in \<a>'s final state; to this end, we call \<lemma_rev_rev> twice after asserting the double reversal explicitly.
Finally, it is helpful to provide sequence representations of \<a>'s content after each reversal in terms of how \<a[..]> is split into the concatenation of slices, which provides intermediate goals that are proved by automatically instantiating Dafny's sequence axioms; this is the purpose of the remaining three \<assert>s in \<rotate_reverse>'s body.
Compared to the Boogie solution of Figure~\ref{fig:boogie:rotate}, Dafny's needs more intermediate assertions and lemmas because reasoning occurs at the higher level of abstraction provided by sequences (whereas Boogie dealt with index arithmetic directly).

\begin{figure}[!thb]
\begin{dafny}
// Left-rotate (*a*) by (*r*) by performing three reversals.
method rotate_reverse(a: array<int>, r: int)
   requires a != null;
   requires 0 <= r < a.Length;
   modifies a;
   ensures a.Length == old(a.Length);
   ensures a[..] == rot(old(a[..]), r);
{
   reverse(a, 0, r);
   assert old(a[..])[r..a.Length] == a[r..a.Length];
   reverse(a, r, a.Length);
   assert a[..] == 
          rev(old(a[..])[0..r]) + rev(old(a[..])[r..a.Length]);
   lemma_rev_cat(a[0..r], a[r..a.Length]);
   assert a[..] == a[0..a.Length];
   reverse(a, 0, a.Length);
   assert a[..] == rev(rev(old(a[..])[r..a.Length])) + 
                   rev(rev(old(a[..])[0..r]));
   lemma_rev_rev(old(a[..])[r..a.Length]);
   lemma_rev_rev(old(a[..])[0..r]);
}
\end{dafny}
\caption{Verified Dafny implementation of the rotation by reversal algorithm of Figure~\ref{fig:rotation-reverse-algo}.}
\label{fig:dafny:rotation-reverse}
\end{figure}

\begin{figure}[!htb]
\begin{dafny}
method swap_sections(a: array<int>, low: int, high: int, d: int)
   requires a != null;
   requires 0 <= low <= low + d <= high - d <= high <= a.Length;
   modifies a;
   ensures a[low..low + d] == old(a[high - d.. high]);
   ensures a[low + d..high - d] == old(a[low + d..high - d]);
   ensures a[high - d..high] == old(a[low..low + d]);
   ensures a[low..high] == 
              old(a[high - d.. high]) + 
              old(a[low + d..high - d]) + old(a[low..low + d]);
   ensures forall i :: 0 <= i < low ==> a[i] == old(a[i]);
   ensures forall i :: high <= i < a.Length ==> a[i] == old(a[i]);
{
   var x: int, z: int;
   x, z := low, high - d;
   while x < low + d
   invariant low <= x <= low + d;
   invariant high - d <= z <= high;
   invariant x - low == z - (high - d);
   invariant forall i :: 
      low <= i < x ==> a[i] == old(a[..])[high - d + i - low];
   invariant forall i :: x <= i < high - d ==> a[i] == old(a[..])[i];
   invariant forall i :: 
      high - d <= i < z ==> a[i] == old(a[..])[low + i - (high - d)];
   invariant forall i :: z <= i < high ==> a[i] == old(a[..])[i];
   // frame invariants
   invariant forall i :: 0 <= i < low ==> a[i] == old(a[i]);
   invariant forall i :: high <= i < a.Length ==> a[i] == old(a[i]);
   {
      // swap (*a[x]*) and (*a[z]*)
      a[x], a[z] := a[z], a[x];
      x, z := x + 1, z + 1;
   }
}
\end{dafny}
\caption{Verified Dafny implementation of the in-place slice swapping algorithm of Figure~\ref{fig:swap-algo}.}
\label{fig:dafny:swap-algo}
\end{figure}

\begin{figure}[!hbt]
\begin{dafny}
ghost method left_is_smaller(X: seq<int>, Y: seq<int>, Z: seq<int>, d: int)
   requires |X| == |Z| == d;
   requires |Y| > 0;
   ensures rot(X + Y + Z, d) == rot(Z + Y, d) + X;
{  }

ghost method right_is_smaller(X: seq<int>, Y: seq<int>, Z: seq<int>, d: int)
   requires 0 < |X| == |Z| == d;
   requires |Y| > 0;
   ensures |X| + |Y| + |Z| - 2*d < |X| + |Y| + |Z| - d;
   ensures rot(X + Y + Z, |X| + |Y| + |Z| - d) == 
           Z + rot(Y + X, |X| + |Y| + |Z| - 2*d);
{  }
\end{dafny}
\caption{Lemma~\ref{lm:rot-swap} for cases \emph{left is smaller} \eqref{eq:lemma-rot-swap-left} and \emph{right is smaller} \eqref{eq:lemma-rot-swap-right}.}
\label{fig:dafny:lemma-rot-left-right-smaller}
\end{figure}

% \clearpage
\subsection{Rotation by swapping: sliced sequences} \label{sec:dafny:swap}

The main theme of the verified Dafny implementations of the rotation by swapping algorithms (recursive and iterative versions) is the necessity of expressing equivalent slicing of sequences, so as to guide the instantiations of sequence axioms into producing facts useful for the proofs at hand.

\subsubsection{Mechanized proof of swapping sections}
We start as usual with the verified implementation of \<swap_sections> in Figure~\ref{fig:dafny:swap-algo}.
In addition to the three postcondition clauses already present in the pseudo-code version of Figure~\ref{fig:swap-algo}, the Dafny implementation also includes framing clauses (the last two \<ensures>) that assert the parts of \<a> that are not changed (that is, those outside range \<[low..high)>).
For performance, it is also very useful to add a redundant postcondition clause that expresses \<a[low..high]> as the concatenation of three slices of \<old(a[low..high]>, so that the reasoning engine can seamlessly pass from integral to sliced representations.
The rest of the annotations are straightforward, but it is worth pointing out that the loop invariants use explicit quantification (instead of sliced sequences) because it turns out to be much more efficient---notwithstanding the usage of the sliced form in the postconditions to be proved---, probably because it facilitates direct reasoning about the assignments into \<a> in the loop body without converting back and forth from its sequence representation.

\subsubsection{Lemmas about swapping}  \label{sec:dafny:lemma-procedures-swapping}

As you can see in Figure~\ref{fig:dafny:lemma-rot-left-right-smaller}, it is straightforward to express and to prove Lemma~\ref{lm:rot-swap} for both case \emph{left is smaller} \eqref{eq:lemma-rot-swap-left} and case \emph{right is smaller} \eqref{eq:lemma-rot-swap-right}: Dafny's sequence axiomatization and our definition of rotation \<rot> work with only one simple additional annotation: the clause \<ensures |X| + |Y| + |Z| - 2*d \< |X| + |Y| + |Z| - d>, which follows trivially from the precondition \<d \> 0>, but is needed to help verify consistency of definitions through arithmetic instantiations.
Contrast this to the laborious details that had to be provided for the Boogie proofs of the same lemmas in Section~\ref{sec:boogie:lemma-procedures-swapping}.

Another important detail, not shown in Figure~\ref{fig:dafny:lemma-rot-left-right-smaller}, is that we have to use a definition of \<rot> that omits the postcondition clauses relating \<rot> to \<wrap> (the last two clauses in Figure~\ref{fig:dafny:seq-rot-def}).
In fact, we have to completely omit \<wrap> from the Dafny file with rotation by swapping.
Otherwise, \<wrap> triggers the discovery of very many irrelevant facts whenever \<rot> is used in specifications, and hence the whole proof attempt times out.

\begin{figure}[!htb]
\begin{dafny}
// Left-rotate (*a*) by (*r*) by swapping equal segments 
method rotate_swap(a: array<int>, r: int)
   requires a != null;
   requires 0 <= r < a.Length;
   modifies a;
   ensures a.Length == old(a.Length);
   ensures a[..] == rot(old(a[..]), r);
{
   assert a[..] == a[0..a.Length];
   rotate_swap_helper(a, 0, r, a.Length);
}
\end{dafny}
\caption{Verified Dafny implementation of the rotation by swapping recursive algorithm of Figure~\ref{fig:rotation-swap-algo}.}
\label{fig:dafny:rotation-swap}
\end{figure}

\begin{figure}[!htb]
\begin{dafny}
method rotate_swap_helper(a: array<int>, low: int, p: int, high: int)
   requires a != null;
   requires 0 <= low <= p < high <= a.Length;
   modifies a;
   decreases high - low;
   ensures a[low..high] == rot(old(a[low..high]), p - low);
   ensures forall i :: 0 <= i < low ==> a[i] == old(a[i]);
   ensures forall i :: high <= i < a.Length ==> a[i] == old(a[i]);
{
   if low < p < high {
      if p - low == high - p {
         swap_sections(a, low, high, p - low);
      } else { 
      if p - low < high - p {  // left is smaller
         swap_sections(a, low, high, p - low);
         assert a[low..high - (p - low)] == 
                   old(a[high - (p - low)..high]) + 
                   old(a[p..high - (p - low)]);
         rotate_swap_helper(a, low, p, high - (p - low));
         left_is_smaller(old(a[low..p]), old(a[p..high - (p - low)]), 
                         old(a[high - (p - low)..high]), p - low);
         assert old(a[low..high]) == 
                   old(a[low..p]) + old(a[p..high - (p - low)]) + 
                   old(a[high - (p - low)..high]);
      } else {
      if p - low > high - p {  // right is smaller
         swap_sections(a, low, high, high - p);
         assert a[low + (high - p)..high] == 
                   old(a[low + (high - p)..p]) + 
                   old(a[low..low + (high - p)]);
         rotate_swap_helper(a, low + (high - p), p, high);
         right_is_smaller(old(a[low..low + (high - p)]), 
                          old(a[low + (high - p)..p]), 
                          old(a[p..high]), high - p);
         assert old(a[low..high]) == 
                   old(a[low..low + (high - p)]) + 
                   old(a[low + (high - p)..p]) + old(a[p..high]);
         assert a[low..high] == 
                   a[low..low + (high - p)] + 
                   a[low + (high - p)..low + 2*(high - p)] + 
                   a[low + 2*(high - p)..high];
         assert a[low..low + (high - p)] == old(a[p..high]);
      }}}
   }
}
\end{dafny}
\caption{Verified Dafny implementation of the helper method of the rotation by swapping recursive algorithm of Figure~\ref{fig:rotation-swap-algo}.}
\label{fig:dafny:rotation-swap-helper}
\end{figure}

\subsubsection{Mechanized proof of rotation by swapping: recursive version} \label{sec:swap:dafny:proof:recursive}

Figure~\ref{fig:dafny:rotation-swap} shows \<rotate_swap>, the Dafny recursive version of rotation by swapping.
It is all but trivial, with the only possible exception of having to assert the obvious equivalence between \<a[..]> and \<a[0..a.Length]>.

The core of the algorithm is in \<rotate_swap_helper> in Figure~\ref{fig:dafny:rotation-swap-helper}, which works recursively on smaller slices of the input \<a>.
The main conditional has three cases; the nontrivial branches correspond to the ``left is smaller'' and ``right is smaller'' branches.
In each case, we call the corresponding ``\<_is_smaller>'' lemma and assert a number of equivalences between concatenations of sequence slices.
For example, when left is smaller: assert the effect of the call to \<swap_sections>, recursively call---on the part yet to be swapped---\<rotate_swap_helper>, recall lemma \<left_is_smaller> on portions of \<old(a[..])>, and suggest a decomposition of \<old(a[..])> that matches the one of the lemma.
The case right is smaller is slightly more involved because it has to connect a rotation by \<p - low> and one by \<2*p - (low + high)>, but its overall structure is still similar.

\begin{figure}[!phtb]
\begin{dafny}
method rotate_swap_iterative(a: array<int>, r: int)
   requires a != null;
   requires 0 <= r && r < a.Length;
   modifies a;
   ensures a.Length == old(a.Length);
   ensures a[..] == rot(old(a[..]), r);
{
   if r == 0 { return; }
   var low: int, high: int, p: int;
   low, p, high := 0, r, a.Length;
   while low < p < high
   invariant 0 <= low <= p <= high <= a.Length;
   invariant low == p <==> p == high;
   invariant old(a[..]) == old(a[0..a.Length]);
   invariant forall i :: 0 <= i < low ==> a[i] == rot(old(a[..]), r)[i];
   invariant forall i :: 
      high <= i < a.Length ==> a[i] == rot(old(a[..]), r)[i];
   invariant p - low < high - low ==> 
      rot(old(a[..]), r)[low..high] == rot(a[low..high], p - low);
   decreases high - low;
   {   if p - low == high - p {
         swap_sections(a, low, high, p - low);
         assert a[low..high] == rot(old(a[..]), r)[low..high];
         assert rot(old(a[..]), r) == 
                rot(old(a[..]), r)[0..low] 
                   + rot(old(a[..]), r)[low..high] 
                   + rot(old(a[..]), r)[high..a.Length];
         low, high := low + (p - low), high - (high - p);
      } else {
      if p - low < high - p {  // left is smaller
         ghost var b := a[..];
         swap_sections(a, low, high, p - low);
         assert b[low..high] == b[low..p] + b[p..high - (p - low)] + 
                               b[high - (p - low)..high];
         left_is_smaller(b[low..p], b[p..high - (p - low)], 
                         b[high - (p - low)..high], p - low);
         assert a[low..p] + a[p..high - (p - low)] == 
                a[low..high - (p - low)];
         assert rot(old(a[..]), r) == 
                rot(old(a[..]), r)[0..low] 
                   + rot(old(a[..]), r)[low..high] 
                   + rot(old(a[..]), r)[high..a.Length];
         high := high - (p - low);
      } else {
      if p - low > high - p {  // right is smaller
         // See code in (*Figure~\ref{fig:dafny:rotation-swap-iter-right-is-smaller}*)
      }}}
   }
}
\end{dafny}
\caption{Verified Dafny implementation of the rotation by swapping iterative algorithm of Figure~\ref{fig:rotation-swap-algo-iter}.}
\label{fig:dafny:rotation-swap-iter}
\end{figure}

\begin{figure}[!htb]
\begin{dafny}
   ghost var b := a[..];
   swap_sections(a, low, high, high - p);
   assert b[low..high] == b[low..low + (high - p)] + 
                         b[low + (high - p)..p] + b[p..high];
   right_is_smaller(b[low..low + (high - p)], b[low + (high - p)..p], 
                    b[p..high], high - p);
   assert rot(b[low..high], p - low) == 
       b[p..high] + rot(b[low + (high - p)..p] + 
       b[low..low + (high - p)], 2*p - (low + high));
   assert a[low + (high - p)..p] + a[p..high] == 
          a[low + (high - p)..high];
   assert rot(old(a[..]), r) == 
       rot(old(a[..]), r)[0..low] 
          + rot(old(a[..]), r)[low..high] 
          + rot(old(a[..]), r)[high..a.Length];
   low := low + (high - p);
\end{dafny}
\caption{``Right is smaller'' code referenced in Figure~\ref{fig:dafny:rotation-swap-iter}.}
\label{fig:dafny:rotation-swap-iter-right-is-smaller}
\end{figure}

\subsubsection{Mechanized proof of rotation by swapping: iterative version}
\label{sec:swap:dafny:proof:iterative}

When switching from the recursive to the iterative version of rotation by swapping, we can reuse the lemmas \<left_is_smaller> and \<right_is_smaller> as they're defined in Figure~\ref{fig:dafny:lemma-rot-left-right-smaller}.
In fact, the overall structure of the annotations in \<rotate_swap_iterative> in Figures~\ref{fig:dafny:rotation-swap-iter} and \ref{fig:dafny:rotation-swap-iter-right-is-smaller} looks similar to those in the recursive \<rotate_swap>.
We highlight the seemingly small differences, which required a number of experiments to figure out the slice representations that work best with Dafny's reasoning capabilities.

First, we added a postcondition clause to \<swap_sections>:
\begin{center}
\begin{dafny}[numbers=none]
    ensures forall i :: low + d <= i < high - d ==> a[i] == old(a[i]);
\end{dafny}
\end{center}
without changing its implementation or the rest of its specification.
This is redundant, but it is needed to easily convert between the quantified and sliced representation of the portion of \<a> over indexes \<[low + d..high - d)>.

Then, consider the loop annotations of \<rotate_swap_iterative> in Figure~\ref{fig:dafny:rotation-swap-iter}.
Compared to the pseudo-code version of Figure~\ref{fig:rotation-swap-algo-iter}, the Dafny version introduces a loop invariant that specifies the equivalence of \<old(a[..])> and \<old(a[0..a.Length])>.
This is an invariant of any loop, since it relies on the very definition of \<a[..]>; however, having it explicitly is crucial for verification to terminate in reasonable time.
The Dafny loop also has a \emph{variant clause} \<decreases high - low> to prove termination; Dafny is able to infer suitable variants in the other loops but needs an annotation in this case.

The body of the loop includes the by-now familiar three-way conditional with the ``left is smaller'' and ``right is smaller'' cases.
The various assertions suggest suitable splittings of sequences into concatenation of slices; figuring out which splittings work best is a trial-and-error process: one normally starts with a detailed proof outline with many asserts, and then removes asserts one at a time, as long as the proof still goes through.
Another new element is the usage of a local ghost variable \<b>, which stores the content of \<a> just before swapping sections.
This ghost variable is needed to directly relate the state of \<a> before and after the call to \<swap_sections>.
It was not needed in the recursive version, where \<old(a[..])> refers to the state of \<a> right before each current swap, but it is instrumental here in avoiding overly complicated framing loop invariants.

\afterpage{\clearpage}

\subsection{Rotation by modular visit: abstraction in ghost code} \label{sec:dafny:modulo}

The key challenge in building a Dafny verified implementation of the rotation by modular visit algorithm is providing suitable abstractions in defining and proving the functions used to characterize cycles.

\begin{figure}[!htb]
\begin{dafny}
function gcd(x: int, y: int): int
   requires 0 < x && 0 < y;
   ensures 0 < gcd(x, y) <= x && gcd(x, y) <= y;
   ensures x % gcd(x, y) == 0 && y % gcd(x, y) == 0;
   ensures forall z :: 1 < z && x % z == 0 && y % z == 0 ==> z <= gcd(x, y);

function tau(x: int, y: int): int
   requires 0 < x && 0 < y;
   ensures gcd(x, y) * tau(x, y) == x;
   ensures 0 < tau(x, y) <= x;
{ x / gcd(x, y) }

function mp(N: int, M: int, S: int, p: int): int
   requires 0 < M < N;
   requires 0 <= S < N;
   requires 0 <= p;
   ensures 0 <= mp(N, M, S, p) < N;
{ if p == 0 then S else wrap(mp(N, M, S, p - 1) + M, N) }
\end{dafny}
\caption{Dafny function definitions of \<gcd>, \<tau>, and \<mp>.}
\label{fig:dafny:mp-gcd-ic-def}
\end{figure}

\subsubsection{Ghost functions and fundamental properties}
The specification of rotation by modular visit in Figure~\ref{fig:rotation-by-modulo-algo} relies on functions $\gcd(x, y)$, $\ic{x}{y}$, and $\mpos{N}{M}{S}{p}$, which we define as \<gcd>, \<tau>, and \<mp> in Dafny.
As shown in Figure~\ref{fig:boogie:mp-gcd-ic-def}, \<gcd> is an uninterpreted function whose postcondition captures the definition of greatest common divisor of \<x> and \<y>.
One could provide a definition, for example a recursive one, and establish the postconditions from the definition; this would an interesting exercise, but a detour from our goal of proving rotation by modular visit.
By contrast, function \<tau>'s definition follows \eqref{eq:ic-def}; function \<mp>'s is equivalent to \eqref{eq:mp-def} but is inductive (which works with \<wrap>'s non-recursive definition).
Dafny proves \<tau>'s and \<mp>'s postconditions from their definitions and \<gcd>'s defining postconditions.

\begin{figure}[!htb]
\begin{dafny}
function gcd(x: int, y: int): int
   requires 0 < x && 0 < y;
   ensures 0 < gcd(x, y) <= x && gcd(x, y) <= y;

function tau(x: int, y: int): int
   requires 0 < x && 0 < y;
   ensures gcd(x, y) * tau(x, y) == x;
   ensures 0 < tau(x, y) <= x;

function mp(N: int, M: int, S: int, p: int): int
   requires 0 < M < N;
   requires 0 <= S < N;
   requires 0 <= p;
   ensures 0 <= mp(N, M, S, p) < N;
{ if p == 0 then S else 
    (if mp(N, M, S, p - 1) + M < N then mp(N, M, S, p - 1) + M 
                                   else mp(N, M, S, p - 1) + M - N) }
\end{dafny}
\caption{Simplified function definitions of \<gcd>, \<tau>, and \<mp>.}
\label{fig:boogie:mp-gcd-ic-def-stripped}
\end{figure}

\begin{figure}[!htb]
\begin{dafny}
ghost method rotate_zero(S: seq<int>)
   requires 0 < |S|;
   ensures rot(S, 0) == S;
{  }

ghost method lemma_lesseq(X: int, Y: int, F: int)
   requires X <= Y;
   requires 0 <= F;
   ensures X * F <= Y * F;
{  }

ghost method lemma_rotmp(A: seq<int>, r: int, S: int, K: int)
   requires 0 < r < |A|;
   requires 0 <= S < gcd(|A|, |A| - r);
   requires 0 < K <= tau(|A|, |A| - r);
   ensures rot(A, r)[mp(|A|, |A| - r, S, K)] == 
           A[mp(|A|, |A| - r, S, K - 1)];
{  }

ghost method lemma_inverse_tau_gcd(N: int, M: int, S: int)
   requires 0 < M < N;
   requires 0 <= S;
   ensures S * tau(N, M) < N ==> S < gcd(N, M);
{  }
\end{dafny}
\caption{Lemmas about \<gcd>, \<tau>, \<mp>, and \<rot> in Dafny.}
\label{fig:dafny:modulo:gcd-lemmas}
\end{figure}

Directly using the definitions of Figure~\ref{fig:boogie:mp-gcd-ic-def} in the proof of rotation by modular visit does not work.
The availability of function bodies and postconditions involving nontrivial arithmetic (modulo and division) floods the proof environment with many facts that are not directly necessary in the algorithm's correctness proof, and thus gobble down the SMT solver.
Therefore, we provide complete definitions and their proofs in a separate file (using Dafny's modules, which we don't discuss here), and then import stripped-down definitions in the main file that also contains the algorithm.
Figure~\ref{fig:boogie:mp-gcd-ic-def-stripped} shows the stripped-down definitions: \<gcd> and \<tau> only retain ``bounding'' postconditions and the essential property that $\gcd(x, y) \cdot \ic{x}{y} = x$; \<mp> has a definition that inlines \<wrap>'s definition so as to have one less function to deal with.
These stripped down facts are all that is necessary to know about \<gcd>, \<tau>, and \<mp> for a correctness proof of the main algorithm.

Figure~\ref{fig:dafny:modulo:gcd-lemmas} displays four straightforward lemmas, which directly follow from the definitions of \<rot>, \<mp>, \<gcd>, and \<tau>.
Even if their proofs are trivial given those definitions, it is useful to factor them out as lemmas and recall them in the main proof only where needed to provide suitable intermediate goals.

\begin{figure}[!htb]
\begin{dafny}
ghost method lemma_mp_disjoint_cycles(N: int, M: int, S: int, K: int)
   requires 0 < M < N;
   requires 0 <= S < gcd(N, M);
   requires 0 <= K;
   ensures forall t, q :: 0 <= q < tau(N, M) && S < t < gcd(N, M) 
      ==> mp(N, M, S, K) != mp(N, M, t, q);

ghost method lemma_mp_complete_cycle(N: int, M: int, S: int)
   requires 0 < M < N;
   requires 0 <= S < gcd(N, M);
   ensures mp(N, M, S, 0) == mp(N, M, S, tau(N, M));

ghost method lemma_mp_incomplete_cycle(N: int, M: int, S: int)
   requires 0 < M < N;
   requires 0 <= S < gcd(N, M);
   ensures forall p, q :: 0 <= p < q < tau(N, M) 
      ==> mp(N, M, S, p) != mp(N, M, S, q);

ghost method lemma_complete_rotation(N: int, M: int, D: set<int>)
   requires 0 < M < N;
   requires forall i, s :: 0 <= i < tau(N, M) && 0 <= s < gcd(N, M) 
      ==> mp(N, M, s, i) in D;
   ensures forall i :: 0 <= i < N ==> i in D;
\end{dafny}
\caption{Fundamental properties of cycles in Dafny.}
\label{fig:dafny:modulo:axiom-like-lemmas}
\end{figure}

\subsubsection{Lemmas about cycles}

The correctness argument of rotation by modular visit relies on the noteworthy properties of cycle decomposition discussed in Section~\ref{sec:algos:modular}.
We express those properties as lemmas (ghost methods) in Dafny: see Figure~\ref{fig:dafny:modulo:axiom-like-lemmas}.
Lemma \<mp_disjoint_cycles> says that cycles in the decomposition are disjoint, that is, the elements of a cycle that starts at $s$ all differ from the elements of a cycle that starts at $t \neq s$.
Lemma \<mp_complete_cycle> says that cycles have length \<tau(N, M)>, so that the \<tau(N, M)>$\!$th element is the same as the first one.
Lemma \<mp_incomplete_cycle> says that elements in a cycle are unique, that is, there are no repetition until the cycle is complete.
Finally, lemma \<complete_rotation> says that the union of cycles starting from positions $0 \leq s < \gcd(N, M)$ includes all indexes $0 \leq i < N$; the lemma's statement mentions a set \<D> of indexes, which will appear as a ghost variable in the algorithm's implementation.
We leave proving these lemmas from the definitions of \<mp>, \<tau>, and \<gcd> as an exercise in the programs-as-proofs paradigm.

\begin{figure}[!htb]
\begin{dafny}
// Left-rotate (*a*) by (*r*) by modular visit of its elements
method rotate_modulo(a: array<int>, r: int)
   requires a != null;
   requires 0 <= r < a.Length;
   modifies a;
   ensures a.Length == old(a.Length);
   ensures a[..] == rot(old(a[..]), r);
{
   if r == 0 {
      rotate_zero(old(a[..]));
      return;
   }

   var start: int, v: int, moved: int; 
   var displaced: int;
   ghost var done := {};

   start := 0;
   moved := 0;
   while moved != a.Length
   invariant 0 <= moved <= a.Length;
   invariant 0 <= start <= gcd(a.Length, a.Length - r);
   invariant moved < a.Length ==> start < gcd(a.Length, a.Length - r);
   invariant moved == start * tau(a.Length, a.Length - r);
   invariant forall i, s :: 0 <= i <= tau(a.Length, a.Length - r) && 
      start <= s < gcd(a.Length, a.Length - r)
      ==> mp(a.Length, a.Length - r, s, i) !in done;
   invariant forall i, s :: 
      0 <= i < tau(a.Length, a.Length - r) && 0 <= s < start
      ==> mp(a.Length, a.Length - r, s, i) in done;
   invariant forall i :: 
      0 <= i < a.Length && i !in done ==> a[i] == old(a[..])[i];
   invariant forall i :: 0 <= i < a.Length && i in done 
      ==> a[i] == rot(old(a[..]), r)[i];
   {
      v, displaced := start, a[start];
      ghost var k := 0;

      // Inner loop here: see (*Figure~\ref{fig:dafny:rotation-modulo-inner}*)

      start := start + 1;

      // Concluding reasoning steps here: see (*Figure~\ref{fig:dafny:rotation-modulo-closing}*)
   }
   lemma_complete_rotation(a.Length, a.Length - r, done);
}
\end{dafny}
\caption{Verified Dafny implementation of the rotation by modular visit algorithm of Figure~\ref{fig:rotation-by-modulo-algo}.}
\label{fig:dafny:rotation-modulo-outer}
\end{figure}

\afterpage{\clearpage}
\subsubsection{Outer loop}

We are finally ready to describe the verified implementation of rotation by modular visit in Dafny.
We start at the top level: the outer loop outlined in Figure~\ref{fig:dafny:rotation-modulo-outer}.
There is a close similarity between the Dafny code and the pseudo-code of Figure~\ref{fig:rotation-by-modulo-algo}.
As usual, a mechanized proof requires more annotations.
Similarly to what we did in Boogie, we keep track of what indexes in \<a> are modified at any point using ghost state.
The Boogie solution uses an array of Booleans; perhaps more elegantly, the Dafny solution uses a \emph{set} \<done> of integers (\<set>s are another native immutable type for specifications available in Dafny).
Correspondingly, the essential loop invariants specify which cyclic indexes are in \<done> and which are not; and assert that \<a[i]> is rotated if \<i> is in \<done>, and is otherwise equal to \<a[i]> in the pre-state.
Finally, the loop invariant \<moved \< a.Length ==\> $\:$start \< gcd(a.Length, a.Length - r)> keeps track of the fact that the outer loop terminates precisely when \<start> is incremented to \<gcd(a.Length, a.Length - r)>.
This relates the value of \<start> to the exit condition \<moved ==$\;$ a.Length>, so that, through \<lemma_complete_rotation>,  we can conclude that the rotation is complete upon exiting the outer loop.

Since inductiveness of the outer loop's invariants crucially depends on the form of the inner loop's invariants, we discuss it in Section~\ref{sec:closing-outer-modulo} after presenting the inner loop.

\begin{figure}[!htb]
\begin{dafny}
// one unconditional iteration of the inner loop
k := k + 1;
v := v + a.Length - r;
if v >= a.Length { v := v - a.Length; }
a[v], displaced := displaced, a[v];
moved := moved + 1;
done := done + { v };

lemma_mp_disjoint_cycles(a.Length, a.Length - r, start, k);
lemma_mp_incomplete_cycle(a.Length, a.Length - r, start);
lemma_rotmp(old(a[..]), r, start, k);

while v != start
invariant k == moved - start * tau(a.Length, a.Length - r);
invariant 0 <= v < a.Length;
invariant 0 < k <= tau(a.Length, a.Length - r);
invariant v == mp(a.Length, a.Length - r, start, k);
invariant v != start <==> k < tau(a.Length, a.Length - r);
invariant forall i, s :: 0 <= i < tau(a.Length, a.Length - r) && 
   start < s < gcd(a.Length, a.Length - r)
   ==> mp(a.Length, a.Length - r, s, i) !in done;
invariant forall i, s :: 
   0 <= i < tau(a.Length, a.Length - r) && 0 <= s < start
   ==> mp(a.Length, a.Length - r, s, i) in done;
invariant forall i :: 0 < i <= k 
   ==> mp(a.Length, a.Length - r, start, i) in done;
invariant forall i :: k < i <= tau(a.Length, a.Length - r) 
   ==> mp(a.Length, a.Length - r, start, i) !in done;
invariant displaced == old(a[..])[v];
invariant forall i :: 0 <= i < a.Length && i !in done 
   ==> a[i] == old(a[..])[i];
invariant forall i :: 0 <= i < a.Length && i in done 
   ==> a[i] == rot(old(a[..]), r)[i];
decreases tau(a.Length, a.Length - r) - k;
{
   k := k + 1;
   v := v + a.Length - r;
   if v >= a.Length { v := v - a.Length; }
   a[v], displaced := displaced, a[v];
   moved := moved + 1;
   done := done + { v };

   lemma_mp_disjoint_cycles(a.Length, a.Length - r, start, k);
   lemma_mp_complete_cycle(a.Length, a.Length - r, start);
   lemma_mp_incomplete_cycle(a.Length, a.Length - r, start);
   lemma_rotmp(old(a[..]), r, start, k);
}
\end{dafny}
\caption{Inner loop of the verified Dafny implementation of the rotation by modular visit algorithm of Figure~\ref{fig:rotation-by-modulo-algo}.}
\label{fig:dafny:rotation-modulo-inner}
\end{figure}

\subsubsection{Inner loop}
The inner loop is significantly more complex---not only because we have to repeat its body twice to replicate the semantics of a \emph{repeat}\ldots\emph{until} using Dafny's \<while> loops.
As we did in Boogie, it is useful to keep track of the current position in the cycle using a variable \<k>.
The value of \<k> is redundant since \<k == $\:$moved - start * tau> is a loop invariant; hence we make \<k> ghost (not part of the executable code).
Other loop invariants bound the value of \<k> and relate it to the exit condition \<v == $\:$start>, so that we know that: a) it is \<k \< tau> in every execution of the inner loop body; and b) it is \<k == $\:$tau> when the inner loop terminates.
The remaining loop invariant clauses define which indexes are in \<done> and, as in the outer loop, assert that a position has been rotated if and only if its index belongs to \<done>.
The loop variant \<decreases> clause completes the specification of the inner loop necessary to establish termination.

Dafny relies on calls to the fundamental lemmas in Figure~\ref{fig:dafny:modulo:axiom-like-lemmas} to prove both initiation and consecution of the inner loop's invariants.
\<lemma_mp_disjoint_cycles> upholds the ``non-interference'' of the current cycle with the other cycles: whatever has been established about cycles with a different starting position than the current one remains valid.
\<lemma_mp_incomplete_cycle> upholds progress in the current cycle: increasing \<k> and \<moved> by one, \<v> by \<a.Length - r>, and \<done> by adding \<v> to it extends the essential loop invariants to hold for the latest element assigned to position \<v> in \<a>.
In last iteration, \<lemma_mp_complete_cycle> kicks in to establish that the cycle is completed and the inner loop terminates.
Finally, \<lemma_rotmp> relates the indexes in \<done> to their positions in a rotation of \<a>, so that the last loop invariant clause can be proved.

\begin{figure}[!htb]
\begin{dafny}
lemma_lesseq(start, gcd(a.Length, a.Length - r), 
             tau(a.Length, a.Length - r));
lemma_inverse_tau_gcd(a.Length, a.Length - r, start);

forall (s | start <= s < gcd(a.Length, a.Length - r)) {
   lemma_mp_complete_cycle(a.Length, a.Length - r, s);
}

calc ==> {
   forall i, s :: 0 < i <= tau(a.Length, a.Length - r) && 
      start <= s < gcd(a.Length, a.Length - r) 
      ==> mp(a.Length, a.Length - r, s, i) !in done;
   { forall (s | start <= s < gcd(a.Length, a.Length - r)) {
      lemma_mp_complete_cycle(a.Length, a.Length - r, s);
   }}
   forall i, s :: 0 <= i < tau(a.Length, a.Length - r) && 
      start <= s < gcd(a.Length, a.Length - r) 
      ==> mp(a.Length, a.Length - r, s, i) !in done;
}
\end{dafny}
\caption{Concluding reasoning steps in the verified Dafny implementation of the rotation by modular visit algorithm of Figure~\ref{fig:rotation-by-modulo-algo}.}
\label{fig:dafny:rotation-modulo-closing}
\end{figure}

\subsubsection{Outer loop: closing} \label{sec:closing-outer-modulo}
Proving inductiveness of the outer loop's invariants requires a number of intermediate assertions, shown in Figure~\ref{fig:dafny:rotation-modulo-closing}.
The lemmas \<lesseq> and \<inverse_tau_gcd> are trivial properties, which we proved separately.
Then comes a \emph{forall statement}, which recalls \<lemma_mp_complete_cycle> for $s$ between \<start> and \<gcd(a.Length, a.Length - r)>.
In this case, the forall statement achieves an effect similar to Boogie's \lstinline[language=boogie]|call forall| in that it provides multiple instantiations of a parametric assertion.
Concretely, expresses the inner loop's essential invariant about what is in \<done> when the inner loop terminates, so that Dafny can conclude that the outer loop's essential invariant
\begin{center}
\begin{dafny}[numbers=none]
   invariant forall i, s :: 
      0 <= i < tau(a.Length, a.Length - r) && 0 <= s < start
      ==> mp(a.Length, a.Length - r, s, i) in done;
\end{dafny}
\end{center}
is inductive, as it now holds for the incremented value of \<start>.

Its counterpart is the invariant about what is \emph{not} in \<done>.
To prove its inductiveness, we have to bridge a sort of ``off by one'' difference: when the inner loop terminates, \<k> has reached \<tau(a.Length, a.Length - r)>, and the inner loop predicates about the interval $(\@0@..\@k@]$ (left-open and right-closed); but the outer loop's invariant uses an interval of the form $[\@0@..\@tau(a.Length, a.Length - r)@)$ (left-closed and right-open).
To bridge the gap, we use a \emph{calculational proof}~\cite{LeinoP13}, which guides Dafny to establish the implication between the first and the second representation using \<lemma_mp_complete_cycle> as justification.

%%% Local Variables: 
%%% mode: latex
%%% TeX-master: "rotation.tex"
%%% End: 

% @(#)$Id: jml-listings.tex,v 1.7 2010/02/25 21:57:06 leavens Exp $
%
% Copyright (C) 2006 Iowa State University
%
% This file is part of JML
%
% JML is free software; you can redistribute it and/or modify
% it under the terms of the GNU General Public License as published by
% the Free Software Foundation; either version 2, or (at your option)
% any later version.
%
% JML is distributed in the hope that it will be useful,
% but WITHOUT ANY WARRANTY; without even the implied warranty of
% MERCHANTABILITY or FITNESS FOR A PARTICULAR PURPOSE.  See the
% GNU General Public License for more details.
%
% You should have received a copy of the GNU General Public License
% along with JML; see the file COPYING.  If not, write to
% the Free Software Foundation, 675 Mass Ave, Cambridge, MA 02139, USA.
%
% A JML listings environment.
%
% AUTHOR: Gary T. Leavens
%
% requires listings i.e., do \usepackage{listings} first
%
% This file is set up to be used via \input{jml-listings}.
% Typically one would use \lstset{language=[JML]Java}
% to make the JML dialect of the Java language active, although there
% are other ways to do that in the listings package.
% If you want, you could make a version that is a style file,
% but then change \lstdefinelanguage to \lst@definelanguage below.
%
\lstdefinelanguage[JML]{Java}[]{Java}%
       {% C++ style comments have to start with a blank!
        comment=[l]{//\ },
        % And C-style comments must also start with a blank or star!
        morecomment=[s]{/*\ }{*/},        
        morecomment=[s]{/**}{*/},
        % sensitive=true, % inherited
        % Add JML keywords as level 1 keywords, so can typeset differently
        classoffset=1,
        % And here are all the wonderful JML keywords
        morekeywords={abrupt_behavior,abrupt_behaviour,
         accessible,accessible_redundantly,also,assert,assert_redundantly,
         assignable,assignable_redundantly,assume,assume_redundantly,
         axiom,behavior,behaviour,breaks,breaks_redundantly,
         callable,callable_redundantly,captures,captures_redundantly,
         choose,choose_if,code,code_bigint_math,code_java_math,
         code_safe_math,constraint,constraint_redundantly,constructor,
         continues,continues_redundantly,decreases,decreases_redundantly,
         decreasing,decreasing_redundantly,diverges,diverges_redundantly,
         duration,duration_redundantly,ensures,ensures_redundantly,
         example,exceptional_behavior,exceptional_behaviour,
         exceptional_example,exsures,exsures_redundantly,extract,field,
         forall,for_example,ghost,helper,hence_by,hence_by_redundantly,
         implies_that,in,in_redundantly,initializer,initially,instance,
         invariant,invariant_redundantly,loop_invariant,
         loop_invariant_redundantly,maintaining,maintaining_redundantly,
         maps,maps_redundantly,measured_by,measured_by_redundantly,method,
         model,model_program,modifiable,modifiable_redundantly,modifies,
         modifies_redundantly,monitored,monitors_for,non_null,
         normal_behavior,normal_behaviour,normal_example,nowarn,
         nullable,nullable_by_default,old,or,post,post_redundantly,
         pre,pre_redundantly,pure,readable,refine,refines,refining,represents,
         represents_redundantly,requires,requires_redundantly,
         returns,returns_redundantly,set,signals,signals_only,
         signals_only_redundantly,signals_redundantly,spec_bigint_math,
         spec_java_math,spec_protected,spec_public,spec_safe_math,
         static_initializer,uninitialized,unreachable,weakly,
         when,when_redundantly,working_space,working_space_redundantly,
         writable
        },
        % keywords from the universe type system
        morekeywords={rep,peer,readonly},
        % typeset everything that starts with a backslash as a keyword
        % BUG: this doesn't allow typesetting these keywords differently
        keywordsprefix=\\,
        otherkeywords={<:,<-,->,..,<==,==>,<==>,<=!=>},
        classoffset=0, % restore default class for keywords
        % CAF: added escaping
        escapeinside={(*}{*)},
        % CAF: added pretty-printing, adapted from Leino's Boogie style
          literate=%
          {:}{$\colon$}1
          {!}{$\lnot$}1
          {-}{$-$}1
          {+}{$+$}1
          {==}{$=$}1
          {!=}{$\neq$}1
          {&&}{$\land$}1
          {||}{$\lor$}1
          {<}{$<$}1
          {<=}{$\le$}1
          {>}{$>$}1
          {>=}{$\ge$}1
          {==>}{$\Longrightarrow$}3
          {<==}{$\Longleftarrow$}3
          {<==>}{$\Longleftrightarrow$}4
          {\\forall}{$\forall$}1
          {\\exists}{$\exists$}1
          {lambda}{$\lambda$}1,
  mathescape=true,
  numberstyle=\tiny,
  basicstyle=\ttfamily,
  commentstyle=\color[HTML]{F10000}\itshape,
  keywordstyle={\color[HTML]{222277}\bfseries},
  ndkeywordstyle={\color[HTML]{222277}\bfseries},
}
\lstnewenvironment{jml}[1][]{%
  \lstset{language={[JML]Java},firstnumber=last, numbers=left, 
     aboveskip=-3pt, belowskip=-3pt, 
     xleftmargin=8pt,xrightmargin=8pt,basicstyle=\normalsize\ttfamily,#1}}{}

\section{Rotation: mechanized proofs in \escj}
\label{sec:all-jml-proofs}
\lstset{language={[JML]Java}}

\escj~\cite{Burdy:2005:OJT:1070908.1070911,Kiniry:2006:SCW:1181195.1181200} is an auto-active verifier for Java programs annotated using JML (the Java Modeling Language~\cite{leavens2006}).
\escj's support of Java and JML is substantial but incomplete; in our programs, we avoid language features that may exercise unsound behavior of the verifier, and we work around other shortcomings of \escj and its back-end SMT solver Simplify~\cite{Simplify}.\footnote{Our experiments used \escj v.~3049 and Simplify v.~1.5.4.}
In particular, we ignore overflows of machine integers in our specification, since \escj does not check this feature.\footnote{We always call \escj with the \texttt{-LoopSafe} flag for the sound analysis of loops.}
Another feature that is not readily supported in \escj is reasoning about mathematical sequences or other kinds of maps (even though the idea of model-based annotations was introduced for JML~\cite{Leavens2005}); this suggests using somewhat lower-level specifications that refer directly to concrete Java items.

As usual, the solutions in this sections may not be the only ways to carry out proofs of the rotation algorithms using \escj, but they capture fundamental steps necessary to prove correctness and make a reasonable usage of \escj's capabilities.
The presentation in this section is terse compared to the previous sections of the paper: we focus on what is different or new with JML and \escj.

\subsection{Specifying rotation in JML}

Our JML specifications often refer to static functions declared as \<pure>, that is side-effect free.
Figure~\ref{fig:jml:seq-rot-def} shows definitions in this style for \<wrap> and \<rotp>.
The latter corresponds to the position a given $k$ maps to in a rotation; that is, using the notation introduced in Section~\ref{sec:modular-arithmetic}, the \lstinline!rotp($k$, $|\sq|$, $r$)!th element in $\rot{\sq}{r}$ is $\sq_k$.

\begin{figure}[!hbt]
\begin{lstlisting}
/*@
  @ requires 0 <= x && 0 < y;
  @
  @ ensures 0 <= \result && \result < y;
  @ ensures x < y ==> \result == x;
  @ ensures y <= x && x < 2*y ==> \result == x - y;
  @*/
// (*x*) $\wrap{}{}$ (*y*)
public static /*@ pure @*/ int wrap(int x, int y)
{ if (x < y) { return x; } else { return wrap(x - y, y); } }

/*@
  @ requires 0 <= r && r < N;
  @ requires 0 <= k && k < N;
  @
  @ ensures 0 <= \result && \result < N;
  @ ensures \result == wrap(k + N - r, N);
  @*/
// position (*k*) maps to in a rotation of an (*N-*)element sequence by (*r*)
public static /*@ pure @*/ int rotp(int k, int N, int r)
{ return wrap(k + N - r, N); }
\end{lstlisting}
\caption{JML definitions of \<wrap> and rotation mapping \<rotp>.}
\label{fig:jml:seq-rot-def}
\end{figure}

Based on these definitions, Figure~\ref{fig:jml:rotate-spec} shows the specification of a generic rotation routine (a static method member of some enclosing class we do not show for brevity).
This style of specification, which uses \<rotp> directly in element-wise fashion, dispenses with introducing elements that model mathematical sequences.

Note that, even if we defined a predicate \<is_rotation(int[] a, int[] b, r)> returning \<true> iff \<b> is a rotation of \<a> by \<r>, we could not use it in the specification of \<rotate>: \<\\old(a)> always is the same as \<a> since it refers to the value of \emph{reference} \<a> upon calling \<rotate>, whereas we would need to pass to \<is_rotation> the \emph{sequence} of values in \<a> at routine entry.\footnote{We omit \<modifies> (also called \<assignable>) clauses from rotation routines in JML, which corresponds to the default \<modifies \\everything>. A more precise framing specification would be necessary if the enclosing classes had static attributes, which could be modified by static routines.}
As an alternative specification style, we could define \<rotate> as a function returning a rotated array, and then relate \<a> and \<\\result> in its postcondition (similarly to what we did in Boogie); however, this would require to change the implementations artificially, thus defeating the purpose of having in-place algorithms.

\begin{figure}[!hbt]
\begin{lstlisting}
/*@
  @ requires a != null;
  @ requires 0 <= r && r < a.length;
  @
  @ ensures (\forall int i; 0 <= i && i < a.length 
  @          ==> a[rotp(i, a.length, r)] == \old(a[i]));
  @*/
public static void rotate(int[] a, int r)
\end{lstlisting}
\caption{Specification of rotation routines in JML.}
\label{fig:jml:rotate-spec}
\end{figure}

\begin{figure}[!hbt]
\begin{lstlisting}
/*@
  @ requires a != null;
  @ requires 0 <= r && r < a.length;
  @
  @ ensures (\forall int i; 0 <= i && i < a.length 
  @          ==> a[rotp(i, a.length, r)] == \old(a[i]));
  @*/
// Left-rotate (*a*) by (*r*) by copying
public static void rotate_copy(int[] a, int r)
{
   if (r == 0) return;

   int[] b = new int[a.length];
   int s  = 0, d = a.length - r;

   //@ loop_invariant 0 <= s && s <= a.length;
   //@ loop_invariant d == wrap(s + a.length - r, a.length);
   //@ loop_invariant (\forall int i; 
          0 <= i && i < a.length ==> a[i] == \old(a[i]));
   //@ loop_invariant (\forall int i; 0 <= i && i < s 
          ==> a[i] == b[wrap(i + a.length - r, a.length)]);
   while (s < a.length) {
       b[d] = a[s];
       s(*++*); d(*++*);
       // wrap over a's bounds
       if (d == a.length) d = 0;
   }
   // copy b's content back into a
   copy(b, a);
}
\end{lstlisting}
\caption{\escj-verified JML-annotated implementation of the rotation by copy algorithm of Figure~\ref{fig:rotation-by-copy-algo}.}
\label{fig:jml:rotate-copy}
\end{figure}

\afterpage{\clearpage}
\subsection{Rotation by copy} \label{sec:jml:copy}

Figure~\ref{fig:jml:rotate-copy} shows the JML implementation of rotation by copy, which \escj can verify.
The verified implementation of \<copy>---called by \<rotate\_copy>---is straightforward, and hence omitted.

\afterpage{\clearpage}
\subsection{Rotation by reversal} \label{sec:jml:reverse}

We first present definitions and lemmas about reversal (and its relation to rotation), and then the verified algorithms annotated using these definitions.

\begin{figure}[!hbt]
\begin{lstlisting}
/*@
  @ requires low <= k && k < high;
  @
  @ ensures low <= \result && \result < high;
  @ ensures \result == high + low - 1 - k;
  @*/
// Position (*k*) maps to in a reversal of index range (*[low..high)*)
public static /*@ pure @*/ int rp(int k, int low, int high)
{ return high + low - 1 - k; }
\end{lstlisting}
  \caption{JML definition of inverse index mapping \<rp> in a reversal.}
\label{fig:jml:rev:def}
\end{figure}

\begin{figure}[!hbt]
\begin{lstlisting}
/*@
  @ requires 0 <= x && x <= y && y <= z && z <= A.length;
  @
  @ ensures (\forall int i; 
       x <= i && i < y ==> rp(i, x, z) == rp(i, x, y) + z - y);
  @ ensures (\forall int i; 
       y <= i && i < z ==> rp(i, x, z) == rp(i, y, z) - y + x);
  @*/
public static /*@ pure @*/ void 
    lemma_rev_cat(int[] A, int x, int y, int z)
{  }

/*@
  @ requires 0 <= x && x <= z && z <= A.length;
  @
  @ ensures (\forall int i; 
       x <= i && i < z ==> rp(rp(i, x, z), x, z) == i);
  @*/
public static /*@ pure @*/ 
    void lemma_rev_rev(int[] A, int x, int z)
{  }
\end{lstlisting}
\caption{JML lemmas about rotation and reversal. 
Method \<lemma_rev_cat> expresses Lemma~\ref{lm:rev-cat}.
Method \<lemma_rev_rev> proves that \<rev> is idempotent.
}
\label{fig:jml:rev:lemmas}
\end{figure}

\subsubsection{Reversal: definition and lemmas}

Figure~\ref{fig:jml:rev:def} shows the JML definition of \<rp>, which corresponds to the mapping of indexes introduced by a reversal.
Based on it, Figure~\ref{fig:jml:rev:lemmas} shows the two usual lemmas: one relating rotation and reversal (Lemma~\ref{lm:rev-cat}), and one asserting that two reversals correspond to the identity mapping.
The peculiarity of the JML version of these lemmas is that they refer exclusively to indexes without referencing actual array content (an array \<A> is passed as argument but only for conveniently referring to its length).

Consider, for example, \<lemma\_rev\_cat>.
Concatenation is defined implicitly, by means of the two contiguous intervals \<[x..y)> and \<[y..z)> that make up the whole \<[x..z)>; the two postconditions of the lemma predicate over either interval.
Thus, the first postcondition asserts that the indexes in \<[x..y)> map to positions that correspond to the reversal of \<[x..y)> shifted to the right by \<z $\,-\,$ y> (that is, the length of the other interval \<[y..z)>).
This style is in line with the somewhat low-level specification approach we are using with \escj; we will use it in all other JML specifications.

\begin{figure}[!tb]
\begin{lstlisting}
/*@
  @ requires a != null;
  @ requires 0 <= low && low <= high && high <= a.length;
  @
  @ ensures (\forall int i; 
       low <= i && i < high ==> a[rp(i, low, high)] == \old(a[i]));
  @ ensures (\forall int i; 0 <= i && i < low ==> a[i] == \old(a[i]));
  @ ensures (\forall int i; 
       high <= i && i < a.length ==> a[i] == \old(a[i]));
  @*/
public static void reverse(int[] a, int low, int high)
{
    int p = low, q = high - 1;

    //@ loop_invariant low <= p && p <= q + 2 && q + 2 <= high + 1;
    //@ loop_invariant q == high + low - 1 - p;
    //@ loop_invariant (\forall int i; 
          low <= i && i < p ==> \old(a[i]) == a[rp(i, low, high)]);
    //@ loop_invariant (\forall int i; 
          q < i && i < high ==> \old(a[i]) == a[rp(i, low, high)]);
    //@ loop_invariant (\forall int i; 
          p <= i && i <= q ==> \old(a[i]) == a[i]);
    //@ loop_invariant (\forall int i; 
          0 <= i && i < low ==> \old(a[i]) == a[i]);
    //@ loop_invariant (\forall int i; 
          high <= i && i < a.length ==> \old(a[i]) == a[i]);
    while (p < q + 1)
    {
        // swap (*a[p]*) and (*a[q]*)
        int tmp = a[p]; a[p] = a[q]; a[q] = tmp;
        p(*++*); q(*-*)(*-*);
    }
}
\end{lstlisting}
\caption{\escj-verified JML implementation of the in-place reversal algorithm of Figure~\ref{fig:reversal-algo}.}
\label{fig:jml:rotate}
\end{figure}

\begin{figure}[!thb]
\begin{lstlisting}
/*@
  @ requires a != null;
  @ requires 0 <= r && r < a.length;
  @
  @ ensures (\forall int i; 0 <= i && i < a.length 
  @          ==> a[rotp(i, a.length, r)] == \old(a[i]));
  @*/
// Left-rotate (*a*) by (*r*) by performing three reversals.
public static void rotate_reverse(int[] a, int r)
{
    int[] b = new int[a.length];  // ghost

    reverse(a, 0, r);
    reverse(a, r, a.length);
    copy(b, a);  // ghost
    lemma_rev_cat(a, 0, r, a.length);
    reverse(a, 0, a.length);
    lemma_rev_rev(a, 0, r);
    lemma_rev_rev(a, r, a.length);
}
\end{lstlisting}
\caption{\escj-verified JML implementation of the rotation by reversal algorithm of Figure~\ref{fig:rotation-reverse-algo}.}
\label{fig:jml:rotation-reverse}
\end{figure}

\subsubsection{Mechanized proofs of reversal and rotation by reversal}

Figure~\ref{fig:jml:rotate} shows the in-place reversal algorithm \<reverse> verified by \escj.
It is quite similar to the Dafny solution, the only noticeable difference being that swapping two elements in \<a> requires introducing a temporary variable, as Java has no parallel assignment.

Figure~\ref{fig:jml:rotation-reverse} shows the rotation by reversal algorithm verified by \escj.
The proof goes through also thanks to a pseudo-ghost variable \<b>, storing a copy of \<a>'s content after the two partial reversals before the final complete reversal.
Even if \<b> is not referenced directly in the lemma invocations or elsewhere in the annotations, its presence is still necessary to trigger the correct instantiations that drive towards a successful proof.
While JML supports some form of ghost annotations, we found it more convenient to simply use regular Java code marked with comments (JML ghost variables have some restrictions in how they can be manipulated), with the understanding that such code should be stripped in an actual compiled version meant for production usage.

\begin{figure}[!thb]
\begin{adjustwidth}{0mm}{-5mm}
\begin{lstlisting}
/*@
  @ requires low <= p && p < high;
  @ requires low <= k && k < high;
  @
  @ ensures low <= \result && \result < high;
  @ ensures \result == 
      low + wrap(k - low + (high - low) - (p - low), high - low);
  @ ensures \result == low + rotp(k - low, high - low, p - low);
  @*/
// position (*k*) maps to in a rotation of index range (*[low..high)*) at (*p*)
public static /*@ pure @*/ int rlh(int k, int low, int high, int p)
{ return low + wrap(k + (high - low) - p, high - low); }
\end{lstlisting}
\end{adjustwidth}
\caption{JML definitions of partial rotation mapping \<rlh>.}
\label{fig:jml:rlh-def}
\end{figure}

\afterpage{\clearpage}
\subsection{Rotation by swapping} \label{sec:jml:swap}

Rotation by swapping operates on array slices that shrink as the rotation progresses.
Hence, we need a variant of function \<rotp> (defining the mapping induced by a rotation) that refers to arbitrary sub-ranges of the overall range \<[0..N)>.
Figure~\ref{fig:jml:rlh-def} shows such function: it is called \<rlh> and reduces to \<rotp> for \<low $=$ 0> and \<high $=$ N>.

\begin{figure}[!htb]
\begin{adjustwidth}{0mm}{-2mm}
\begin{lstlisting}
/*@
  @ requires a != null;
  @ requires 0 <= low && low <= low + d && low + d <= high - d 
             && high - d <= high && high <= a.length;
  @
  @ ensures (\forall int i; low <= i && i < low + d 
             ==> a[i] == \old(a[high - d + i - low]));
  @ ensures (\forall int i; low + d <= i && i < high - d 
             ==> a[i] == \old(a[i]));  
  @ ensures (\forall int i; high - d <= i && i < high 
             ==> a[i] == \old(a[low + i - (high - d)]));
  @ ensures (\forall int i; 0 <= i && i < low ==> a[i] == \old(a[i]));
  @ ensures (\forall int i; 
       high <= i && i < a.length ==> a[i] == \old(a[i]));
  @*/
public static void swap_sections(int[] a, int low, int high, int d)
{
    int x = low, z = high - d;

    //@ loop_invariant low <= x && x <= low + d;
    //@ loop_invariant high - d <= z && z <= high;
    //@ loop_invariant x - low == z - (high - d);
    //@ loop_invariant (\forall int i; low <= i && i < x 
           ==> a[i] == \old(a[high - d + i - low]));
    //@ loop_invariant (\forall int i; x <= i && i < high - d 
           ==> a[i] == \old(a[i]));
    //@ loop_invariant (\forall int i; high - d <= i && i < z 
           ==> a[i] == \old(a[low + i - (high - d)]));
    //@ loop_invariant (\forall int i; z <= i && i < high 
           ==> a[i] == \old(a[i]));
    //@ loop_invariant (\forall int i; 0 <= i && i < low 
           ==> a[i] == \old(a[i]));
    //@ loop_invariant (\forall int i; high <= i && i < a.length 
           ==> a[i] == \old(a[i]));
    while (x < low + d)
    {
      // swap (*a[x]*) and (*a[z]*)
      int tmp = a[x]; a[x] = a[z]; a[z] = tmp;
      x(*++*); z(*++*);
    }
}
\end{lstlisting}
\end{adjustwidth}
\caption{\escj-verified JML implementation of the in-place slice swapping algorithm of Figure~\ref{fig:swap-algo}.}
\label{fig:jml:swap-algo}
\end{figure}

\subsubsection{Mechanized proof of swapping sections}

The verified JML implementation of in-place swapping is shown in Figure~\ref{fig:jml:swap-algo}.
It deserves no special comments as it follows closely the other versions, such as the Dafny one in Figure~\ref{fig:dafny:swap-algo}.

\begin{figure}[!hbt]
\begin{lstlisting}
/*@
  @ requires 0 <= low && low < p && p < high && high <= A.length;
  @ requires p - low < high - p;
  @ // $\xq$ in place:
  @ ensures (\forall int i; low <= i && i < p 
             ==> rlh(i, low, high, p) == i + (high - p));
  @ // $\zq$ after sub-rotation:
  @ ensures (\forall int i; high - (p - low) <= i && i < high 
             ==> rlh(i, low, high, p) == 
                 rlh(i - (high - p), low, high - (p - low), p));
  @ // $\yq$ after sub-rotation:
  @ ensures (\forall int i; p <= i && i < high - (p - low) 
             ==> rlh(i, low, high, p) == 
                 rlh(i, low, high - (p - low), p));
  @*/
public static /*@ pure @*/ 
    boolean lemma_left(int[] A, int low, int p, int high)
{ }

/*@
  @ requires 0 <= low && low < p && p < high && high <= A.length;
  @ requires p - low > high - p;
  @ // $\zq$ in place:
  @ ensures (\forall int i; p <= i && i < high 
             ==> rlh(i, low, high, p) == i - (p - low));
  @ // $\xq$ after sub-rotation:
  @ ensures (\forall int i; low <= i && i < low + (high - p) 
             ==> rlh(i, low, high, p) == 
                 rlh(i + (p - low), low + (high - p), high, p));
  @ // $\yq$ after sub-rotation:
  @ ensures (\forall int i; low + (high - p) <= i && i < p 
             ==> rlh(i, low, high, p) == 
                 rlh(i, low + (high - p), high, p));
  @*/
public static /*@ pure @*/ 
    boolean lemma_right(int[] A, int low, int p, int high)
{ }
\end{lstlisting}
\caption{Lemma~\ref{lm:rot-swap} for cases \emph{left is smaller} \eqref{eq:lemma-rot-swap-left} and \emph{right is smaller} \eqref{eq:lemma-rot-swap-right} in JML.}
\label{fig:jml:lemma-rot-left-right-smaller}
\end{figure}

\subsubsection{Lemmas about swapping}  \label{sec:jml:lemma-procedures-swapping}

Figure~\ref{fig:jml:lemma-rot-left-right-smaller} shows the JML version of the by-now familiar Lemma~\ref{lm:rot-swap}, which relates swapping and rotating.
The level of abstraction is consistent with the choice of describing rotations---in this case, partial rotations---by means of their index mappings.
Thus, for example, the first \<ensures> of \<lemma\_left> in Figure~\ref{fig:jml:lemma-rot-left-right-smaller} asserts that the elements of the leftmost slice \<a[low..p)> (corresponding to $\xq$ in Lemma~\ref{lm:rot-swap}) end up as the rightmost slice in a rotation, that is their indexes are shifted by \<high $\,-\,$ p>.
The two other \<ensures> refer to \<rlh> on both sides of the equality to express the combined effect of swapping and recursively partially rotating.
\escj proves these lemmas without help (they reduce to linear arithmetic constraints).

\afterpage{\clearpage}
\subsubsection{Mechanized proof of rotation by swapping: recursive version}
\label{sec:swap:jml:proof:recursive}

Figures~\ref{fig:jml:rotation-swap-helper} and \ref{fig:jml:rotation-swap} show the verified JML implementation of the rotation by swapping algorithm.
The structure of the code and the usage of the lemma methods are the usual ones.

\begin{figure}[!htb]
\begin{lstlisting}
/*@
  @ requires a != null;
  @ requires 0 <= low && low <= p && p < high && high <= a.length;
  @
  @ ensures (\forall int i; low <= i && i < high 
             ==> a[rlh(i, low, high, p)] == \old(a[i]));
  @*/
public static void rotate_swap_helper(int[] a, int low, int p, int high)
{
    if (low < p && p < high) {
        if (p - low == high - p) {
            swap_sections(a, low, high, p - low);
        } 
        else if (p - low < high - p) {
            swap_sections(a, low, high, p - low);
            rotate_swap_helper(a, low, p, high - (p - low));
            lemma_left(a, low, p, high);
        }
        else if (p - low > high - p) {
            swap_sections(a, low, high, high - p);
            rotate_swap_helper(a, low + (high - p), p, high);
            lemma_right(a, low, p, high);
        }
    }
}
\end{lstlisting}
\caption{\escj-verified JML implementation of the helper method of the rotation by swapping recursive algorithm of Figure~\ref{fig:rotation-swap-algo}.}
\label{fig:jml:rotation-swap-helper}
\end{figure}

\begin{figure}[!htb]
\begin{lstlisting}
/*@
  @ requires a != null;
  @ requires 0 <= r && r < a.length;
  @
  @ ensures (\forall int i; 0 <= i && i < a.length 
  @          ==> a[rotp(i, a.length, r)] == \old(a[i]));
  @*/
// Left-rotate (*a*) by (*r*) by swapping equal segments 
public static void rotate_swap(int[] a, int r)
{
  rotate_swap_helper(a, 0, r, a.length);
  //@ assert (\forall int i; 0 <= i && i < a.length 
        ==> a[rlh(i, 0, a.length, r)] == \old(a[i]));
  //@ assert (\forall int i; 0 <= i && i < a.length 
        ==> rlh(i, 0, a.length, r) == rotp(i, a.length, r));
  //@ assert (\forall int i; 0 <= i && i < a.length 
        ==> a[rlh(i, 0, a.length, r)] == a[rotp(i, a.length, r)]);
  //@ assert 
        (\forall int i; 0 <= i && i < a.length 
         ==> a[rlh(i, 0, a.length, r)] == a[rotp(i, a.length, r)]) 
      &&
        (\forall int i; 0 <= i && i < a.length 
         ==> a[rlh(i, 0, a.length, r)] == \old(a[i]))
      ==>
      (\forall int i; 0 <= i && i < a.length 
       ==> a[rotp(i, a.length, r)] == \old(a[i]));
  // ESC/Java2 cannot apply modus ponens to the previous implication
  //@ assume (\forall int i; 0 <= i && i < a.length 
              ==> a[rotp(i, a.length, r)] == \old(a[i]));
}
\end{lstlisting}
\caption{\escj-verified JML implementation of the rotation by swapping recursive algorithm of Figure~\ref{fig:rotation-swap-algo}.}
\label{fig:jml:rotation-swap}
\end{figure}

Compared to the Dafny implementation (Figures~\ref{fig:dafny:rotation-swap-helper} and \ref{fig:dafny:rotation-swap}), \escj requires no additional asserts (besides those implicit in calling the lemmas) to verify each branch; we attribute this to the lighter specification style we're using with JML, where there is no notion of sequence and all reasoning is done on indexes.
Another difference with respect to the Dafny solution is that the invocations of lemma methods \<left> and \<right> refer to \<a> rather than \<\\old(a)>, which seems incongruous  at first since the lemmas should be able to refer to the sequence before the latest swap.
Note, however, that the formal array argument \<A> in \<lemma\_left> and \<lemma\_right> is immaterial, as it is only used to refer to the length \<A.length>; obviously, \<a.length = \\old(a.length)>, and hence the lemma method invocations are correctly applicable in their context.

Concluding the proof in the wrapper function \<rotate\_swap> is surprisingly cumbersome.
As apparent by the sequence of \<assert>s in Figure~\ref{fig:jml:rotation-swap}, \escj can establish all the basic facts but ultimately fails to connect \<rotp(i, a.length, r)> and \<rlh(i, 0, a.length, r)>, even if the latter reduces to the former.
Precisely, even if it proves that one implies the other, it cannot apply modus ponens to deduce the consequent from the established antecedent.
This is clearly only an idiosyncrasy of the prover, and hence we safely assume the last, straightforward step.

\begin{figure}[!thb]
\begin{adjustwidth}{0mm}{-1mm}
\begin{lstlisting}
/*@
  @ requires low <= p && p < high;
  @ requires low <= k && k < high;
  @
  @ ensures low <= \result && \result < high;
  @ ensures \result == low + wrap((k - low) + (p - low), high - low);
  @*/
// position that maps to (*k*) in a rotation of 
// index range (*[low..high)*) at (*p*)
public static /*@ pure @*/ int llh(int k, int low, int high, int p)
{ return low + wrap((k - low) + (p - low), high - low); }
\end{lstlisting}
\end{adjustwidth}
\caption{JML definitions of partial rotation inverse mapping \<llh>.}
\label{fig:jml:llh-def}
\end{figure}

\begin{figure}[!phtb]
\begin{lstlisting}
/*@
  @ requires a != null;
  @ requires 0 <= r && r < a.length;
  @
  @ ensures (\forall int i; 0 <= i && i < a.length 
  @          ==> a[rotp(i, a.length, r)] == \old(a[i]));
  @*/
public static void rotate_swap_iterative(int[] a, int r)
{

    if (r == 0) return;

    int low, high, p;
    low = 0; p = r; high = a.length;

    //@ loop_invariant 0 <= low && low <= p && 
                       p <= high && high <= a.length;
    //@ loop_invariant low == p <==> p == high;
    //@ loop_invariant (\forall int i; 0 <= i && i < low 
          ==> a[i] == \old(a[llh(i, 0, a.length, p)]));
    //@ loop_invariant (\forall int i; high <= i && i < a.length 
          ==> a[i] == \old(a[llh(i, 0, a.length, p)]));
    //@ loop_invariant (\forall int i; low <= i && i < high 
          ==> a[i] == \old(a[i]));
    //@ loop_invariant p - low < high - low ==> (\forall int i; 
          low <= i && i < high ==> \old(a[llh(i, 0, a.length, p)]) 
                                  == a[llh(i, low, high, p)]);
    while (low < p && p < high)
    {
        if (p - low == high - p) {
            swap_sections(a, low, high, p - low);
            low = low + (p - low);
            high = high - (high - p);
        }
        else if (p - low < high - p) {
            swap_sections(a, low, high, p - low);
            high = high - (p - low);
        }
        else if (p - low > high - p) {
            swap_sections(a, low, high, high - p);
            low = low + (high - p);
        }
    }
}
\end{lstlisting}
  \caption{\escj-verified JML implementation of the rotation by swapping iterative algorithm of Figure~\ref{fig:rotation-swap-algo-iter}.}
\label{fig:jml:rotation-swap-iter}
\end{figure}

\subsubsection{Mechanized proof of rotation by swapping: iterative version}
\label{sec:swap:jml:proof:iterative}

Figure~\ref{fig:jml:rotation-swap-iter} shows the verified JML implementation of the iterative version of rotation by swapping.
The annotations use a new function \<llh(i, low, high, p)>, which denotes the index that maps to \<i> in a rotation of index range \<[low..high)> at \<p>.
Therefore, \<llh> and \<rlh> are each other's inverse: \<llh(rlh(k, l, h, p), l, h, p) = k> and \<rlh(llh(k, l, h, p), l, h, p) = k>.
We introduce \<llh> because the ``natural'' form of loop invariants for iterative rotation by swapping is one that relates contiguous slices of \<a> to other slices of \<a> in its original state through \<\\old>.
For example, to capture the fact that \<a[0..low)> is rotated, the third loop invariant clause in Figure~\ref{fig:jml:rotation-swap-iter} asserts that each element at index \<0 $\leq$ i $<$ low> equals \<\\old(a[llh(i, 0, a.length, p)])>, that is exactly the element that should end up at index \<i> in such a rotation.
In contrast, using \<rlh> would refer to the slice \<a[0..low)> only indirectly.

Using \<llh> also simplifies reasoning: \escj verifies \<rotate\_swap\_iterative> without need of variants of the lemmas ``left is smaller'' and ``right is smaller''.
In comparing the JML solution to, say, Dafny's, we should remember that eschewing an explicit representation of sequences and rotated sequences, as well as other checks that Dafny performs but \escj bypasses such as for framing, is also likely crucial in supporting a proof with lightweight annotations.

\begin{figure}[!htb]
\begin{lstlisting}
/*@ requires 0 < x && 0 < y;
  @ ensures 0 < \result && \result <= x && \result <= y; */
public static /*@ pure @*/ int gcd(int x, int y)
{
    if (x == y) return x;
    if (x > y) return gcd(x - y, y); 
    else return gcd(x, y - x);
}

/*@ requires 0 < x && 0 < y;
  @ ensures gcd(x, y) * \result == x;
  @ ensures 0 < \result && \result <= x; */
public static /*@ pure @*/ int tau(int x, int y)
{
    //@ assume gcd(x, y) * (x / gcd(x, y)) == x;
    //@ assert 0 < x / gcd(x, y);
    return x / gcd(x, y);
}

/*@ requires 0 < M && M < N;
  @ requires 0 <= S && S < N;
  @ requires 0 <= p;
  @ ensures 0 <= \result && \result < N;
  @ ensures p == 0 ==> \result == S;
  @ ensures p != 0 && mp(N, M, S, p - 1) + M < N 
            ==> \result == mp(N, M, S, p - 1) + M;
  @ ensures p != 0 && mp(N, M, S, p - 1) + M >= N 
            ==> \result == mp(N, M, S, p - 1) + M - N; */
public static /*@ pure @*/ int mp(int N, int M, int S, int p)
{
    if (p == 0) return S;
    return wrap(mp(N, M, S, p - 1) + M, N);
}
\end{lstlisting}
\caption{JML function definitions of \<gcd>, \<tau>, and \<mp>.}
\label{fig:jml:mp-gcd-ic-def}
\end{figure}

\afterpage{\clearpage}
\subsection{Rotation by modular visit} \label{sec:jml:modulo}

Figure~\ref{fig:jml:mp-gcd-ic-def} shows the functions used to specify the rotation by modular visit algorithm: \<gcd>, \<tau>, and \<mp>.
We express their fundamental properties as postconditions; for simplicity, we \<assume> arithmetic facts that \escj has difficulties establishing.
Similarly, we do not prove the lemmas about properties of cycles (in Figure~\ref{fig:jml:modulo:axiom-like-lemmas}) but use them as available facts whenever necessary.

\begin{figure}[!htb]
\begin{lstlisting}
/*@ requires 0 < M && M < N;
  @ requires 0 <= S && S < gcd(N, M);
  @ requires 0 <= K;
  @ ensures (\forall int t, q; 
             0 <= q && q < tau(N, M) && S < t && t < gcd(N, M)
             ==> mp(N, M, S, K) != mp(N, M, t, q)); */
public static /*@ pure @*/ 
    boolean lemma_mp_disjoint_cycles(int N, int M, int S, int K)

/*@ requires 0 < M && M < N;
  @ requires 0 <= S && S < gcd(N, M);
  @ ensures mp(N, M, S, 0) == mp(N, M, S, tau(N, M)); */
public static /*@ pure @*/ 
    boolean lemma_mp_complete_cycle(int N, int M, int S)

/*@ requires 0 < M && M < N;
  @ requires 0 <= S && S < gcd(N, M);
  @ ensures (\forall int p, q; 
             0 <= p && p < q && q < tau(N, M) 
             ==> mp(N, M, S, p) != mp(N, M, S, q)); */
public static /*@ pure @*/ 
    boolean lemma_mp_incomplete_cycle(int N, int M, int S)

/*@ requires 0 < r && r < S.length;
  @ requires (\forall int i, s; 
              0 <= i && i < tau(S.length, S.length - r) && 
              0 <= s && s < gcd(S.length, S.length - r)
              ==> S[mp(S.length, S.length - r, s, i)]);
  @ ensures (\forall int i; 0 <= i && i < S.length ==> S[i]); */
public static /*@ pure @*/ boolean lemma_onto(boolean[] S, int r)
\end{lstlisting}
\caption{Fundamental properties of cycles, formalized in JML.}
\label{fig:jml:modulo:axiom-like-lemmas}
\end{figure}

\begin{figure}[!htb]
\begin{lstlisting}
/*@
  @ requires 0 <= r && r < N;
  @ requires 0 <= k && k < N;
  @
  @ ensures 0 <= \result && \result < N;
  @ ensures \result == wrap(k + r, N);
  @*/
// position that maps to (*k*) in a rotation of an (*N-*)element sequence by (*r*)
public static /*@ pure @*/ int ptor(int k, int N, int r)
{ return wrap(k + r, N); }
\end{lstlisting}
\caption{JML definition of inverse rotation mapping \<ptor>.}
\label{fig:jml:modulo:ptor}
\end{figure}

Before describing the JML-annotated rotation by modular visit implementation, we introduce one more function for specification.
Shown in Figure~\ref{fig:jml:modulo:ptor}, \<ptor> is the inverse of \<rotp>, in the same way \<llh> is the inverse of \<rlh>.
In fact, we could define \<ptor> in terms of \<llh>, but since we do not need to deal with sub-ranges of \<[0..a.length)> in the annotations of rotation by modular visit, we only introduce the simpler \<ptor>.

\begin{figure}[!htb]
\begin{lstlisting}
/*@
  @ requires a != null;
  @ requires 0 <= r && r < a.length;
  @
  @ ensures (\forall int i; 0 <= i && i < a.length 
  @          ==> a[rotp(i, a.length, r)] == \old(a[i]));
  @*/
// Left-rotate (*a*) by (*r*) by modular visit of its elements
public static void rotate_modulo(int[] a, int r)
{
    if (r == 0) return;

    int olda[] = new int[a.length];
    // Copy (*a*)'s content into (*olda*) (omitted)
        
    int start = 0;
    boolean set[] = new boolean[a.length]; // ghost
    int k;

    //@ loop_invariant 
           0 <= start && start <= gcd(a.length, a.length - r);
    //@ loop_invariant (\forall int i; 
           0 <= i && i < a.length && ! set[i] ==> a[i] == olda[i]);
    //@ loop_invariant (\forall int i, s; 
           0 <= i && i < tau(a.length, a.length - r) &&
           start <= s && s < gcd(a.length, a.length - r)
           ==> ! set[mp(a.length, a.length - r, s, i)]);
    //@ loop_invariant (\forall int i, s; 
           0 <= i && i < tau(a.length, a.length - r) &&
           0 <= s && s < start
           ==> set[mp(a.length, a.length - r, s, i)]);
    //@ loop_invariant (\forall int i; 0 <= i && i < a.length && set[i] 
           ==> a[i] == olda[ptor(i, a.length, r)]); 
    while (start < gcd(a.length, a.length - r))
    {
        int displaced = a[start], v = start;
        k = 0;

        // Inner loop here: see (*Figure~\ref{fig:jml:rotation-modulo-inner}*)

        start = start + 1;
    }
    lemma_onto(set, r);
    // Connecting (*olda*) back to (*\textbackslash{}\textbf{old}(a)*)
    //@ assume (\forall int i; 0 <= i && i < a.length 
      ==> olda[ptor(i, a.length, r)] == \old(a[ptor(i, a.length, r)]));
}
\end{lstlisting}
\caption{\escj-verified JML implementation of the rotation by modular visit algorithm of Figure~\ref{fig:rotation-by-modulo-algo}.}
\label{fig:jml:rotation-modulo-outer}
\end{figure}

Finally, Figures~\ref{fig:jml:rotation-modulo-outer} and~\ref{fig:jml:rotation-modulo-inner} show the rotation by modular visit algorithm that \escj verifies.
As usual in this section, we focus on the aspects that differ from the other solutions and reflect peculiarities of \escj, Simplify, or the specification style we adopt.

The first observation is that the JML executable implementation now relies on elements that were declared as ghost in Dafny.
The outer loop's exit condition involves \<gcd>---a function we used normally only in specifications---instead of testing for \<moved $\,\neq\,$ a.length>.
Similarly, the inner loop's exit condition involves \<k>---a ghost variable in other implementations of the algorithm---instead of \<v>.
We found both changes, which do not affect the semantics of the program as they are equivalent to the concrete ones, significantly help \escj to reason about the loops.
In contrast, using \<moved> and \<v> made reasoning ineffective even about seemingly straightforward implications.
Given this choice of exit conditions, we can also get rid of \<moved> altogether, which saves us from expressing invariants involving nonlinear arithmetic, such as \<moved = start * tau>.
Changing the exit condition of the inner loop does not warrant getting rid of any variable, but it makes it straightforward to prove the invariance of the upper bound on \<k> (that is, \<k $\,\leq\,$ tau>); in contrast, \escj cannot derive the same bound on \<k> from the relation between \<v> and \<k> through \<mp>.

\begin{figure}[!htb]
\begin{lstlisting}
//@ loop_invariant 0 <= v && v < a.length;
//@ loop_invariant 0 <= k && k <= tau(a.length, a.length - r);
//@ loop_invariant v == mp(a.length, a.length - r, start, k);
//@ loop_invariant 0 < k && v != start 
                   <==> 0 < k && k < tau(a.length, a.length - r);
//@ loop_invariant (\forall int i; 0 <= i && i < a.length && ! set[i] 
                    ==> a[i] == olda[i]);
//@ loop_invariant displaced == olda[v];
//@ loop_invariant (\forall int i; 
       k < i && i < tau(a.length, a.length - r)
       ==> ! set[mp(a.length, a.length - r, start, i)]);
//@ loop_invariant (\forall int i; 0 < i && i <= k
       ==> set[mp(a.length, a.length - r, start, i)]);
//@ loop_invariant (\forall int i, s; 
       0 <= i && i < tau(a.length, a.length - r) &&
       start < s && s < gcd(a.length, a.length - r)
       ==> ! set[mp(a.length, a.length - r, s, i)]);
//@ loop_invariant (\forall int i, s; 
       0 <= i && i < tau(a.length, a.length - r) &&
       0 <= s && s < start
       ==> set[mp(a.length, a.length - r, s, i)]);
//@ loop_invariant (\forall int i; 0 <= i && i < a.length && set[i] 
                    ==> a[i] == olda[ptor(i, a.length, r)]);
do
{
    k = k + 1;
    v = v + a.length - r;
    if (v >= a.length) { v = v - a.length; }
    int tmp = a[v]; a[v] = displaced; displaced = tmp;
    set[v] = true; // ghost
    lemma_mp_complete_cycle(a.length, a.length - r, start);
    lemma_mp_incomplete_cycle(a.length, a.length- r, start);
    lemma_mp_disjoint_cycles(a.length, a.length - r, start, k);
} while (k < tau(a.length, a.length - r));
\end{lstlisting}
\caption{Inner loop of the verified JML implementation of the rotation by modular visit algorithm of Figure~\ref{fig:rotation-by-modulo-algo}.}
\label{fig:jml:rotation-modulo-inner}
\end{figure}

\afterpage{\clearpage}
For framing, we use a Boolean array \<set> as we did in Boogie.\footnote{Given the modifications to the exit conditions, \<set> is the only genuinely \emph{ghost} variable in the JML implementation.}
We notice, however, that \escj has difficulties reasoning about the invariants that relate \<set> and \<a>'s content in relation to \<\\old(a)>'s.
As a simple solution, we introduce a pseudo-ghost variable \<olda>, which stores the content of \<a> upon method entry.
With this trick, \escj can prove the loop invariants involving \<set>, \<a>, and \<olda> that express the fundamental progress in rotating.
The drawback of this trick is that it fails to eventually relate \<olda> to \<\\old(a)>'s content, even if we add invariants asserting that \<olda> does not change after it is initialized.
Since \<olda> is local, it cannot appear in the postcondition either; hence we just \<assume> the final, trivial step in the correctness proof.

The inner loop of Figure~\ref{fig:jml:rotation-modulo-inner} uses Java's \<do$\ldots$while> loop.
It turns out that \escj requires invariants of \<do$\ldots$while> loops to hold also \emph{before} the first unconditional iteration.
To exemplify, consider the loop:
\begin{center}
\begin{lstlisting}[numbers=none]
int x = 0
//@ loop_invariant 0 <= x;
do {
   x = x + 1;
} while (x < 10);
\end{lstlisting}
\end{center}
Even though \<0 $\,<\,$ x> holds after the first unconditional iteration, \escj only accepts the weaker \<0 $\,\leq\,$ x> as invariant.
This explains the lower bounds of some of the invariants in Figure~\ref{fig:jml:rotation-modulo-inner}.

%%% Local Variables: 
%%% mode: latex
%%% TeX-master: "rotation.tex"
%%% End: 

\section{Discussion}   \label{sec:discussion}

We have seen that the four main algorithms require increasingly more complex correctness proofs---both on paper and, even more so, when mechanizing them.
This verification complexity does have relevant practical implications: the more complex the correctness argument, the more complex it is to get an implementation correct or to modify an existing implementation (for example to work on a different kind of data structure) without introducing subtle errors.
In my own experience of implementing the algorithms in Java, getting rotation by reversal right is straightforward, rotation by swapping requires more attention mostly to corner cases, rotation by modular visit can be tricky to implement correctly at the first attempt.
The likelihood of introducing errors is yet another feature to be traded off against others such as performance when choosing which algorithm to implement.

Auto-active verifiers such as Boogie generate verification conditions in multiple steps, often involving heuristics that may be hard to express and opaque to users.
We have seen several cases in the mechanized proofs, especially of the more complex algorithms, were changing seemingly irrelevant details of the input (such as the order or names of declarations) transformed a successful proof into a very slow or even nonterminating one.
Similarly but in a different dimension, changes in how Boogie (or the underlying Z3 SMT solver) encodes verification conditions and simplifies programs may break previously successful proofs, for example because the new heuristics to generate triggers do not work with the same assertions.\footnote{In fact, some of the Boogie examples in Section~\ref{sec:all-boogie-proofs} do not seem to work with more recent (with respect to when this report was first written) versions of Boogie (v.~2.2.3, October 2014) and Z3 (v.~4.3.2).}
The best defense against such brittleness is to put great care into modularizing proofs into files and procedures that are as separate as possible: provided each individual input to the verifier is sufficiently small (or large but structurally simple) the sensitivity on low-level details is kept at bay and does not interfere with the high-level goals of the prover.

By providing a higher level of abstraction and idiomatic specification features (ghost state, immutable sets and sequences, and so on), Dafny supports proofs much closer to those done on paper.
It also provides a more stable tool, where changing little details in how assertions and instructions and formulated normally does not have significant impact on the prover's behavior.
Nonetheless, some brittleness remains depending on how variables are named, or on the order in which variables and instructions  are introduced.
Deciding which definitions to expose and which to factor out and prove separately is also crucial for good performance as soon as one tackles nontrivial programs and specifications.
Thanks to its low-level nature, the preliminary experience with Boogie has been extremely useful to outline the criticalities of verifying the rotation algorithms, and to guide an efficient construction of verified implementations in Dafny or other similar provers.

\bibliographystyle{plain}
\bibliography{rotation}

\end{document}